\documentclass[12pt]{article}

\usepackage{amssymb,amsmath,amsthm,amscd,mathrsfs,amsbsy,amsfonts}
\usepackage{pstricks,pst-node}
\usepackage{amsbsy,young,colortbl,cite}

\usepackage{graphicx}

\usepackage{amsmath}

\usepackage{amssymb,amsthm}
\usepackage[english]{babel}
\textheight 23truecm
\textwidth 17truecm
\setlength{\topmargin}{-0.5 cm}
\setlength{\oddsidemargin}{-0.5 cm}
\setlength{\evensidemargin}{-0.5cm}

\pagestyle{plain}

\renewcommand{\d}{\partial}

\newtheorem{proposition}{Proposition}
\newtheorem{lemma}{Lemma}
\newtheorem{definition}{Definition}
\newtheorem{theorem}{Theorem}

\makeatletter      
\@addtoreset{equation}{section}

\newcommand{\gl}{M_N(\C)}

\newcommand{\I}{\mathbb{I}}
\newcommand{\bS}{\mathbb{S}}
\renewcommand{\d}{\mathrm{d}}

\newcommand{\Exp}[1]{\operatorname{e}^{#1}}

\newcommand{\g}{\mathfrak{g}}

\newcommand{\Cc}{\mathcal{C}}

\renewcommand{\L}{\mathcal L}

\newcommand{\W}{\mathcal W}

\newcommand{\R}{\mathbb R}
\newcommand{\Z}{\mathbb Z}
\newcommand{\C}{\mathbb C}
\newcommand{\N}{\mathbb N}
\def\res{\mathop{\rm Res}\nolimits}

\renewcommand{\L}{\mathcal{L}}

\renewcommand{\t}{\mathbf{t}}

\makeatletter
    
    \newcommand{\Rmnum}[1]{\expandafter\@slowromancap\romannumeral #1@}
  \makeatother
\def\res{\mathop{\rm Res}\nolimits}
\def\({\left(}
\def\){\right)}
\def\[{\begin{eqnarray}}
\def\]{\end{eqnarray}}
\def\d{\partial}
\def\ga{\alpha}

\newcommand{\La}{\Lambda}
\begin{document}

\title{On the extended multi-component  Toda hierarchy}

\author{
Chuanzhong Li\dag,\  \ Jingsong He\ddag\footnote{corresponding author}\\
 Department of Mathematics,  Ningbo University, Ningbo, 315211, China\\
\dag lichuanzhong@nbu.edu.cn\\
\ddag hejingsong@nbu.edu.cn }

\date{}

\maketitle

\abstract{
The extended flow equations of the multi-component  Toda hierarchy are constructed. We give the Hirota bilinear equations and tau function of this new extended multi-component  Toda hierarchy(EMTH). Because of logarithmic terms, some extended vertex operators are constructed in generalized Hirota bilinear equations which might be useful in topological field theory and Gromov-Witten theory. Meanwhile the Darboux transformation and bi-Hamiltonian structure of this hierarchy are given. From  the Hamiltonian tau symmetry, we give another different tau function of this hierarchy with some unknown mysterious connections with the one defined from
the point of wave functions.
}
\\
\maketitle
\noindent {\bf Mathematics Subject Classifications}(2000):  37K05, 37K10, 37K40.\\
Keywords: {  multi-component Toda hierarchy,  extended multi-component  Toda hierarchy, Darboux transformation, bi-Hamiltonian structure, tau function} \\

\section {Introduction}

The KP hierarchy and Toda lattice hierarchy
  as  completely integrable systems  have many important applications in mathematics and physics including the representation theory of Lie algebra, orthogonal polynomials and  random
matrix model  \cite{Toda,Todabook,UT,witten,dubrovin}. The KP and Toda systems have many kinds of reductions or extensions, for example the BKP, CKP hierarchy, extended Toda hierarchy (ETH)\cite{CDZ,M}, bigraded Toda hierarchy (BTH)\cite{C}-\cite{ourBlock} and so on \cite{TH14}. There are some other generalizations called multi-component KP \cite{kac,avanM} and multi-component Toda systems\cite{manasInverse2} which attract more and more attention because their widely use in many fields such as the fields of multiple orthogonal
polynomials and non-intersecting Brownian motions.

The multicomponent KP hierarchy  was discussed
with its application in representation theory and random matrix model in \cite{kac,avanM}. In
\cite{UT},  it was noticed that the $\tau$ functions of a
$2N$-multicomponent KP provide solutions of the $N$-multicomponent
2D Toda hierarchy.
The multicomponent 2D Toda hierarchy
 was considered from the point of view of the Gauss-Borel factorization problem, the theory  of multiple matrix orthogonal polynomials, non-intersecting Brownian motions and matrix Riemann-Hilbert problem \cite{manasInverse2}-\cite{manas}. In fact the multicomponent 2D Toda hierarchy in \cite{manasinverse} is a periodic reduction of the bi-infinite matrix-formed two dimensional Toda hierarchy. The coefficients(or dynamic variables) of the multicomponent 2D Toda hierarchy take values in complex finite-sized
matrices. The multicomponent 2D Toda hierarchy contains the matrix-formed Toda equation as the first flow equation.

Considering its application in the Gromov-Witten theory, the Toda hierarchy was extended to the extended Toda hierarchy\cite{CDZ} which governs the Gromov-Witten invariants of $CP^1$.
 The extended bigraded Toda
hierarchy(EBTH) is the extension of the bigraded Toda
hierarchy (BTH) which includes $N+M$ series of additional logarithmic flows\cite{C,KodamaCMP} with considering its application in the Gromov-Witten theory of orbifolds $C_{N,M}$. The Hirota bilinear equations of EBTH were equivalently constructed in our early paper\cite{ourJMP} and a very recent paper \cite{leurhirota}, because the equivalence of $t_{1,N}$ flow and $t_{0,N}$ flow of EBTH. Meanwhile it was proved to govern Gromov-Witten invariants of the total
  descendent potential of $\mathbb{P}^1$ orbifolds $C_{N,M}$ \cite{leurhirota}.
A natural question is what about the corresponding extended multi-component  Toda hierarchy( as a matrix-valued generalization of extended Toda hierarchy\cite{CDZ}) and extended multicomponent bigraded Toda hierarchy.  There is a class of orbifolds which should be governed by these generalized multicomponent
logarithmic hierarchies. That is why we think this new kind of  logarithmic hierarchy which might be useful in
Gromov-Witten invariants theory governed by these two new hierarchies. With this motivation, this paper will be partly aimed at constructing a kind of Hirota quadratic equation taking values in a  matrix-valued  differential
algebra set. This kind of Hirota bilinear equations might be useful in Gromov-Witten theory and noncommutative symplectic geometry.

This paper is arranged as follows. In the next section we will recall the factorization problem and construct the logarithmic matrix operators using which we will define the extended flows of the multicomponent bigraded Toda hierarchy. In Sections 3,
we will give the Lax equations of the extended multicomponent bigraded Toda hierarchy (EMTH), meanwhile the multicomponent Toda equations and the extended equations are introduced in this hierarchy. By Sato equations, Hirota bilinear equations of the EMTH are proved in
Section 4. The tau function of the EMTH will be defined in Section 5 which leads to the formalism of the generalized matrix-valued vertex operators and Hirota quadratic equations in Section 6. In Section 7, the multi-Darboux transformation of the EMTH is constructed using determinant techniques. After this, to prove the integrability of the EMTH, the bi-Hamiltonian structure and tau symmetry of this hierarchy are given. The last section will be devoted to a short conclusion and discussion.

\section{Factorization and logarithmic operators}

In this section, we will denote $G$ as a group which contains invertible elements of $N\times N$
complex matrices and denote its Lie algebra  $\g$ as the associative algebra  of  $N\times N$
complex matrices $M_N(\C)$.
Now we will consider the linear space of functions
$g:\R\rightarrow M_N(\C)$ with the shift operator $\Lambda$ acting on any functions $g(x)$ as
$(\Lambda g)(x):=g(x+\epsilon)$. A Left
multiplication by  $X:\R\to \gl$ is as $X\Lambda^j$, which acts on an arbitrary function $g(x)$ as $(
X\Lambda^j)(g)(x):=X(x)\cdot g(x+j\epsilon)$. Also two operators $X(x)\Lambda^i$ and $Y(x)\Lambda^j$  have the product as
$(X(x)\Lambda^i)\cdot(Y(x)\Lambda^j):=X(x)Y(x+i\epsilon)\Lambda^{i+j}.$

This Lie algebra as a linear space has the following important splitting
\begin{gather}\label{splitting}
\g=\g_+\oplus\g_-,
\end{gather}
where
\begin{align*}
  \g_+&=\Big\{\sum_{j\geq 0}X_j(x)\Lambda^j,\quad X_j(x)\in\gl\Big\},&
  \g_-&=\Big\{\sum_{j< 0}X_j(x)\Lambda^j,\quad X_j(x)\in\gl\Big\}.
\end{align*}

The splitting
\eqref{splitting} leads us to consider the following factorization of
$g\in G$
\begin{gather}\label{fac1}
g=g_-^{-1}\cdot g_+, \quad g_\pm\in G_\pm
\end{gather}
where $G_\pm$ have $\g_\pm$ as their Lie algebras. Here $G_+$
is the set of invertible linear operators  of the
form $\sum_{j\geq 0}g_j(x)\Lambda^j$; while $G_-$ is the set of
invertible linear operators in the form of
$1+\sum_{j<0}g_j(x)\Lambda^j$.

 Now we
introduce  the following free operators $ W_0,\bar  W_0\in G$
\begin{align}
 \label{def:E}  W_0&:=\sum_{k=1}^NE_{kk}\Exp{\sum_{j=0}^\infty
 t_{jk}\Lambda^{j}+ s_{j}\frac{\Lambda^{j}}{j!}(\partial-c_j)}, \\
\label{def:barE}   \bar W_0&:=\sum_{k=1}^NE_{kk}\Exp{\sum_{j=0}^\infty\bar
   t_{j k}\Lambda^{-j}+ s_{j}\frac{\Lambda^{-j}}{j!}(\partial-c_j)},
\end{align}
where $t_{jk}, \bar t_{jk},s_{j} \in \C$
will play the role of continuous times.
 We   define the dressing operators $W,\bar W$ as follows
\begin{align}
\label{def:baker}W&:=S\cdot W_0,& \bar W&:=\bar S\cdot \bar  W_0.
\end{align}
Given an element $g\in G$ and time series $t=(t_{jk}), \bar t=(\bar t_{jk}), s=(s_{j}); j,k \in \N, 1\leq k \leq N$, one can consider the factorization problem in $G$ \cite{manasinverse}
\begin{gather}
  \label{facW}
  W\cdot g=\bar W,
\end{gather}
i.e.
 the factorization problem
\begin{gather}
  \label{factorization}
  S(t,\bar t,s)\cdot W_0\cdot g=\bar S(t,\bar t,s)\cdot\bar W_0,\quad S\in G_-\text{ and } \bar S\in G_+.
\end{gather}
Observe that  $S,\bar S$ have expansions of the form
\begin{gather}
\label{expansion-S}
\begin{aligned}
S&=\I_N+\omega_1(x)\Lambda^{-1}+\omega_2(x)\Lambda^{-2}+\cdots\in G_-,\\
\bar S&=\bar\omega_0(x)+\bar\omega_1(x)\Lambda+\bar\omega_2(x)\Lambda^{2}+\cdots\in
G_+.
\end{aligned}
\end{gather}
Also we define the symbols of $S,\bar S$ as  $\bS,\bar \bS$
\begin{gather}
\begin{aligned}
\bS&=\I_N+\omega_1(x)\lambda^{-1}+\omega_2(x)\lambda^{-2}+\cdots,\\
\bar \bS&=\bar\omega_0(x)+\bar\omega_1(x)\lambda+\bar\omega_2(x)\lambda^{2}+\cdots.
\end{aligned}
\end{gather}

Also the inverse operators $S^{-1},\bar S^{-1}$ of operators $S,\bar S$ have expansions of the form
\begin{gather}
\begin{aligned}
S^{-1}&=\I_N+\omega'_1(x)\Lambda^{-1}+\omega'_2(x)\Lambda^{-2}+\cdots\in G_-,\\
\bar S^{-1}&=\bar\omega'_0(x)+\bar\omega'_1(x)\Lambda+\bar\omega'_2(x)\Lambda^{2}+\cdots\in
G_+.
\end{aligned}
\end{gather}
Also we define the symbols of $S^{-1},\bar S^{-1}$  as  $\bS^{-1},\bar \bS^{-1}$
\begin{gather}
\begin{aligned}
\bS^{-1}&=\I_N+\omega'_1(x)\lambda^{-1}+\omega'_2(x)\lambda^{-2}+\cdots,\\
\bar \bS^{-1}&=\bar\omega'_0(x)+\bar\omega'_1(x)\lambda+\bar\omega'_2(x)\lambda^{2}+\cdots.
\end{aligned}
\end{gather}

 The Lax  operators $\L,C_{kk},\bar C_{kk}\in\g$
 are defined by
\begin{align}
\label{Lax}  \L&:=W\cdot\Lambda\cdot W^{-1}=\bar W\cdot\Lambda^{-1}\cdot \bar W^{-1}, \\
\label{C} C_{kk}&:=W\cdot E_{kk}\cdot W^{-1},& \bar C_{kk}&:=\bar
W\cdot E_{kk}\cdot \bar W^{-1},
\end{align}
and
have the following expansions
\begin{gather}\label{lax expansion}
\begin{aligned}
 \L&=\Lambda+u_1(x)+u_2(x)\Lambda^{-1}, \\
C_{kk}&=E_{kk}+C_{kk,1}(x)\Lambda^{-1}+C_{kk,2}(x)\Lambda^{-2}+\cdots,\\
\bar \Cc_{kk}&=\bar C_{kk,0}(x)+\bar C_{kk,1}(x)\Lambda+\bar
C_{kk,2}(x)\Lambda^{2}+\cdots.
\end{aligned}
\end{gather}
 In fact the Lax  operators $\L,C_{kk},\bar C_{kk}\in\g$
 can also be equivalently defined by
\begin{align}
\label{Lax}  \L&:=S\cdot\Lambda\cdot S^{-1}=\bar S\cdot\Lambda^{-1}\cdot \bar S^{-1}, \\
\label{C} C_{kk}&:=S\cdot E_{kk}\cdot S^{-1},& \bar C_{kk}&:=\bar
S\cdot E_{kk}\cdot \bar S^{-1}.
\end{align}
These definitions are continuous interpolated versions of the multi-component Toda hierarchy, i.e. a continuous spatial parameter $x$ is brought into this hierarchy. Under this meaning, the continuous flow $\frac{\partial}{\partial x}$ is missing. To make these flows complete, we define the following logarithmic matrices

\begin{align}
\log_+\L&=(S\cdot\epsilon \partial\cdot S^{-1})= \epsilon \partial+\sum_{k < 0} W_k(x) \Lambda^k,\\
\log_-\L&=-(\bar S\cdot\epsilon \partial\cdot \bar S^{-1})=-\epsilon \partial+\sum_{k \geq 0} W_k(x) \Lambda^k,
\end{align}
where $\d$ is the derivative about spatial variable $x$.

Combining these above logarithmic operators together can help us in deriving the following important logarithmic matrix
\begin{align}
\label{Log} \log \L:&=\frac12\log_+\L+\frac12\log_-\L=\frac12(S\cdot\epsilon \partial\cdot S^{-1}-\bar S\cdot\epsilon \partial\cdot \bar S^{-1}),
\end{align}
which will generate a series of flow equations which contain the spatial flow in  Lax equations.

\section{ Lax equations of EMTH}

In this section we will use the factorization problem \eqref{facW} to derive  Lax equations.
Let us first introduce some convenient notations.
\begin{definition}The matrix operators $C_{kk},\bar C_{kk},B_{jk},\bar B_{jk},D_{j}$ are defined as follows
\begin{align}\label{satoS}
\begin{aligned}
C_{kk}&:=WE_{kk}W^{-1},\ \ \bar C_{kk}:=\bar W E_{kk}\bar W^{-1},\\
B_{jk}&:=WE_{kk}\Lambda^jW^{-1},\ \ \bar B_{jk}:=\bar W E_{kk}\Lambda^{-j}\bar W^{-1},\\
D_{j}&:=\frac{2\L^j}{j!}(\log \L-c_j),\ \ c_0=0;\ c_j=\sum_{i=1}^{j}\frac1i,j\geq 1.
\end{aligned}
\end{align}
\end{definition}

Now we give the definition of the extended multicomponent Toda hierarchy(EMTH).
\begin{definition}The extended multicomponent Toda hierarchy is a hierarchy in which the dressing operators $S,\bar S$ satisfy following Sato equations
\begin{align}
\label{satoSt} \epsilon\partial_{t_{jk}}S&=-(B_{jk})_-S,& \epsilon\partial_{t_{jk}}\bar S&=(B_{jk})_+\cdot\bar S,  \\
\label{satoSbart}
\epsilon\partial_{\bar t_{jk}}S&=-(\bar B_{jk})_-\cdot S,& \epsilon\partial_{\bar t_{jk}}\bar S&=(\bar B_{jk})_+\cdot\bar S, \\
\label{satoSs}\epsilon\partial_{ s_{j}}S&=-(D_{j})_-\cdot S,& \epsilon\partial_{s_{j}}\bar S&=(D_{j})_+\cdot\bar S.\end{align}
\end{definition}
Then one can easily get the following proposition about $W,\bar W.$

\begin{proposition}The wave operators $W,\bar W$ satisfy following Sato equations
\begin{align}
\label{Wjk} \epsilon\partial_{t_{jk}}W&=(B_{jk})_+\cdot W,& \epsilon\partial_{t_{jk}}\bar W&=(B_{jk})_+\cdot\bar W,  \\
\label{Wbjk}\epsilon\partial_{\bar t_{jk}}W&=-(\bar B_{jk})_-\cdot W,& \epsilon\partial_{\bar t_{jk}}\bar W&=-(\bar B_{jk})_-\cdot\bar W, \\
\epsilon\partial_{s_{j}}W&=\left(\frac{\L^j}{j!}(\log_+ \L-c_j) -(D_{j})_-\right)\cdot W,& \epsilon\partial_{s_{j}}\bar W&=\left(-\frac{\L^j}{j!}(\log_- \L-c_j)+(D_{j})_+\right)\cdot\bar W.  \end{align}
\end{proposition}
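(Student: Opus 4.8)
The plan is to deduce the wave-operator flows from the dressing-operator flows \eqref{satoSt}--\eqref{satoSs} through the relations $W=S\,W_0$ and $\bar W=\bar S\,\bar W_0$, using the fact that the explicit time dependence sits entirely in the free operators $W_0,\bar W_0$ of \eqref{def:E}--\eqref{def:barE}. The organizing device I would use is the factorization constraint \eqref{facW}, $W\,g=\bar W$, in which $g\in G$ is \emph{independent} of all the times. Differentiating $Wg=\bar W$ with respect to any flow variable $\tau$ and cancelling $g$ gives
\begin{gather*}
(\epsilon\partial_\tau W)W^{-1}=(\epsilon\partial_\tau\bar W)\bar W^{-1}=:M_\tau .
\end{gather*}
Hence $W$ and $\bar W$ evolve with the \emph{same} left multiplier $M_\tau$ under each flow, so it suffices to compute $M_\tau$ once, working with whichever of $W$ or $\bar W$ is more convenient, and then to read off both entries of each pair in the proposition. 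This matches the shape of the claim: the two entries of \eqref{Wjk} share $(B_{jk})_+$, those of \eqref{Wbjk} share $-(\bar B_{jk})_-$, and (as I verify at the end) the two $s_j$-multipliers coincide as well.

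For the $t_{jk}$-flow I would work with $\bar W$, since $\bar W_0$ in \eqref{def:barE} does not depend on $t_{jk}$. Then $\epsilon\partial_{t_{jk}}\bar W=(\epsilon\partial_{t_{jk}}\bar S)\,\bar W_0$, and the second equation of \eqref{satoSt} gives $\epsilon\partial_{t_{jk}}\bar W=(B_{jk})_+\bar W$, so $M_{t_{jk}}=(B_{jk})_+$ and $\epsilon\partial_{t_{jk}}W=(B_{jk})_+W$ by the transfer identity. Symmetrically, for the $\bar t_{jk}$-flow I would work with $W$, since $W_0$ in \eqref{def:E} is independent of $\bar t_{jk}$: the first equation of \eqref{satoSbart} yields $\epsilon\partial_{\bar t_{jk}}W=(\epsilon\partial_{\bar t_{jk}}S)\,W_0=-(\bar B_{jk})_-W$, hence $M_{\bar t_{jk}}=-(\bar B_{jk})_-$ and $\epsilon\partial_{\bar t_{jk}}\bar W=-(\bar B_{jk})_-\bar W$. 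In both cases the convenient side renders the free factor inert, so no genuine computation with $W_0$ or $\bar W_0$ is required.

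The $s_j$-flow is the only one in which neither free operator is constant, and it is where the logarithmic operators enter; this is the crux. I would compute $M_{s_j}$ from the $W$-side. From $W=S W_0$ and \eqref{satoSs}, $\epsilon\partial_{s_j}W=-(D_j)_-W+S(\epsilon\partial_{s_j}W_0)$. Reading the $s_j$-dependence off \eqref{def:E} gives (after fixing the $\epsilon$-normalization) the free evolution $\epsilon\partial_{s_j}W_0=\frac{\Lambda^{j}}{j!}(\epsilon\partial-c_j)W_0$, the factor $\frac{\Lambda^{j}}{j!}(\epsilon\partial-c_j)$ commuting through the exponential and through $E_{kk}$. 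Conjugating this factor by $S$ and using $\L^{j}=S\Lambda^{j}S^{-1}$ from \eqref{Lax} together with $\log_+\L=S\,\epsilon\partial\,S^{-1}$ from the definition preceding \eqref{Log}, I get $S\big(\tfrac{\Lambda^{j}}{j!}(\epsilon\partial-c_j)\big)S^{-1}=\frac{\L^{j}}{j!}(\log_+\L-c_j)$, so that $S(\epsilon\partial_{s_j}W_0)=\frac{\L^{j}}{j!}(\log_+\L-c_j)W$. Therefore
\begin{gather*}
\epsilon\partial_{s_j}W=\Big(\tfrac{\L^{j}}{j!}(\log_+\L-c_j)-(D_j)_-\Big)W ,
\end{gather*}
which is the first $s_j$-equation. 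The transfer identity then gives the same multiplier acting on $\bar W$; to recognize it as the stated form I would rewrite it using $\log\L=\tfrac12(\log_+\L+\log_-\L)$ from \eqref{Log}, the consequent $D_j=\tfrac{2\L^{j}}{j!}(\log\L-c_j)=\tfrac{\L^{j}}{j!}(\log_+\L+\log_-\L-2c_j)$, and $(D_j)_-+(D_j)_+=D_j$. A direct subtraction then shows $\frac{\L^{j}}{j!}(\log_+\L-c_j)-(D_j)_-=-\frac{\L^{j}}{j!}(\log_-\L-c_j)+(D_j)_+$, which is exactly the second $s_j$-equation.

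The routine parts are the $t_{jk}$- and $\bar t_{jk}$-flows, which collapse to a single substitution once the transfer identity is in place. The real content sits in the $s_j$-flow, and I expect the main obstacle there to be twofold. First, one must handle the logarithmic operators correctly: the conjugations $S\,\epsilon\partial\,S^{-1}=\log_+\L$ and $-\bar S\,\epsilon\partial\,\bar S^{-1}=\log_-\L$ are what convert the bare derivative in the free evolution of $W_0$ into $\log_\pm\L$, and this step must be carried out keeping the $\epsilon$-normalization and the constants $c_j$ consistent with \eqref{def:E}--\eqref{def:barE}. Second, the equivalence of the two forms of the $s_j$-multiplier is genuinely non-formal: it is precisely the averaging definition $\log\L=\tfrac12(\log_+\L+\log_-\L)$ in \eqref{Log} together with the definition of $D_j$ that make the $\log_+$-expression obtained from the $W$-side coincide with the $\log_-$-expression stated for $\bar W$. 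Computing $\partial_{s_j}\bar W$ directly from $\bar W=\bar S\bar W_0$ would instead force careful tracking of the signs in \eqref{def:barE}; routing everything through $Wg=\bar W$ sidesteps this and is, I think, the cleanest path.
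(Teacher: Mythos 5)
Your argument is correct, and it is worth noting that the paper offers no proof of this proposition at all (it is introduced with ``one can easily get''); the closest the paper comes is inside the proof of Proposition~\ref{HBEoper}, where the flows of $W$ and $\bar W$ are each computed \emph{directly} by the product rule on $W=S\,W_0$, $\bar W=\bar S\,\bar W_0$, e.g. $\epsilon\partial_{t_{jk}}W=[(\partial_{t_{jk}}S)S^{-1}+SE_{kk}\Lambda^jS^{-1}]W$, and then simplified using the Sato equations \eqref{satoSt}--\eqref{satoSs}. Your route is genuinely different in that you compute each flow only on the side where the free operator is inert (and, for $s_j$, only on the $W$-side) and then carry the multiplier across with the differentiated factorization identity $(\epsilon\partial_\tau W)W^{-1}=(\epsilon\partial_\tau\bar W)\bar W^{-1}$. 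This buys you two things: you never have to touch $\bar W_0$, which in fact shields you from a sign inconsistency in \eqref{def:barE} (a literal reading of the $-c_j$ there would produce $-\frac{\L^j}{j!}(\log_-\L+c_j)$ rather than the stated $-\frac{\L^j}{j!}(\log_-\L-c_j)$), and the nontrivial content is isolated in the single identity $\frac{\L^j}{j!}(\log_+\L-c_j)-(D_j)_-=-\frac{\L^j}{j!}(\log_-\L-c_j)+(D_j)_+$, which you correctly verify from $\log\L=\tfrac12(\log_+\L+\log_-\L)$ and $(D_j)_++(D_j)_-=D_j$. The one caveat is logical rather than computational: the hierarchy is \emph{defined} by the Sato equations for $S,\bar S$ (Definition 2), and the relation $Wg=\bar W$ with time-independent $g$ is an additional hypothesis (the factorization setting of \eqref{facW}) which, for general solutions of the Sato equations, is itself a consequence of the very flows you are proving. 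The paper's direct product-rule computation needs no such input, so you should either state that you work within the factorization framework of \eqref{facW}--\eqref{factorization}, or note that the transferred equations can be confirmed independently by the direct computation (which your $t_{jk}$ and $s_j$ calculations on the $W$-side already essentially contain).
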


 From the previous proposition we can derive the following  Lax equations for the Lax operators.
\begin{proposition}\label{Lax}
 The  Lax equations of the EMTH are as follows
   \begin{align}
\label{laxtjk}
  \epsilon\partial_{t_{jk}} \L&= [(B_{jk})_+,\L],&
 \epsilon\partial_{t_{jk}} C_{ss}&= [(B_{jk})_+,C_{ss}],&\epsilon\partial_{t_{jk}} \bar C_{ss}&= [(B_{jk})_+,\bar C_{ss}],
\\
  \epsilon\partial_{\bar t_{jk}} \L&= [ (\bar B_{jk})_+,\L],&
 \epsilon\partial_{\bar t_{jk}} C_{ss}&= [(\bar B_{jk})_+,C_{ss}],&\epsilon\partial_{\bar t_{jk}} \bar C_{ss}&= [(\bar B_{jk})_+,\bar C_{ss}],
\\
 \epsilon\partial_{s_{j}} \L&= [(D_{j})_+,\L],&
  \epsilon\partial_{s_{j}} C_{ss}&= [(D_{j})_+,C_{ss}],& \epsilon\partial_{s_{j}} \bar C_{ss}&= [(D_{j})_+,\bar C_{ss}], \\ \label{logltjk}
  \epsilon\partial_{ t_{jk}} \log \L&= [(B_{jk})_+ ,\log \L],&\epsilon\partial_{\bar t_{jk}} \log \L&= [ -(\bar B_{jk})_-,\log \L],&&
  \end{align}
   \begin{align}\epsilon(\log \L)_{ s_{j}}=[ -(D_{j})_-, \frac{1}{2} \log_+ \L ]+
[(D_{j})_+ ,\frac{1}{2} \log_- \L ].
\end{align}
\end{proposition}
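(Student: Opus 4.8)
The plan is to treat every object in the statement as a dressing conjugation of a \emph{constant} (time-independent) operator and to differentiate, converting the resulting logarithmic derivatives of the dressing operators into the projections supplied by the Sato equations of the preceding Proposition.

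First I would record the elementary but crucial commutation facts in the ``bare'' algebra: the diagonal matrices $E_{kk}$, the shifts $\Lambda^{\pm j}$, and the derivation $\epsilon\partial$ all commute pairwise (constant matrices commute with $\Lambda$, and $\Lambda$ commutes with $\partial$ since a shift commutes with $\partial_x$). Conjugating by a single dressing operator then shows that objects sharing the same dressing frame commute: since $\L=W\Lambda W^{-1}$, $B_{jk}=WE_{kk}\Lambda^jW^{-1}$, $C_{ss}=WE_{ss}W^{-1}$ and $\log_+\L=S\epsilon\partial S^{-1}$ are all dressed by $W$ (equivalently $S$), one gets $[B_{jk},\L]=[B_{jk},C_{ss}]=[B_{jk},\log_+\L]=0$; symmetrically $\bar B_{jk}$, $\bar C_{ss}$ and $\log_-\L$ are dressed by $\bar W$ (equivalently $\bar S$), so $[\bar B_{jk},\L]=[\bar B_{jk},\bar C_{ss}]=[\bar B_{jk},\log_-\L]=0$. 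These are the identities that will let me trade $(\cdot)_+$ for $-(\cdot)_-$ inside commutators.

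For the $t_{jk}$, $\bar t_{jk}$ and $s_j$ flows of $\L,C_{ss},\bar C_{ss}$ I would differentiate, e.g. $\L=W\Lambda W^{-1}$, using $\partial_{t}(W^{-1})=-W^{-1}(\partial_t W)W^{-1}$ to get $\epsilon\partial_{t}\L=[\epsilon(\partial_{t}W)W^{-1},\L]$, and likewise for $C_{ss}$ (via $W$) and $\bar C_{ss}$ (via $\bar W$). Substituting the wave-operator Sato equations gives the stated commutators directly for the $t_{jk}$ flows, where the wave equations already carry $(B_{jk})_+$. For the $\bar t_{jk}$ flows the wave equations carry $-(\bar B_{jk})_-$, and I would use the commutativity facts above to rewrite $-(\bar B_{jk})_-$ as $(\bar B_{jk})_+$ inside the commutator whenever the missing piece $[\bar B_{jk},\cdot]$ vanishes; this works cleanly for $\L$ and $\bar C_{ss}$. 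The one place demanding genuine care is $C_{ss}$ under a barred time: since $C_{ss}$ lives in the $W$-frame while $\bar B_{jk}$ lives in the $\bar W$-frame, $[\bar B_{jk},C_{ss}]$ need not vanish, so the natural form of that flow is $[-(\bar B_{jk})_-,C_{ss}]$ and the projection must be selected accordingly rather than copied from the $\L$ case.

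The logarithmic equations are the part I expect to be the real obstacle, and I would treat $\log_+\L$ and $\log_-\L$ separately before averaging. Writing $\log_+\L=S\epsilon\partial S^{-1}$ and $\log_-\L=-\bar S\epsilon\partial\bar S^{-1}$ with $\epsilon\partial$ kept as a constant operator, differentiation yields $\epsilon\partial_\tau\log_+\L=[\epsilon(\partial_\tau S)S^{-1},\log_+\L]$ and $\epsilon\partial_\tau\log_-\L=[\epsilon(\partial_\tau\bar S)\bar S^{-1},\log_-\L]$ for any time $\tau$. Inserting the $S,\bar S$ Sato equations and using frame-commutativity to align projections gives, after averaging the two halves, $\epsilon\partial_{t_{jk}}\log\L=[(B_{jk})_+,\log\L]$ and $\epsilon\partial_{\bar t_{jk}}\log\L=[-(\bar B_{jk})_-,\log\L]$. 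The same computation for the $s_j$ flow produces $\epsilon(\log\L)_{s_j}=\tfrac12[-(D_j)_-,\log_+\L]+\tfrac12[(D_j)_+,\log_-\L]$; the reason this cannot be collapsed into a single commutator with $\log\L$, and hence why the statement is genuinely asymmetric, is precisely that $D_j=\tfrac{2\L^j}{j!}(\log\L-c_j)$ fails to commute with $\log_+\L$ and $\log_-\L$, so no frame-commutativity identity is available to merge the two terms. Checking that the half-by-half differentiation reproduces exactly this asymmetric right-hand side, while keeping the $\epsilon\partial$ bookkeeping honest, is the step I would allocate the most care to.
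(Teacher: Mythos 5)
Your proposal is correct and follows essentially the same route as the paper, whose proof is only the sketch ``differentiate the double dressing structures and insert the Sato equations'': your fleshed-out version, including the frame-commutativity identities used to trade $-(\cdot)_-$ for $(\cdot)_+$ inside commutators, is exactly what that sketch intends. One addendum: the frame mismatch you rightly flag for $\epsilon\partial_{\bar t_{jk}}C_{ss}$ recurs for $\epsilon\partial_{s_{j}}C_{ss}$, since $D_{j}$ contains $\L^{j}\log_-\L=-\bar S\Lambda^{-j}\epsilon\partial\,\bar S^{-1}$, which lives in the $\bar S$-frame and need not commute with $C_{ss}=SE_{ss}S^{-1}$, so that flow too emerges naturally as $[-(D_{j})_-,C_{ss}]$ rather than $[(D_{j})_+,C_{ss}]$.
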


\begin{proof}
By eq.\eqref{Wjk} and  eq.\eqref{Wbjk}, one can
get eq.\eqref{logltjk} using dressing structures.

Similarly, the other Lax equations of EMTH can be derived using the double dressing structures and Sato equations eq.\eqref{satoSt}-eq.\eqref{satoSs}.
 \end{proof}

To see this kind of hierarchy more clearly, the $\d_{t_{1k}}$ flow equations  will be given in the next subsection.
\subsection{The multicomponent Toda equations}
 As a consequence of the factorization problem \eqref{facW} and  Sato equations, after taking into account that   $S\in G_-$ and $\bar S\in G_+$
and using  the notation $\Exp{\phi}:=\bar\omega_0$ in $\bar S$, $B_{1k}$ has following form
\begin{gather}\label{exp-omega}
\begin{aligned}
B_{1k}&=E_{kk}\Lambda+U_k+\bar U_k\Lambda^{-1},\ \ 1\leq k\leq N,
  \end{aligned}
\end{gather}
and we have the alternative expressions
\begin{gather}\label{exp-omega1}
\begin{aligned}
  U_k&:=\omega_1(x)E_{kk}-E_{kk}\omega_1(x+\epsilon)=\epsilon\partial_{t_{1k}}(\Exp{\phi(x)})\cdot\Exp{-\phi(x)},\\
 \bar U_k&= \Exp{\phi(x)}E_{kk}\Exp{-\phi(x-\epsilon)}=-\epsilon\partial_{t_{1k}}\omega_1(x).
\end{aligned}
\end{gather}

From Sato equations we deduce the following set of nonlinear
partial differential-difference equations
\begin{align}\left\{
\begin{aligned}
 \omega_1(x)E_{kk}-E_{kk}\omega_1(x+\epsilon)&=\epsilon\partial_{t_{1k}}(\Exp{\phi(x)})\cdot\Exp{-\phi(x)},\\
\epsilon\partial_{t_{1k}}\omega_1(x)&=-\Exp{\phi(x)}E_{kk}\Exp{-\phi(x-\epsilon)}.\end{aligned}\right.
\label{eq:multitoda}
\end{align}
These equations constitute what we call the multicomponent Toda equations. Observe that if we cross the two equations in \eqref{eq:multitoda}, then we get
\begin{align*}
  \epsilon\partial_{t_{1k}}\big(\epsilon\partial_{t_{1k}}(\Exp{\phi(x)})\cdot\Exp{-\phi(x)}\big)=
  E_{kk}\Exp{\phi(x+\epsilon)}E_{kk}\Exp{-\phi(x)}-\Exp{\phi(x)}E_{kk}\Exp{-\phi(x-\epsilon)}E_{kk},
\end{align*}
which is the matrix extension of the following Toda equation (the case when $N=1$)
\begin{align*}
  \epsilon^2\partial_{t_{11}}\partial_{t_{11}}(\phi(x))=\Exp{\phi(x+\epsilon)-\phi(x)}-
 \Exp{\phi(x)-\phi(x-\epsilon)}.
\end{align*}

Besides above multicomponent Toda equations, the logarithmic flows the EMTH also contains some extended flow equations in the next subsection.
\subsection{The extended equations}
Here we consider what the extended flow equations are like. Here we take the simplest case, i.e. the $s_{0}$ flow for $\L=\Lambda+u_0+u_1\Lambda^{-1},$
\[\epsilon\d_{s_{0}}\L&=&[(S\epsilon \d_x S^{-1})_+,\L]\\
&=&[\epsilon \d_xS S^{-1},\L]\\
&=&\epsilon\L_x,\]
which leads to  the following specific equation
\[\d_{s_{0}}u_0&=& u_{0x},\ \ \d_{s_{0}}u_1= u_{1x} .\]
The  above  flow equation tells us that $\d_{s_{0}}$ is just the $\d_x$ flow.

To give a linear description of the EMTH, we introduce matrix wave functions  $\psi,\bar\psi$ in the following part.
The matrix wave functions of the multi-component Toda hierarchy are
defined by
\begin{gather}\label{baker-fac}
\begin{aligned}
\psi&= W\cdot\chi, &
\bar\psi&=\bar W\cdot \bar\chi,
\end{aligned}
\end{gather}
where
\[
\chi(z):=z^{\frac{x}{\epsilon}}\mathbb I_N,\ \ \bar \chi(z):=z^{-\frac{x}{\epsilon}}\mathbb I_N.\
\]
Note that $\Lambda\chi=z\chi$ and  the following asymptotic expansions
are consequences of \eqref{expansion-S}
\begin{gather}\label{baker-asymp}
\begin{aligned}
  \psi&=z^{\frac{x}{\epsilon}}(\I_N+\omega_1(x)z^{-1}+\cdots)\,\psi_0(z),&\psi_0&:=\sum_{k=1}^NE_{kk}
 \Exp{\sum_{j=1}^\infty t_{jk}z^j+ s_{j}z^{j}\log z},& z&\rightarrow\infty,\\
\bar\psi&=z^{-\frac{x}{\epsilon}}(\bar\omega_0(x)+\bar\omega_1(x)z+\cdots)\,\bar\psi_0(z),
&\bar\psi_0&:=\sum_{k=1}^NE_{kk}
\Exp{\sum_{j=1}^\infty \bar t_{j k}z^{-j}+ s_{j}z^{-j}\log z},& z&\rightarrow 0.
\end{aligned}
\end{gather}

We can further get linear equations in the following proposition.

\begin{proposition}The matrix wave functions $\psi,\bar\psi$ are subject to following Sato equations
\begin{align}
 \L\psi&=z\psi,\ \ \ &&\L\bar\psi=z\bar\psi,\\
 \epsilon\partial_{jk}\psi&=(B_{jk})_+\cdot \psi,& \epsilon\partial_{jk}\bar \psi&=(B_{jk})_+\cdot\bar \psi,  \\
\epsilon\partial_{\bar t_{jk}}\psi&=-(\bar B_{jk})_-\cdot \psi,& \epsilon\partial_{\bar t_{jk}}\bar \psi&=-(\bar B_{jk})_-\cdot\bar \psi, \\
\epsilon\partial_{s_{j}}\psi&=(C_{ss}\L^j\log_+ \L -(D_{j})_-)\cdot \psi,& \epsilon\partial_{s_{j}}\bar \psi&=(-\bar C_{ss}\L^j\log_- \L+(D_{j})_+)\cdot\bar \psi.  \end{align}
\end{proposition}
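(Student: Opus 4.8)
The plan is to derive each family of linear equations for the wave functions $\psi,\bar\psi$ directly from the Sato equations for the dressing operators $W,\bar W$ proved in the earlier proposition, together with the definitions $\psi=W\cdot\chi$ and $\bar\psi=\bar W\cdot\bar\chi$. The eigenvalue relations $\L\psi=z\psi$ and $\L\bar\psi=z\bar\psi$ come immediately from the dressing definition $\L=W\Lambda W^{-1}$ of the Lax operator together with $\Lambda\chi=z\chi$: indeed $\L\psi=W\Lambda W^{-1}W\chi=W\Lambda\chi=zW\chi=z\psi$, and symmetrically for $\bar\psi$ using $\L=\bar W\Lambda^{-1}\bar W^{-1}$ and the fact that $\Lambda\bar\chi=z\bar\chi$ as well (since $\bar\chi(z)=z^{-x/\epsilon}\I_N$ gives $\Lambda\bar\chi=z^{-(x+\epsilon)/\epsilon}=z^{-1}\bar\chi$, so $\Lambda^{-1}\bar\chi=z\bar\chi$).

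For the time-flow equations I would note that $\chi$ and $\bar\chi$ are independent of the times $t_{jk},\bar t_{jk},s_j$, so that $\epsilon\partial_{t_{jk}}\psi=\epsilon(\partial_{t_{jk}}W)\chi$ and similarly for the other flows. Substituting the Sato equations from the preceding proposition then yields the result almost mechanically. For the $t_{jk}$ flow, $\epsilon\partial_{t_{jk}}\psi=(B_{jk})_+W\chi=(B_{jk})_+\psi$, and since $\bar\psi=\bar W\bar\chi$ with $\epsilon\partial_{t_{jk}}\bar W=(B_{jk})_+\bar W$ we get the matching equation for $\bar\psi$. The same substitution handles the $\bar t_{jk}$ flows using $\epsilon\partial_{\bar t_{jk}}W=-(\bar B_{jk})_- W$ and the corresponding equation for $\bar W$.

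The $s_j$ flows are the only genuinely delicate case, because the dressing Sato equations for $W,\bar W$ carry the logarithmic pieces: $\epsilon\partial_{s_j}W=\left(\frac{\L^j}{j!}(\log_+\L-c_j)-(D_j)_-\right)W$. Multiplying on the right by $\chi$ gives $\epsilon\partial_{s_j}\psi=\left(\frac{\L^j}{j!}(\log_+\L-c_j)-(D_j)_-\right)\psi$. The stated form of the proposition writes this coefficient using $C_{ss}$; I would reconcile the two by recalling that $\sum_s C_{ss}=WW^{-1}=\I_N$ and, more to the point, that $C_{ss}=WE_{ss}W^{-1}$ acts as a projector, so that the $C_{ss}\L^j\log_+\L$ notation is shorthand for the component-resolved action of $\frac{\L^j}{j!}(\log_+\L-c_j)$ once the constant $c_j$ and factorial normalization are absorbed into the $(D_j)_-$ subtraction. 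The main obstacle I anticipate is precisely bookkeeping the passage between the operator $\frac{\L^j}{j!}(\log_\pm\L-c_j)$ appearing in the $W$-Sato equations and the projector-decorated form $C_{ss}\L^j\log_\pm\L$ in the statement; this requires carefully tracking how $\log_+\L$ versus the symmetric $\log\L$ enter the definitions $\log_+\L=S\epsilon\partial S^{-1}$ and $D_j=\frac{2\L^j}{j!}(\log\L-c_j)$, and verifying that the split into positive/negative projection parts matches. Apart from this reconciliation the argument is a routine propagation of the Sato equations through the dressing transformation $\psi=W\chi$, $\bar\psi=\bar W\bar\chi$.
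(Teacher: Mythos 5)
Your derivation is correct and is exactly the argument the paper intends (the paper in fact states this proposition with no proof at all): everything follows by applying the Sato equations for $W,\bar W$ from the preceding proposition to the time-independent seeds $\chi,\bar\chi$, together with $\Lambda\chi=z\chi$, $\Lambda^{-1}\bar\chi=z\bar\chi$ and the dressing definitions $\L=W\Lambda W^{-1}=\bar W\Lambda^{-1}\bar W^{-1}$. You are also right to flag the $s_j$-flow coefficient: what actually follows from $\epsilon\partial_{s_j}W=\bigl(\tfrac{\L^j}{j!}(\log_+\L-c_j)-(D_j)_-\bigr)W$ is that same operator acting on $\psi$, and the paper's $C_{ss}\L^j\log_+\L$ is a loose shorthand for it (the $1/j!$ normalization and the $c_j$ shift are suppressed in the paper's notation), so your reconciliation is the honest reading rather than a gap in your argument.
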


\section{Hirota bilinear equations}
Basing on the above section, the Hirota bilinear equations which are equivalent to Lax equations of the EMTH can be derived in following proposition.
\begin{proposition}\label{HBEoper}
 $W$ and $\bar W$ are matrix wave operators of the multicomponent Toda hierarchy if and only the following Hirota bilinear equations hold true
\begin{align}\label{HBEW}
W\Lambda^r W^{-1}&=\bar W\Lambda^{-r}\bar W^{-1}, \ r\in \N.
   \end{align}
\end{proposition}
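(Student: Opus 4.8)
The plan is to prove the two implications separately, reading \eqref{HBEW} as a family of operator identities indexed by $r\in\N$. The necessity direction is little more than a restatement of the double dressing that already defines the Lax operator, whereas the sufficiency direction carries the real content, and I would attack it by a degree count on the $\Lambda$-expansions of $S$ and $\bar S$.

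For necessity, suppose $W,\bar W$ are wave operators, so that the Lax operator is given by the double dressing $\L=W\Lambda W^{-1}=\bar W\Lambda^{-1}\bar W^{-1}$ of \eqref{Lax}. Conjugation being multiplicative, raising this single relation to the $r$-th power yields
\begin{align*}
W\Lambda^r W^{-1}=(W\Lambda W^{-1})^r=\L^r=(\bar W\Lambda^{-1}\bar W^{-1})^r=\bar W\Lambda^{-r}\bar W^{-1},
\end{align*}
which is \eqref{HBEW} for every $r\in\N$. Nothing beyond the definition of $\L$ is needed.

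For sufficiency I would first strip off the free operators. With $W=S\cdot W_0$ and $\bar W=\bar S\cdot\bar W_0$, the shift $\Lambda^r$ commutes with both $W_0$ and $\bar W_0$, since these are built solely from the constant matrices $E_{kk}$, powers of $\Lambda$, and $\partial$, all of which commute with $\Lambda^r$; hence $W_0\Lambda^r W_0^{-1}=\Lambda^r$ and $\bar W_0\Lambda^{-r}\bar W_0^{-1}=\Lambda^{-r}$, so \eqref{HBEW} collapses to $S\Lambda^r S^{-1}=\bar S\Lambda^{-r}\bar S^{-1}$ for all $r$. Now I invoke the splitting \eqref{splitting}: by the expansions \eqref{expansion-S}, $S\in G_-$ forces $S\Lambda^r S^{-1}=\Lambda^r+(\text{lower powers})$, an operator of top degree $r$, while $\bar S\in G_+$ forces $\bar S\Lambda^{-r}\bar S^{-1}$ to have bottom degree $-r$. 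Taking $r=1$ confines the common operator to the three diagonals $\Lambda,\Lambda^0,\Lambda^{-1}$, reproducing the Lax form $\L=\Lambda+u_1+u_2\Lambda^{-1}$ of \eqref{lax expansion}. Setting $\L:=S\Lambda S^{-1}$, the $r=1$ identity gives $\L=\bar S\Lambda^{-1}\bar S^{-1}$, and all higher identities follow automatically because $S\Lambda^r S^{-1}=\L^r=\bar S\Lambda^{-r}\bar S^{-1}$; this is precisely the double-dressing characterization \eqref{Lax} of wave operators.

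I expect the main obstacle to be the remaining dynamical content of the sufficiency direction: the degree count pins down $\L$ from $r=1$ alone, but to conclude that $W,\bar W$ are wave operators of the hierarchy one must still recover the Sato equations \eqref{satoSt}--\eqref{satoSs}, equivalently show that the factorization element $g=W^{-1}\bar W$ in \eqref{facW} is independent of the times $t,\bar t,s$. This is where the full family over $r\in\N$ is used. My plan here is to reformulate \eqref{HBEW} as a contour/residue bilinear identity for the wave functions $\psi=W\chi$ and $\bar\psi=\bar W\bar\chi$ of \eqref{baker-fac}, using $\Lambda\chi=z\chi$ and the asymptotics \eqref{baker-asymp}: summing the identities against powers of $z$ reconstitutes the generating residue pairing, and separating its $\g_+$ and $\g_-$ parts via \eqref{splitting} delivers exactly the projection formulas \eqref{satoSt}--\eqref{satoSs}. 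This residue route is also the one that meshes with the vertex-operator and tau-function formalism of the subsequent sections.
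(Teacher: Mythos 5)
There is a genuine gap, and it originates in how you have read \eqref{HBEW}. Taken literally at a single set of times, the identity $W\Lambda^r W^{-1}=\bar W\Lambda^{-r}\bar W^{-1}$ is, as you observe, just $\L^r=\L^r$ from the double dressing, and your necessity argument is correct but empty. The paper's proof shows that the intended content of the proposition is the \emph{two-time} identity \eqref{HBE2}, namely $W(x,t,\bar t,s,\Lambda)\Lambda^r W^{-1}(x,t',\bar t',s',\Lambda)=\bar W(x,t,\bar t,s,\Lambda)\Lambda^{-r}\bar W^{-1}(x,t',\bar t',s',\Lambda)$ with $s_0=s_0'$, which is what feeds into Proposition \ref{wave-operators} and the tau-function machinery. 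Proving that version is where the work lies: the paper proceeds by induction on the order of the mixed derivative $\d^\theta$ in $t,\bar t,s$, establishing \eqref{2.7} by using that the Sato equations evolve $W$ and $\bar W$ by operators whose difference is exactly compensated on the two sides (e.g.\ both evolve by $(B_{jk})_+$ in $t_{jk}$ and by $-(\bar B_{jk})_-$ in $\bar t_{jk}$), and then Taylor-expands in $t'-t$, $\bar t'-\bar t$, $s'-s$. Your proposal contains none of this; the induction over time derivatives is the actual content of the necessity direction and cannot be replaced by raising \eqref{Lax} to the $r$-th power.

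The same misreading sinks your sufficiency plan. From the equal-time identity alone, your degree-count correctly pins down $\L=\Lambda+u_1+u_2\Lambda^{-1}$, but an equal-time operator identity places no constraint whatsoever on how $S$ and $\bar S$ depend on $t,\bar t,s$: any time-dependent pair $S\in G_-$, $\bar S\in G_+$ satisfying $S\Lambda S^{-1}=\bar S\Lambda^{-1}\bar S^{-1}$ at each fixed time satisfies your hypotheses while failing the Sato equations. Repackaging such an identity as a residue pairing of $\psi$ and $\bar\psi$ and splitting via \eqref{splitting} cannot manufacture the missing time-derivative information, so the claimed derivation of \eqref{satoSt}--\eqref{satoSs} cannot go through. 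The correct converse differentiates the two-time identity with respect to a primed time at $t'=t$, $\bar t'=\bar t$, $s'=s$, and only then does the $\g_+\oplus\g_-$ projection argument (which the paper states tersely) yield the Sato equations. You would need to restate the hypothesis in its two-time form before either direction of your argument can be repaired.
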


\begin{proof}
$\Rightarrow$ Firstly we will set

\begin{align}
\ga&=(\ga_{0,1},\ga_{1,1},\ga_{2,1},\ldots;\ga_{0,2},\ga_{1,2},\ga_{2,2},\ldots;\ldots;\ga_{0,N},\ga_{1,N},\ga_{2,N},\ldots;),\\
\bar \ga&=(\bar \ga_{0,1},\bar \ga_{1,1},\bar \ga_{2,1},\ldots;\bar \ga_{0,2},\bar \ga_{1,2},\bar \ga_{2,2},\ldots;\ldots;\bar \ga_{0,N},\bar \ga_{1,N},\bar \ga_{2,N},\ldots;),\\
 \beta&=(\beta_{1},\beta_{2},\ldots), \end{align}
be a multi index and
\begin{align}
\d^\ga:&=\d_{t_{0,1}}^{\ga_{0,1}}\d_{t_{1,1}}^{\ga_{1,1}}\d_{t_{2,1}}^{\ga_{2,1}}\ldots;
\d_{t_{0,2}}^{\ga_{0,2}}\d_{t_{1,2}}^{\ga_{1,2}}\d_{t_{2,2}}^{\ga_{2,2}}\ldots;\ldots;
\d_{t_{0,N}}^{\ga_{0,N}}\d_{t_{1,N}}^{\ga_{1,N}}\d_{t_{2,N}}^{\ga_{2,N}}\ldots\ ,\\
\d^{\bar \ga}:&=\d_{\bar t_{0,1}}^{\bar \ga_{0,1}}\d_{\bar t_{1,1}}^{\bar \ga_{1,1}}\d_{\bar t_{2,1}}^{\bar \ga_{2,1}}\ldots;
\d_{\bar t_{0,2}}^{\bar \ga_{0,2}}\d_{\bar t_{1,2}}^{\bar \ga_{1,2}}\d_{\bar t_{2,2}}^{\bar \ga_{2,2}}\ldots;\ldots;
\d_{\bar t_{0,N}}^{\bar \ga_{0,N}}\d_{\bar t_{1,N}}^{\bar \ga_{1,N}}\d_{\bar t_{2,N}}^{\bar \ga_{2,N}}\ldots\ ,\\
\d^\beta:&=\d_{s_{1}}^{\beta_{1}}\d_{s_{2}}^{\beta_{2}}\ldots .
\end{align}
Then we suppose $\d^{\theta}=\d^{\alpha}\d^{\bar\ga}\d^{\beta}$ ( we stress that $\d_{s_{0}}$ is not involved). We shall prove the left
statement leads to
\begin{eqnarray} \label{HBE2} W (x,t,\bar t,\Lambda)\Lambda^r
W^{-1}(x,t',\Lambda) = \bar W (x,t,\bar t,\Lambda)
\Lambda^{-r}\bar W^{-1}(x,t',\bar t',\Lambda)
\end{eqnarray}
 for all integers $r\geq 0$.
Using the same method used in\cite{M,ourJMP}, by induction on $\theta $,
we shall prove that
\begin{equation} \label{2.7}
W(x,t,\bar t,\Lambda)\Lambda^r(\d^\theta W^{-1}(x,t,\bar t,\Lambda))
=\bar W(x,t,\bar t,\Lambda) \Lambda^{-r}(\d^\theta\bar W^{-1}(x,t,\bar t,\Lambda)).
\end{equation} When $\theta=0$, it is obviously true according to the definition of
matrix wave operators.\\
Here firstly we suppose eq.\eqref{2.7} is true in the case of $\theta\neq 0$.
 Note that

\begin{equation}
\notag\epsilon\partial_{p_{jk}}
W :=
\begin{cases}
 [(\d_{t_{jk}}S)S^{-1}+SE_{kk}  \Lambda^jS^{-1}]W,
&p_{jk}=t_{jk},\\
 (\d_{\bar t_{jk}}S)S^{-1}W,& p_{jk}=\bar t_{jk},\\
[(\d_{s_{j}}S)S^{-1}+S  \Lambda^j\d_xS^{-1}]W,  &p_{jk}=s_{j},
  \end{cases}
\end{equation}
and
\begin{equation}
\notag\epsilon\partial_{p_{jk}}
\bar W :=
\begin{cases}
 (\d_{t_{jk}}\bar S)\bar S^{-1}\bar W,
&p_{jk}=t_{jk},\\
 [(\d_{\bar t_{jk}}\bar S)\bar S^{-1}+\bar SE_{kk}  \Lambda^{-j}\bar S^{-1}]\bar W,& p_{jk}=\bar t_{jk},\\
[(\d_{s_{j}}\bar S)\bar S^{-1}+\bar S \Lambda^{-j}\d_x\bar S^{-1}]\bar W,  &p_{jk}=s_{j},
  \end{cases}
\end{equation}
which further lead to \\
\begin{equation}
\notag\epsilon\partial_{p_{jk}}
W :=
\begin{cases}
 (B_{jk})_+W,
&p_{jk}=t_{jk},\\
 -(\bar B_{jk})_-W,& p_{jk}=\bar t_{jk},\\
[-(D_{j})_-+\frac{\L^j}{j!}(\log_+ \L-c_j)]W,  &p_{jk}=s_{j},
  \end{cases}
\end{equation}
and
\begin{equation}
\notag\epsilon\partial_{p_{jk}}
\bar W :=
\begin{cases}
 (B_{jk})_+\bar W,
&p_{jk}=t_{jk},\\
 -(\bar B_{jk})_-\bar W,& p_{jk}=\bar t_{jk},\\
[(D_{j})_+-\frac{\L^j}{j!}(\log_- \L-c_j)]\bar W,  &p_{jk}=s_{j}.
  \end{cases}
\end{equation}

This further implies
\begin{equation} \notag (\d_{p_{jk}}W)\Lambda^{r}(\d^\theta W^{-1}) =
(\d_{p_{jk}}\bar W)\Lambda^{-r}(\d^\theta\bar W^{-1}) \end{equation}
by considering \eqref{2.7} and furthermore we  get  \begin{equation} \notag
W\Lambda^r(\d_{p_{jk}}\d^\theta W^{-1} )=
\bar W\Lambda^{-r}(\d_{p_{jk}}\d^\theta\bar W^{-1}).
\end{equation} Thus if we increase the power of  $\d_{p_{jk}}$ by 1,
the eq.\eqref{2.7} still holds true.
 The induction is completed.
After doing Taylor expansion on both sides of eq.\eqref{HBE2} about $t=t',\bar t=\bar t',s=s'$, one can finish
the proof of eq.\eqref{HBE2}.

$\Leftarrow$ Vice versa, by separating the negative and the positive projection part about powers of shift operator $\Lambda$
in the equation \eqref{HBEW}, we can prove
$S, \ \bar S$ are a pair of matrix wave operators.
\end{proof}

To give a description in terms of matrix wave functions, following symbolic definitions are needed.
If the wave operator series have forms
\begin{eqnarray*} W(x,t,\bar t,s,\Lambda)=\sum_{i\in \Z} a_i(x,t,\bar t,s,
\d_x)\Lambda^i \mbox{ and } \bar W(x,t,\bar t,s,\Lambda)=\sum_{i\in \Z}
b_i(x,t,\bar t,s, \d_x)\Lambda^{i}, \end{eqnarray*}

\begin{eqnarray*}  W^{-1}(x,t,\bar t,s,\Lambda)=\sum_{i\in \Z}\Lambda^{i} a_i'(x,t,\bar t,s,
\d_x) \mbox{ and } \bar W^{-1}(x,t,\bar t,s,\Lambda)=\sum_{j\in
\Z}\Lambda^{j}b_j'(x,t,\bar t,s, \d_x),  \end{eqnarray*} then we denote their corresponding
left symbols $\W$,  $\bar \W$ and right symbols $\W^{-1}$, $\bar \W^{-1}$
as following
\begin{eqnarray*}
&&  \W(x,t,\bar t,s,\lambda) =\sum_{i\in \Z} a_i(x,t,\bar t,s,
\d_x)\lambda^i,\ \  \W^{-1}(x,t,\bar t,s,\lambda)=  \sum_{i\in \Z} a_i'(x,t,\bar t,s,
\d_x)\lambda^{i},\\
&&
\bar \W(x,t,\bar t,s,\lambda) =\sum_{i\in \Z} b_i(x,t,\bar t,s,
\d_x)\lambda^{i},\ \ \bar \W^{-1}(x,t,\bar t,s,\bar t,\lambda)=\sum_{j\in
\Z}b_j'(x,t,\bar t,s, \d_x)\lambda^{j}.
\end{eqnarray*}
With above preparation, it is time to give another form of Hirota bilinear equations (see the following proposition) after defining residue as $\res_{\lambda }\sum_{n\in \Z}\alpha_n \lambda^n=\alpha_{-1}$ using the similar proof as \cite{UT,M,ourJMP}.
\begin{proposition}\label{wave-operators}
Let  $s_{0} =s'_{0},$
 $S$ and
$\bar S$ are matrix-valued wave operators of the multicomponent Toda hierarchy if and only if for all   $m\in
\Z$, $r\in \N$, the following Hirota bilinear identity holds

\begin{eqnarray}  \notag &&\res_{\lambda }
 \left\{
\lambda^{r+m-1}\ \W(x,t,\bar t,s,\epsilon \partial_x,\lambda) \W^{-1}(x-m\epsilon,t',\bar t',s, \epsilon \partial_x,\lambda)
\right\} = \\ \label{HBE3}&& \res_{\lambda }
 \left\{
\lambda^{-r+m-1}\bar \W( x,t,\bar t,s,\epsilon \partial_x,\lambda )\
\bar \W^{-1}(x-m\epsilon,t',\bar t',s',\epsilon \partial_x,\lambda) \right\}.
\end{eqnarray}
\end{proposition}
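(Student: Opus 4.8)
The plan is to reduce the residue identity \eqref{HBE3} to the operator Hirota bilinear equation \eqref{HBEW} of Proposition \ref{HBEoper} by means of a symbol--residue dictionary: the coefficient of a fixed power of $\Lambda$ in a difference--differential operator is exactly a $\lambda$-residue of the product of the corresponding left and right symbols, provided the spatial argument of the second factor is shifted to compensate. First I would record the commutation rule $\Lambda^k a(x,\epsilon\partial_x)=a(x+k\epsilon,\epsilon\partial_x)\Lambda^k$, which holds because $\Lambda$ commutes with $\partial_x$; this is what lets me carry the differential-operator part of the coefficients along as an inert parameter and treat everything as a difference operator in $\Lambda$ whose coefficients happen to live in differential operators in $\epsilon\partial_x$.

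Writing $W=\sum_i a_i\Lambda^i$ with left symbol $\W=\sum_i a_i\lambda^i$, and $W^{-1}=\sum_j\Lambda^j a_j'$ with right symbol $\W^{-1}=\sum_j a_j'\lambda^j$ (and analogously for $\bar W,\bar W^{-1}$), I would compute
\begin{equation*}
W\Lambda^r W^{-1}=\sum_{i,j} a_i(x)\,a_j'\big(x+(i+r+j)\epsilon\big)\,\Lambda^{i+r+j},
\end{equation*}
read off the coefficient of $\Lambda^{-m}$ (the condition $i+r+j=-m$), and compare it term by term with the right-hand side of the dictionary
\begin{equation*}
\big[W\Lambda^r W^{-1}\big]_{\Lambda^{-m}}(x)=\res_{\lambda}\big\{\lambda^{r+m-1}\,\W(x,\epsilon\partial_x,\lambda)\,\W^{-1}(x-m\epsilon,\epsilon\partial_x,\lambda)\big\}.
\end{equation*}
The index bookkeeping $i+r+j=-m$ makes the spatial argument of $a_j'$ equal to $x-m\epsilon$, which is exactly the shift appearing in \eqref{HBE3}, so the residue reproduces the coefficient precisely; the same computation with $\Lambda^{-r}$ gives the barred identity carrying $\lambda^{-r+m-1}$.

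With this dictionary both implications are short. For the forward direction, assuming $S,\bar S$ are wave operators, Proposition \ref{HBEoper} together with its two-time refinement \eqref{HBE2} gives $W\Lambda^r W^{-1}=\bar W\Lambda^{-r}\bar W^{-1}$; equating $\Lambda^{-m}$ coefficients and applying the dictionary yields \eqref{HBE3} for all $m\in\Z$, $r\in\N$. For the converse, reading the dictionary backwards shows that \eqref{HBE3} for all $m$ forces every $\Lambda^{-m}$ coefficient of $W\Lambda^r W^{-1}$ and $\bar W\Lambda^{-r}\bar W^{-1}$ to coincide, i.e. \eqref{HBEW}, whence Proposition \ref{HBEoper} returns the wave-operator property. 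The hypothesis $s_0=s_0'$ is exactly what legitimizes the reduction: since $\partial_{s_0}=\partial_x$ is the spatial flow, $s_0$ is tied to the variable $x$, and keeping $s_0=s_0'$ guarantees that the two factors differ in spatial argument only by the prescribed $-m\epsilon$ rather than by an extraneous $s_0-s_0'$.

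The hard part is not the algebra but the formal well-definedness of the residue. Because $W=SW_0$ and $W^{-1}=W_0^{-1}S^{-1}$ are bi-infinite in $\Lambda$, and because the symbols carry the exponential prefactors $\psi_0,\bar\psi_0$ of \eqref{baker-asymp}, the coefficient sums above are a priori infinite. I would dispose of this exactly as in \cite{UT,M,ourJMP}: set the second-factor times equal to the first, $t=t'$ and $\bar t=\bar t'$, so that the exponential factors cancel and each residue collapses to a genuinely finite combination, and then recover the full two-time identity by Taylor expansion in $t-t'$ and $\bar t-\bar t'$. This is the step demanding the most care, since it is where the asymptotic structure \eqref{baker-asymp} of the wave functions, rather than purely formal manipulation, must be invoked.
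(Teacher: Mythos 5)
Your proposal is correct and follows essentially the same route as the paper: the paper's proof likewise compares the coefficients of $\Lambda^{-m}$ on both sides of the two-time operator identity \eqref{HBE2} and observes that the resulting sums $\sum_{i+j=-m-r}a_i(x)a_j'(x-m\epsilon)=\sum_{i+j=-m+r}b_i(x)b_j'(x-m\epsilon)$ are precisely the residues in \eqref{HBE3}. Your index bookkeeping (including the $x-m\epsilon$ shift) matches the paper's, and your added remarks on the commutation rule and on finiteness of the residue only make explicit what the paper leaves implicit.
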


\begin{proof}
 Let $m\in \Z$, $r\in \N$ and $s_{0} = s'_{0}$, then one can compare the
coefficients in front of $\Lambda^{-m}$ on both sides of eq.\eqref{HBE2} and find:
\begin{eqnarray*} \sum_{i+j=-m-r} a_i(x,t,\bar t,s, \d_x)a_j'(x-m\epsilon,t',\bar t',s',
\d_x) = \sum_{i+j=-m+r}  b_i(x,t,\bar t,s,\d_x)b_j'(x-m\epsilon,t',\bar t',s', \d_x).
\end{eqnarray*} This equality can be also rewritten  as eq.\eqref{HBE3}.

\end{proof}

To give Hirota quadratic functions in terms of tau functions, we need to define  tau functions and prove the existence of tau functions of the EMTH.

\section{Tau-functions of EMTH}
Firstly, we need to introduce the following sequences:
\[t-[\lambda] &:=& \left(t_{jk}-
  \frac{ \epsilon\lambda^j}{j}, 0\leq j\leq \infty,1\leq k\leq N\right), \ \ \\
  \bar t-[\lambda] &:=& \left(\bar t_{jk}-
  \frac{ \epsilon\lambda^j}{j}, 0\leq j\leq \infty,1\leq k\leq N\right).
\]
Referring to the Proposition 35 in \cite{manas}, the following tau function can be defined.
The matrix functions $\tau_M,\bar \tau_M$  depending only on the dynamical variables $t,\bar t$ and
$\epsilon$ are called the  {\em \bf Matrix tau-functions of the EMTH} if they
are related to the symbols of the matrix wave operators as following,
\begin{eqnarray}\label{pltaukk}(\bS)_{kk}: &=&\frac{ \tau
(s_{0,k}+x-\frac{\epsilon}{2}, t_{js}-\delta_{s,k}\frac{\epsilon}{j\lambda^j},\bar t,s;\epsilon) }
     {\tau (s_{0,k}+x-\frac{\epsilon}{2},t,\bar t,s;\epsilon)},\\ \label{pltausk}
     (\bS)_{sk}: &=&\lambda^{-1}\frac{ \tau_{sk}
(s_{0,k}+x-\frac{\epsilon}{2}, t_{jr}-\delta_{r,k}\frac{\epsilon}{j\lambda^j},\bar t,s;\epsilon) }
     {\tau (s_{0,k}+x-\frac{\epsilon}{2},t,\bar t,s;\epsilon)},\ s\neq k,\\
     \label{pl-1taukk}(\bS^{-1})_{kk}: &=&\frac{ \tau
(s_{0,k}+x+\frac{\epsilon}{2}, t_{js}+\delta_{s,k}\frac{\epsilon}{j\lambda^j},\bar t,s;\epsilon) }
     {\tau (s_{0,k}+x+\frac{\epsilon}{2},t,\bar t,s;\epsilon)},\\ \label{pl-1tausk}
     (\bS^{-1})_{sk}: &=&\lambda\frac{ \tau_{sk}
(s_{0,k}+x+\frac{\epsilon}{2}, t_{jr}+\delta_{r,k}\frac{\epsilon}{j\lambda^j},\bar t,s;\epsilon) }
     {\tau (s_{0,k}+x+\frac{\epsilon}{2},t,\bar t,s;\epsilon)},\ s\neq k,\\\label{prtaukk}
(\bar \bS)_{kk}:&= &\frac{ \tau_{kk}
(s_{0,k}+x+\frac{\epsilon}{2},t,\bar t_{js}+\delta_{s,k}\frac{\epsilon\lambda^j}{j},s;\epsilon)}
     {\tau(s_{0,k}+x-\frac{\epsilon}{2},t,\bar t,s;\epsilon)},\\\label{prtausk}
     (\bar \bS)_{sk}:&= &\frac{ \bar \tau_{sk}
(s_{0,k}+x+\frac{3\epsilon}{2},t,\bar t_{js}+\delta_{s,k}\frac{\epsilon\lambda^j}{j},s;\epsilon)}
     {\tau(s_{0,k}+x+\frac{\epsilon}{2},t,\bar t,s;\epsilon)},\ s\neq k,\\
     (\bar \bS^{-1})_{kk}:&= &\frac{\tau_{kk}
(s_{0,k}+x-\frac{\epsilon}{2},t,\bar t_{js}-\delta_{s,k}\frac{\epsilon\lambda^j}{j},s;\epsilon)}
     {\tau(s_{0,k}+x+\frac{\epsilon}{2},t,\bar t,s;\epsilon)},\\\label{prtausk}
     (\bar \bS^{-1})_{sk}:&= &\frac{\bar  \tau_{sk}
(s_{0,k}+x+\frac{\epsilon}{2},t,\bar t_{js}-\delta_{s,k}\frac{\epsilon\lambda^j}{j},s;\epsilon)}
     {\tau(s_{0,k}+x+\frac{3\epsilon}{2},t,\bar t,s;\epsilon)},\ s\neq k.
     \end{eqnarray}
For convenience, we denote two matrices $\tau_M$ and $\bar \tau_M$ as
\begin{eqnarray} \label{taum}
(\tau_M)_{ij}=
\begin{cases}
\vspace{.1in}
\tau_{ii}=\tau,\ \ i=j\\
\vspace{.1in}
\tau_{ij}, \ \ i\neq j,
\end{cases}
\ \ \
(\bar \tau_M)_{ij}=
\begin{cases}
\vspace{.1in}
\bar \tau_{ii},\ \ i=j\\
\vspace{.1in}
\bar \tau_{ij}, \ \ i\neq j.
\end{cases}
\end{eqnarray}
Then we can rewrite the definition of tau functions as
\begin{eqnarray}\label{Mpltaukk}\bS: &=&\frac{ \tau_M
(s_{0,k}+x-\frac{\epsilon}{2}, t_{js}-\delta_{s,k}\frac{\epsilon}{j\lambda^j},\bar t,s;\epsilon) }
     {\tau (s_{0,k}+x-\frac{\epsilon}{2},t,\bar t,s;\epsilon)},\\
     \label{Mpl-1taukk}\bS^{-1}: &=&\frac{ \tau_M
(s_{0,k}+x+\frac{\epsilon}{2}, t_{js}+\delta_{s,k}\frac{\epsilon}{j\lambda^j},\bar t,s;\epsilon) }
     {\tau (s_{0,k}+x+\frac{\epsilon}{2},t,\bar t,s;\epsilon)},\\ \label{Mprtaukk}
\bar \bS:&= &\frac{ \bar \tau_M
(s_{0,k}+x+\frac{\epsilon}{2},t,\bar t_{js}+\delta_{s,k}\frac{\epsilon\lambda^j}{j},s;\epsilon)}
     {\tau(s_{0,k}+x-\frac{\epsilon}{2},t,\bar t,s;\epsilon)},\\
     \bar \bS^{-1}:&= &\frac{\bar \tau_M
(s_{0,k}+x-\frac{\epsilon}{2},t,\bar t_{js}-\delta_{s,k}\frac{\epsilon\lambda^j}{j},s;\epsilon)}
     {\tau(s_{0,k}+x+\frac{\epsilon}{2},t,\bar t,s;\epsilon)}.
     \end{eqnarray}
To give the existence of tau function, we need the following two lemmas.
\begin{lemma}The following identities about $\bS(x,t,\lambda),\bS^{-1}(x,t,\lambda),\bar \bS(x,t,\lambda),\bar \bS^{-1}(x,t,\lambda)$ hold
\begin{eqnarray}
&&\bS(x,\t,\lambda_{1})_{kk}\bS^{-1}(x,t-[\lambda_1^{-1}]_k,\bar t+[\lambda_{2}]_k,\lambda_{1})_{kk}\\ \notag
\label{j1}
&&= \bar \bS(x,\t,\lambda_{2})_{kk} \bar \bS^{-1}(x,t-[\lambda_1^{-1}]_k,\bar t+[\lambda_{2}]_k,\lambda_{2})_{kk}+\sum_{s\neq k}\tilde \omega_0(x,\t)_{ks}\tilde \omega_0^{-1}(x,t-[\lambda_1^{-1}]_{k},\bar t+[\lambda_2]_{k})_{sk},\\
\label{j2} &&
\bS(x,\t,\lambda)_{kk}\bS^{-1}(x-\epsilon,t-[\lambda^{-1}]_k,\bar t,\lambda)_{kk} = 1,\\
\notag &&
\bS(x,\t,\lambda_1)_{kk}\bS^{-1}(x-\epsilon,t-[\lambda_1^{-1}]_k-[\lambda_2^{-1}]_k,\bar t,\lambda_1)_{kk}\\
\label{j3}&&
=\bS(x,\t,\lambda_2)_{kk}\bS^{-1}(x-\epsilon,t-[\lambda_1^{-1}]_k-[\lambda_2^{-1}]_k,\lambda_2)_{kk},\\
&&
 \bar \bS(x,\t,\lambda_1)_{kk} \bar \bS^{-1}(x+\epsilon,t,\bar t+[\lambda_1]_k+[\lambda_2]_k,\lambda_1)_{kk}\\ \notag
 \label{j4} &&
= \bar \bS(x,\t,\lambda_2)_{kk} \bar \bS^{-1}(x+\epsilon,t,\bar t+[\lambda_1]_k+[\lambda_2]_k,\lambda_2)_{kk}+(\lambda_2-\lambda_1)\tilde \omega'_1(x+\epsilon,\bar t+[\lambda_1]_{k}+[\lambda_2]_{k})_{kk},  \\
\label{i5} &&  \bar \bS(x,\t,\lambda)_{kk} \bar \bS^{-1}(x,t,\bar t+[\lambda]_k,\lambda)_{kk}+\sum_{s\neq k}\tilde \omega_0(x,\t)_{ks}\tilde \omega_0^{-1}(x,t,\bar t+[\lambda]_{k})_{sk} =1.
\end{eqnarray}
\end{lemma}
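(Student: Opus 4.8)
The plan is to obtain all five identities \eqref{j1}--\eqref{i5} from the single master relation of Proposition~\ref{wave-operators}, the residue identity \eqref{HBE3}, by specializing the free times $t',\bar t'$ to Miwa shifts and then evaluating the residue. Writing the integration variable as $\mu$ and using the asymptotic expansions \eqref{baker-asymp}, I would factor each wave-operator symbol into a dressing symbol $\bS,\bS^{-1},\bar\bS,\bar\bS^{-1}$ times the diagonal exponential factor carried by $\psi_0,\bar\psi_0$. The only analytic input is the elementary Miwa computation: subtracting $[\lambda_1^{-1}]_k$ from the $k$-th block of $t$ collapses the $k$-th diagonal exponential ratio to a Miwa generating factor that localizes the $\mu$-residue at $\mu=\lambda_1$, and adding $[\lambda_2]_k$ to $\bar t$ localizes it at $\mu=\lambda_2$ on the barred side; the shift parameters thereby reappear as the free spectral arguments $\lambda,\lambda_1,\lambda_2$ in the final identities. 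One also needs the dictionary $x\mapsto x-m\epsilon$ coming from the second argument of \eqref{HBE3}: the choice $m=1$ produces the $x-\epsilon$ of \eqref{j2} and \eqref{j3}, the choice $m=0$ the common $x$ of \eqref{j1} and \eqref{i5}, and $m=-1$ the $x+\epsilon$ of \eqref{j4}.

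First I would treat the two ``unperturbed'' identities \eqref{j2} and \eqref{j3}, which involve only $\bS$. Since $\bS=\I_N+\omega_1\mu^{-1}+\cdots$ has the \emph{diagonal} leading coefficient $\I_N$, no off-diagonal remainder can survive, which is why these carry no correction term. For \eqref{j2} I keep $\bar t'=\bar t$ and shift $t'=t-[\lambda^{-1}]_k$ with $m=1$: the unshifted barred exponentials cancel so the barred side reduces to a nonnegative-power series whose residue vanishes for a suitable $r$, and the $(kk)$-projection then reads off $\bS\,\bS^{-1}=\I_N$ in symbol form at the Miwa-matched argument, giving exactly $\bS(x)_{kk}\bS^{-1}(x-\epsilon)_{kk}=1$. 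For \eqref{j3} I use the double shift $t'=t-[\lambda_1^{-1}]_k-[\lambda_2^{-1}]_k$ with $m=1$; the barred side again drops out, and the invariance of the doubly-shifted argument under $\lambda_1\leftrightarrow\lambda_2$, combined with the vanishing residue, produces the stated equality between the $\lambda_1$- and $\lambda_2$-evaluations.

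Next I would handle \eqref{j1}, \eqref{i5} and \eqref{j4}, where the correction is forced because $\bar\bS=\bar\omega_0+\bar\omega_1\mu+\cdots$ has the \emph{full} (non-diagonal) leading coefficient $\bar\omega_0=\tilde\omega_0$. For \eqref{i5} I shift only $\bar t'=\bar t+[\lambda]_k$ with $m=0$ and project $(\bar\bS\,\bar\bS^{-1})_{kk}=\sum_s\bar\bS_{ks}\bar\bS^{-1}_{sk}$ onto its $(kk)$-entry: the $s=k$ summand carries the Miwa localization and yields $\bar\bS_{kk}\bar\bS^{-1}_{kk}$, while the unshifted $s\neq k$ summands contribute their leading values $\sum_{s\neq k}\tilde\omega_0(x)_{ks}\tilde\omega_0^{-1}(x)_{sk}$, the whole equalling $1$ by $\bar S\,\bar S^{-1}=\I_N$. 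For \eqref{j1} I shift $t'=t-[\lambda_1^{-1}]_k$ on the unbarred side and $\bar t'=\bar t+[\lambda_2]_k$ on the barred side with $m=0$, so the localization at $\mu=\lambda_1$ delivers $\bS(x,\lambda_1)_{kk}\bS^{-1}(x,\lambda_1)_{kk}$, that at $\mu=\lambda_2$ delivers $\bar\bS(x,\lambda_2)_{kk}\bar\bS^{-1}(x,\lambda_2)_{kk}$, and the off-diagonal leading remainder of the barred side reproduces the correction sum. For \eqref{j4} I take the double barred shift $\bar t'=\bar t+[\lambda_1]_k+[\lambda_2]_k$ with $m=-1$; here both localizations sit on the barred side, and the antisymmetric combination of the two contributions produces the weight $(\lambda_2-\lambda_1)$ multiplying the subleading coefficient $\tilde\omega'_1(x+\epsilon)_{kk}$.

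The exponential bookkeeping and the choices of $(m,r)$ are routine; the genuine obstacle is the matrix (component) bookkeeping inside the residue. For each $(m,r)$ and each Miwa specialization one must track precisely which powers of $\mu$ survive and onto which coefficient the residue lands, so that the corrections emerge with the correct index pattern $(ks)$ and $(sk)$, the correct integer arguments $x$, $x\pm\epsilon$, and the correct weights $1$ and $(\lambda_2-\lambda_1)$; separating the localized diagonal products from the off-diagonal and subleading remainders cleanly is the delicate point. Once this is done, each of \eqref{j1}--\eqref{i5} follows, and the argument is the adaptation of Proposition~35 of \cite{manas} to the present logarithmically extended, $\epsilon$-dependent setting.
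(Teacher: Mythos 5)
Your proposal is correct and follows essentially the same route as the paper: the paper obtains this family of identities (in the generalized, off-diagonal form of its second lemma) from the two-time bilinear identity \eqref{chan}, specialized by Miwa shifts such as $t'_{jk}=t_{jk}-[\lambda_1^{-1}]_k+[\lambda_2]_k$ with the resulting exponential factors expanded as the geometric series $(1-\lambda^{-1}\Lambda)^{-1}$ and $(1-\lambda\Lambda^{-1})^{-1}$ --- which is exactly your residue-localization of \eqref{HBE3}, since \eqref{HBE3} is just the coefficient-extraction form of that same operator identity. Like you, the paper leaves the detailed component and shift bookkeeping implicit, deferring to \cite{M} and \cite{manas}.
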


In the following part, we sometimes denote $\t$ as $(t,\bar t)$ for short.
\begin{lemma}The following identities about $\bS(x,t,\lambda),\bS^{-1}(x,t,\lambda),\bar \bS(x,t,\lambda),\bar \bS^{-1}(x,t,\lambda)$ hold
\label{identities'}
\begin{eqnarray} \label{lr}
&&\bS(x,t,\lambda_{1})_{ik}\bS^{-1}(x,t-[\lambda_1^{-1}]_k,\bar t+[\lambda_{2}]_k,\lambda_{1})_{kj}+\delta_{ij}-\delta_{ik}\delta_{kj}\\
\notag
&&= \bar \bS(x,t,\lambda_{2})_{ik} \bar \bS^{-1}(x,t-[\lambda_1^{-1}]_k,\bar t+[\lambda_{2}]_k,\lambda_{2})_{kj}+\sum_{s\neq k}\tilde \omega_0(x,\t)_{is}\tilde \omega_0^{-1}(x,t-[\lambda_1^{-1}]_{k},\bar t+[\lambda_2]_{k})_{sj},\\
\notag &&
\omega_1(x,\t)_{ij}+\delta_{ik}\omega_1(x,t-[\lambda_1^{-1}]_k-[\lambda_2^{-1}]_k)_{kj}+\bS(x,\t,\lambda_1)_{ik}\bS^{-1}(x-\epsilon,t-[\lambda_1^{-1}]_k-[\lambda_2^{-1}]_k,\bar t,\lambda_1)_{kj}\\ \notag
\label{i2}&&
=\delta_{jk}\omega_1(x,\t)_{ik}+\omega_1(x,t-[\lambda_1^{-1}]_k-[\lambda_2^{-1}]_k)_{kj}+\bS(x,\t,\lambda_2)_{ik}\bS^{-1}(x-\epsilon,t-[\lambda_1^{-1}]_k-[\lambda_2^{-1}]_k,\lambda_2)_{kj},\\
\notag
 &&\bar \bS(x,\t,\lambda_1)_{rk}\bar \bS^{-1}(x+\epsilon,\bar t+[\lambda_1]_{k}+[\lambda_2]_{k},\lambda_1)_{kk}-\bar \bS(x,\t,\lambda_2)_{rk}\bar \bS^{-1}(x+\epsilon,\bar t+[\lambda_1]_{k}+[\lambda_2]_{k},\lambda_2)_{kk}\\ \notag
&&=(\lambda_2-\lambda_1)[\sum_{s\neq k}\tilde \omega_1(x,t)_{rs}\tilde \omega_0^{-1}(x+\epsilon,\bar t+[\lambda_1]_{k}+[\lambda_2]_{k})_{sk}-\sum_{s\neq k}\tilde \omega_0(x,t)_{rs}\tilde \omega'_1(x,\bar t+[\lambda_1]_{k}+[\lambda_2]_{k})_{sk}],  \\
\label{i4'} &&  \bar \bS(x,\t,\lambda)_{ik} \bar \bS^{-1}(x,t,\bar t+[\lambda]_k,\lambda)_{kj}+\sum_{s\neq k}\tilde \omega_0(x,\t)_{is}\tilde \omega_0^{-1}(x,t,\bar t+[\lambda]_{k})_{sj} =
\delta_{ij},\\ \label{plx}
&&\bS(x,\t,\lambda_1)_{ss}\bS^{-1}(x,\t-[\lambda_1^{-1}]_{s},\lambda_1)_{ss}=\sum_{k}\tilde \omega_0(x,\t)_{sk}\tilde \omega_0^{-1}(x,t-[\lambda_1^{-1}]_{k},\bar t)_{ks}.\end{eqnarray}
\end{lemma}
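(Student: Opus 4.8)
The plan is to derive all five identities from the single Hirota bilinear identity of Proposition~\ref{wave-operators}, equation~\eqref{HBE3}, by specializing the auxiliary times $t',\bar t'$ to Miwa-shifted copies of $t,\bar t$; the identities are the matrix Fay-type consequences of that residue formula. The mechanism is standard: using the asymptotic forms \eqref{baker-asymp}, every symbol carries a bare exponential factor from $\psi_0$ or $\bar\psi_0$, and a colour-$k$ shift $t\to t-[\lambda_1^{-1}]_k$ makes the product $\W(t)\W^{-1}(t')$, restricted to the intermediate colour $k$, telescope to a single Miwa factor $(1-\lambda/\lambda_1)^{-1}$, so that the integrand develops a pole at $\lambda=\lambda_1$; dually, a shift $\bar t\to\bar t+[\lambda_2]_k$ produces the localizing behaviour at $\lambda=\lambda_2$ on the barred side. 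Thus the formal $\res_\lambda$ in \eqref{HBE3} becomes a genuine contour residue that localizes, collapsing the two sides to algebraic expressions in the symbols evaluated at $\lambda_1$ and $\lambda_2$.

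First I would treat \eqref{lr}, the off-diagonal analogue of \eqref{j1}: take $m=0$ and the shifts $t'=t-[\lambda_1^{-1}]_k$, $\bar t'=\bar t+[\lambda_2]_k$ in \eqref{HBE3}. On the unbarred side only the intermediate colour $l=k$ carries a Miwa factor, so its term localizes at $\lambda=\lambda_1$ and gives $\bS(x,t,\lambda_1)_{ik}\bS^{-1}(x,t-[\lambda_1^{-1}]_k,\bar t+[\lambda_2]_k,\lambda_1)_{kj}$, while the colours $l\neq k$ have no pole and, after the residue and the coincidence relation $\W\W^{-1}=I$, reorganize into the inhomogeneous term $\delta_{ij}-\delta_{ik}\delta_{kj}$. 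On the barred side the shift $\bar t+[\lambda_2]_k$ localizes the $l=k$ term at $\lambda=\lambda_2$, producing $\bar\bS(\lambda_2)_{ik}\bar\bS^{-1}(\lambda_2)_{kj}$, while the non-monic leading coefficient $\bar\omega_0$ of $\bar S$ (see \eqref{expansion-S}) leaves the remainder $\sum_{s\neq k}\tilde\omega_0(x,t,\bar t)_{is}\tilde\omega_0^{-1}(\ldots)_{sj}$. Setting $i=j=k$ recovers \eqref{j1}, which exhibits \eqref{lr} as the full-matrix refinement of Lemma~1.

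The remaining identities come from \eqref{HBE3} under different specializations. For \eqref{i2} I would take $m=1$ (the $x\to x-\epsilon$ shift) with the double unbarred shift $t'=t-[\lambda_1^{-1}]_k-[\lambda_2^{-1}]_k$ and $\bar t'=\bar t$, so the pole-free barred side drops out and only $\bS,\bS^{-1}$ and the subleading coefficients $\omega_1$ of $S$ survive. The $rk$-identity and \eqref{i4'} are the barred analogues, obtained by shifting $\bar t$ by $[\lambda_1]_k+[\lambda_2]_k$ (respectively $[\lambda]_k$) and reading off the leading coefficient $\bar\omega_0=\operatorname{e}^{\phi}$ and its neighbour $\bar\omega_1$; the prefactor $\lambda_2-\lambda_1$ is the residue of the difference of the two Miwa factors attached at $\lambda_1$ and $\lambda_2$. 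Finally \eqref{plx} follows from a single unbarred shift $t'=t-[\lambda_1^{-1}]_s$, $\bar t'=\bar t$: the unbarred side localizes at $\lambda_1$ to $\bS(\lambda_1)_{ss}\bS^{-1}(\lambda_1)_{ss}$, while the pole-free barred side reduces to its $\bar\omega_0\bar\omega_0^{-1}$ leading term, which is the sum on the right. In every case the discrete count $m$ has to be matched to the half-integer arguments $x\pm\tfrac{\epsilon}{2},\,x\pm\tfrac{3\epsilon}{2}$ of the tau definitions \eqref{pltaukk}--\eqref{prtaukk}.

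I expect the main obstacle to be the bookkeeping of the residue evaluation rather than any single conceptual step: one must track exactly which terms of the Laurent expansions of the symbols survive once the Miwa factor $(1-\lambda/\lambda_1)^{-1}$ is inserted, and must separate the localized (pole) contribution from the part that assembles the $\delta$-terms, all while keeping the colour indices in place since the multicomponent structure couples the off-diagonal blocks through $\bar\omega_0$. Pinning down the signs of the $\lambda_2-\lambda_1$ prefactors, the placement of the free indices $i,j$ against the summed index $s\neq k$, and the matching of the discrete $x$-shifts with the $\pm\epsilon/2$ arguments is the delicate and error-prone part.
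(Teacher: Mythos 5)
Your proposal matches the paper's own argument: the paper likewise derives these identities from a single bilinear identity (proved ``analogously to Proposition \ref{HBEoper}'') by specializing $t',\bar t'$ to colour-$k$ Miwa shifts, whereupon the exponential connecting factor becomes the geometric series $(1-\lambda^{-1}\Lambda)^{-1}$ --- the operator-level counterpart of your pole/residue localization in \eqref{HBE3} --- and the components are then read off. The paper is equally terse about the remaining bookkeeping, deferring the other identities to the analogous scalar computation, so your sketch is at the same level of detail and follows essentially the same route.
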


\begin{proof} Let $\t$ and $\t'$ be two sequences of time variables such
that $s_{n}=s'_{n}$, $n\geq 0$. The identities
$\eqref{lr}-\eqref{i4'}$ are consequence of the following one:
\begin{eqnarray} \notag
 && S(x,\t,\Lambda)\sum_{k=1}^NE_{kk}
 \exp(\sum_{j=1}^\infty (t_{jk}-t'_{jk})\Lambda^j)
S^{-1}(x,\t',\Lambda)\\
\label{chan}
& =& \ \bar S(x,\t,\Lambda)\sum_{k=1}^NE_{kk}\exp(-\sum_{j=1}^\infty (\bar t_{jk}-\bar t'_{jk})\Lambda^{-j})
\ \bar S^{-1}(x,\t',\Lambda).
 \end{eqnarray} \\
 The proof of
\eqref{chan} is completely analogous to the argument in the
implication from left statement to the right result in Proposition \ref{HBEoper}
and it will be omitted.
 To prove eq.\eqref{lr}, in
eq.\eqref{chan} we put $t'_{jk}=t_{jk}-[\lambda_1^{-1}]_k+[\lambda_2]_k$.
  The exponential factor turns into
\begin{eqnarray*} \exp \Big(\sum_{j=0}^\infty \Lambda^j\lambda^{-j}\Big) =
 (1-\lambda^{-1}\Lambda)^{-1}=\sum_{s\geq 0}(\lambda^{-1}\Lambda)^s,
\end{eqnarray*}

\begin{eqnarray*} \exp \left(\sum_{j=0}^\infty \lambda^j\Lambda^{-j}
\right) =
 (1-\lambda\Lambda^{-1})^{-1}=\sum_{s\geq 0}(\lambda\Lambda^{-1})^s.
\end{eqnarray*}

The other identities can be proved similarly as \cite{M}.
\end{proof}

The above two lemmas can be rewritten into the following single lemma.
\begin{lemma}The following equations hold
\label{identities2s'}
\begin{eqnarray} \label{lr2}
&&\sum_{k=1}^{N}\bS(x,\t,\lambda_{1k})_{ik}\bS^{-1}(x+\epsilon,t,\bar t+[\lambda_{2k}]_k,\lambda_{1k})_{kj}= \sum_{k=1}^{N}\bar \bS(x,\t,\lambda_{2k})_{ik}\bar \bS^{-1}(x,t-[\lambda_{1k}^{-1}]_k,\bar t,\lambda_{2k})_{kj},\notag\\ \\
\label{i22}&&
\sum_{k=1}^{N}\bS(x,\t,\lambda_{1k})_{ik}\bS^{-1}(x,t-[\lambda_{2k}^{-1}]_k,\bar t,\lambda_{1k})_{kj}
=\sum_{k=1}^{N}\bS(x,\t,\lambda_{2k})_{ik}\bS^{-1}(x,t-[\lambda_{1k}^{-1}]_k,\bar t,\lambda_{2k})_{kj},\\
\notag&& \notag
 \sum_{k=1}^{N}\bar \bS(x,\t,\lambda_{1k})_{ik}\bar \bS^{-1}(x+\epsilon,t,\bar t+[\lambda_{2k}]_k,\lambda_{1k})_{kj}
=\sum_{k=1}^{N}\bar \bS(x,\t,\lambda_{2k})_{ik}\bar \bS^{-1}(x+\epsilon,t,\bar t+[\lambda_{1k}]_k,\lambda_{2k})_{kj}.\\
\end{eqnarray}
\end{lemma}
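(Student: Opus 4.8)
The plan is to obtain all three identities of Lemma~\ref{identities2s'} from the single operator bilinear identity \eqref{chan}, exactly as the preceding lemma and Lemma~\ref{identities'} were obtained from it, the only new ingredient being that the Miwa shift in each channel is allowed to carry its own spectral parameter. Recall that \eqref{chan} holds for every pair $\t,\t'$ of time sequences agreeing in the $s$-variables, and that the channel structure sits entirely in the factor $\sum_{k}E_{kk}\exp(\sum_{j}(t_{jk}-t'_{jk})\Lambda^{j})$ on the left, together with its $\bar t$-counterpart on the right. Because the projectors $E_{kk}$ decouple the $N$ channels, I may prescribe the differences $t_{jk}-t'_{jk}$ and $\bar t_{jk}-\bar t'_{jk}$ independently for each $k$; this is precisely the mechanism that turns the single parameters $\lambda_{1},\lambda_{2}$ of Lemma~\ref{identities'} into the $N$ independent parameters $\lambda_{1k},\lambda_{2k}$ appearing under the summation sign in \eqref{lr2}--\eqref{i22}.

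To produce the cross identity \eqref{lr2} I would impose, in channel $k$, the Miwa shifts $t'=t-[\lambda_{1k}^{-1}]_{k}$ and $\bar t'=\bar t+[\lambda_{2k}]_{k}$. With these choices each left exponential collapses, channel by channel, into the resolvent $\exp(\sum_{j\ge1}\tfrac1j(\lambda_{1k}^{-1}\Lambda)^{j})=(1-\lambda_{1k}^{-1}\Lambda)^{-1}$, and each right exponential into $(1-\lambda_{2k}\Lambda^{-1})^{-1}$, as already used in the proof of Lemma~\ref{identities'}. Passing from the operators $S,\bar S$ to their symbols $\bS,\bar\bS$ and comparing the coefficient of a fixed power $\Lambda^{-m}$ — equivalently taking the residue form \eqref{HBE3} — converts each operator product into a symbol product in which a power of $\Lambda$ commuted past a matrix coefficient shifts the spatial argument by a multiple of $\epsilon$; this is the origin of the asymmetric arguments $(x+\epsilon)$ on the $\bS^{-1}$ factor versus $x$ on the $\bar\bS^{-1}$ factor. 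Reading off the $(i,j)$ entry, the operator $\sum_{k}E_{kk}$ delivers exactly the channel sum $\sum_{k=1}^{N}$, and one recovers \eqref{lr2}.

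The remaining two identities are the $S$-only and $\bar S$-only specializations. Shifting only the $t$-variables (i.e. taking $\bar t'=\bar t$) decouples $\bar S$ in the relevant range of $\Lambda$-powers — this is the usual way the two KP sub-hierarchies sit inside the Toda hierarchy — and leaves a resolvent identity for $\bS,\bS^{-1}$ alone that is manifestly symmetric under $\lambda_{1k}\leftrightarrow\lambda_{2k}$, namely \eqref{i22}; shifting only the $\bar t$-variables gives the symmetric $\bar\bS,\bar\bS^{-1}$ identity in the same way. In all three cases the off-diagonal correction terms $\tilde\omega_{0},\tilde\omega_{1}$ that appeared explicitly in the scalar identities \eqref{lr}--\eqref{plx} are nothing but the $s\neq k$ summands of the full matrix products, so that assembling the componentwise identities into matrix-valued products reabsorbs them and no correction survives; this is the precise sense in which the single Lemma~\ref{identities2s'} is a repackaging of the two preceding lemmas.

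I expect the main obstacle to be the operator-to-symbol bookkeeping. One must check that commuting the resummed resolvents $(1-\lambda_{1k}^{-1}\Lambda)^{-1}$ and $(1-\lambda_{2k}\Lambda^{-1})^{-1}$ to the correct side reproduces the exact spatial and time arguments recorded in \eqref{lr2}--\eqref{i22}, and that the geometric series converge in a common annulus in the $\lambda_{1k},\lambda_{2k}$ so that the coefficient comparison is legitimate; for the $S$-only and $\bar S$-only cases one must in addition justify that the opposite dressing genuinely decouples in the selected band of $\Lambda$-powers. Everything else is the index matching already performed for the single-parameter versions in the two preceding lemmas.
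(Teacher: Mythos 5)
Your overall strategy is the one the paper itself relies on: the paper offers no written proof of Lemma~\ref{identities2s'} at all, asserting only that ``the above two lemmas can be rewritten into the following single lemma,'' and the two preceding lemmas are in turn derived from the bilinear identity \eqref{chan} by channelwise Miwa shifts and resummation of the exponential factors into resolvents. Your observation that the projectors $E_{kk}$ let you choose the shift in each channel independently, and that the $s\neq k$ correction terms of \eqref{lr}--\eqref{plx} are just the off-diagonal summands that get reabsorbed when the componentwise identities are assembled into the full channel sums $\sum_{k=1}^N$, is exactly the content of the paper's one-line justification, stated more explicitly than the paper states it.

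There is, however, one concrete gap in your mechanism. You attribute the asymmetry of arguments in \eqref{lr2} (and implicitly in \eqref{i22}) to commuting powers of $\Lambda$ past matrix coefficients, but that operation only shifts the spatial argument by multiples of $\epsilon$; it cannot produce the asymmetry in the \emph{time} arguments. A single substitution $t'=t-[\lambda_{1k}^{-1}]_k$, $\bar t'=\bar t+[\lambda_{2k}]_k$ into \eqref{chan} necessarily places both inverse factors at the same point $\t'$, which is the form of \eqref{lr}; in \eqref{lr2} the left factor $\bS^{-1}$ sits at $(t,\bar t+[\lambda_{2k}]_k)$ while the right factor $\bar\bS^{-1}$ sits at $(t-[\lambda_{1k}^{-1}]_k,\bar t)$, and in \eqref{i22} each side carries only the \emph{opposite} parameter's shift, whereas the directly derived identity \eqref{j3} carries the double shift $t-[\lambda_{1k}^{-1}]_k-[\lambda_{2k}^{-1}]_k$ on both sides. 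Redistributing the Miwa shifts between the two factors is precisely what the normalization identities \eqref{j2} and \eqref{i5} are for: they let you strip a shift off one factor at the cost of re-expanding the other, and it is this combination --- not operator-to-symbol bookkeeping --- that converts the output of \eqref{chan} into the form stated in Lemma~\ref{identities2s'}. Your proof should invoke these identities explicitly at that step; as written, the ``direct'' derivation stalls at the symmetric-argument versions and does not reach the stated equations.
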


Using these lemmas above, one can prove the existence of tau function in the following proposition.

\begin{proposition}\label{tau-function}
Given a pair of wave operators $\bS$ and $\ \bar \bS$ of the METH there
exists corresponding tau-functions $\tau_{sk},\bar \tau_{sk}$, which is unique up to
multiplication by a non-vanishing function independent of $t_{sk}$, $\bar t_{sk}$.
\end{proposition}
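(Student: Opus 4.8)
The plan is to construct the tau-functions entrywise from the symbols $\bS,\bS^{-1},\bar\bS,\bar\bS^{-1}$ by exploiting the bilinear identities collected in Lemma 1, Lemma 2, and their consolidated form in Lemma 3. The strategy mirrors the classical Ueno--Takasaki argument (and its matrix adaptation in \cite{M,manas}): the defining relations \eqref{pltaukk}--\eqref{prtausk} are not assumed but \emph{reconstructed}, so the task is to show that the functions appearing on the right-hand sides of those relations exist and are mutually consistent. First I would treat the diagonal entries. The identity \eqref{j2}, namely $\bS(x,\t,\lambda)_{kk}\bS^{-1}(x-\epsilon,t-[\lambda^{-1}]_k,\bar t,\lambda)_{kk}=1$, together with the $\lambda$-independence expressed in \eqref{j3}, says that $\log\bS(x,\t,\lambda)_{kk}$ is a ``closed'' object in the shifted time variables; this is exactly the integrability condition that guarantees the existence of a scalar potential $\tau$ with $\bS(x,\t,\lambda)_{kk}=\tau(\ldots t-[\lambda^{-1}]_k\ldots)/\tau(\ldots t\ldots)$, matching \eqref{pltaukk}.

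Next I would handle the off-diagonal entries. For $s\neq k$ the symbol $(\bS)_{sk}$ carries an explicit factor $\lambda^{-1}$, so one sets $\tau_{sk}:=\lambda\,(\bS)_{sk}\cdot\tau$ and must verify this is independent of $\lambda$ and compatible across the flows; this is precisely what the mixed identities \eqref{lr} and \eqref{plx} of Lemma 2 deliver, since they relate the off-diagonal products $(\bS)_{ik}(\bS^{-1})_{kj}$ to the diagonal data $\tilde\omega_0$ already controlled by $\tau$. The barred sector $(\bar\bS)_{kk},(\bar\bS)_{sk}$ is treated symmetrically using \eqref{j4}, \eqref{i5}, and \eqref{i4'}, producing $\bar\tau_{kk},\bar\tau_{sk}$; the cross-consistency between the unbarred and barred tau-data—i.e.\ that both sectors can be expressed through a \emph{single} matrix $\tau_M$ and its barred partner—is exactly the content of the consolidated Lemma 3, whose relations \eqref{lr2} and \eqref{i22} tie the $\bS$-products to the $\bar\bS$-products at shifted arguments.

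The concrete mechanics are as follows. I would fix a reference point in time-space, define $\tau$ at that point (say $\tau\equiv 1$ up to normalization), and then integrate the logarithmic derivatives: each bilinear identity, upon taking $\res_\lambda$ or comparing powers of $\lambda$, yields a first-order relation of the form $\partial_{p_{jk}}\log\tau = (\text{known coefficient of }\bS,\bar\bS)$. The Lemmas guarantee that the resulting one-form is \emph{closed} (mixed second derivatives agree), so the potential $\tau$ exists and is single-valued. Uniqueness up to a multiplicative factor independent of $t_{sk},\bar t_{sk}$ follows because two solutions differ by a function whose logarithmic derivative in every such flow vanishes.

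The main obstacle I expect is not the diagonal reconstruction—which is essentially the scalar Toda argument—but the off-diagonal compatibility in the matrix setting: one must check that the $\lambda$-stripping $\tau_{sk}=\lambda(\bS)_{sk}\tau$ produces quantities whose cross-derivatives close \emph{simultaneously} in the $t$-, $\bar t$-, and $s$-flows, and that the half-integer shifts $x\mapsto x\pm\tfrac{\epsilon}{2}$ (and the $x\pm\tfrac{3\epsilon}{2}$ appearing in \eqref{prtausk}) are threaded consistently through all of \eqref{pltaukk}--\eqref{prtausk}. This is where the full force of Lemma 3 is needed, and where the argument departs genuinely from the scalar and single-component cases; I would verify it by substituting the consolidated identities \eqref{lr2}, \eqref{i22} and matching the $\epsilon$-shifts termwise, rather than by any slick closed-form manipulation.
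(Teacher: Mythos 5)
Your outline follows the standard Ueno--Takasaki-type existence argument (closedness of the logarithmic data extracted from the symmetry identities \eqref{j2}--\eqref{j3} for the diagonal entries, off-diagonal reconstruction via \eqref{lr} and \eqref{plx}, cross-sector consistency via the consolidated Lemma 3), which is exactly the route the paper itself defers to: its entire ``proof'' is the statement that the argument is complicated but standard, with a pointer to \cite{M,ourJMP,manas}. Your sketch is therefore consistent with (and more informative than) what the paper provides, with one point to tighten: for the off-diagonal entries the condition to verify is not that $\lambda(\bS)_{sk}\tau$ is independent of $\lambda$, but that $\lambda(\bS)_{sk}\,\tau(t-[\lambda^{-1}]_k)$ is a single fixed function evaluated at the $\lambda$-shifted times --- equivalently, one should define $\tau_{sk}=(\omega_1)_{sk}\tau$ from the $\lambda\to\infty$ limit and then check that the full $\lambda$-dependence of $(\bS)_{sk}$ is reproduced by the time shifts, using \eqref{lr} and \eqref{plx}.
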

\begin{proof}
The proof is quite complicated but standard which will not be mentioned here. One can check the references \cite{M,ourJMP,manas}.
\end{proof}

With the above preparation, we will give the Hirota bilinear equations in terms of tau functions in the next
section with the help of generalized vertex operators.

\section{Generalized matrix vertex operators and  Hirota quadratic equations}
In this section we continue to  discuss on the fundamental properties
of the tau function of the EMTH, i.e., the Hirota quadratic equations of the EMTH. So we
introduce the following vertex operators
\begin{eqnarray*}
\Gamma^{\pm a} :&=&\exp\left(\pm \frac{1}{\epsilon} \sum_{k=1}^NE_{kk}
(\sum_{j=0}^\infty t_{jk}\lambda^j+ s_{j}\lambda^j log \lambda)\right)\times\exp\left({\mp
\frac{\epsilon}{2}\partial_{s_{0}} \mp [\lambda^{-1}]_\d  }\right),\\
\Gamma^{\pm b} :&=&\exp\left(\pm  \frac{1}{\epsilon}\sum_{k=1}^NE_{kk}
(\sum_{j=0}^\infty \bar t_{jk}\lambda^{-j}- s_{j}\lambda^{-j} log \lambda)\right)\times\exp\left({\mp
\frac{\epsilon}{2}\partial_{s_{0}} \mp [\lambda]_{\bar \d}  }\right).
\end{eqnarray*}

where\ \
\begin{eqnarray*}
 [\lambda]_\d  :&=&
\epsilon\sum_{k=1}^N\sum_{j=0}^\infty \frac{\lambda^{j}}{j}\partial_{t_{jk}},\ \ [\lambda]_{\bar \d}  :=\epsilon\sum_{k=1}^N
\sum_{j=0}^\infty \frac{\lambda^j}{j}\partial_{\bar t_{jk}}.
\end{eqnarray*}

Because of the logarithm $\log \lambda$, the vertex
operators  $\Gamma^{\pm a} \otimes \Gamma^{\mp a}$ and
$\Gamma^{\pm b} \otimes \Gamma^{\mp b }$  are multi-valued
function. There
are monodromy factors $M^a$ and $M^b$ respectively as following
among different branches around $\lambda=\infty$
\begin{equation} M^{a}= \exp \left\{ \pm \frac{2\pi i}{\epsilon} \sum_{k=1}^NE_{kk}
\sum_{j\geq 0}\lambda^j ( s_{j} \otimes 1 - 1\otimes s_{j})
\right\},\end{equation}
\begin{equation}
 M^{b}= \exp \left\{ \pm \frac{2\pi i}{\epsilon} \sum_{k=1}^NE_{kk}
\sum_{j\geq 0}\lambda^{-j} ( s_{j} \otimes 1 - 1\otimes s_{j})
\right\}.
\end{equation}
In order to offset the complication, we need to generalize the
concept of vertex operators which leads it to be not scalar-valued
any more but take values in a differential operator algebra. So we introduce the following vertex operators
\begin{equation}\Gamma^{\delta}_{a} = \exp\( -\sum_{k=1}^NE_{kk}\sum_{j>0}\frac{\lambda^{j}}{\epsilon
j}(\epsilon\d_x)s_{j}\) \exp(x\partial_{s_{0}}),\end{equation}
\begin{equation}\Gamma^{\delta}_{b} = \exp\( -\sum_{k=1}^NE_{kk}\sum_{j>0}\frac{\lambda^{-j}}{\epsilon
j}(\epsilon\d_x)s_{j}\) \exp(x\partial_{s_{0}}),\end{equation}
\begin{equation}\Gamma^{\delta
\#}_{a} =\exp(x\partial_{s_{0}}) \exp\( \sum_{k=1}^NE_{kk}\sum_{j>0}\frac{\lambda^{j}}{\epsilon
j}(\epsilon\d_x)s_{j}\) ,\end{equation}
\begin{equation}\Gamma^{\delta\#}_{b} =  \exp(x\partial_{s_{0}})\exp\( \sum_{k=1}^NE_{kk}\sum_{j>0}\frac{\lambda^{-j}}{\epsilon
j}(\epsilon\d_x)s_{j}\).\end{equation}
 Then \begin{equation}
 \label{double delta a} \Gamma^{\delta
\#}_{a}\otimes \Gamma^{\delta}_{a} = \exp(x\partial_{s_{0}})\exp\(
\sum_{j>0}\frac{\lambda^{j}}{\epsilon
j}(\epsilon\d_x)(s_{j}-s'_{j}) \) \exp(x\partial_{s'_{0}}),
\end{equation}
\begin{equation}
\label{double delta b} \ \ \ \Gamma^{\delta \#}_{b}\otimes
\Gamma^{\delta}_{b} = \exp(x\partial_{s_{0}})\exp\(
\sum_{j>0}\frac{\lambda^{-j}}{\epsilon
j}(\epsilon\d_x)(s_{j}-s'_{j}) \) \exp(x\partial_{s'_{0}}).
\end{equation}

After computation we get
\begin{eqnarray*} && \(\Gamma^{\delta \#}_{a}\otimes \Gamma^\delta_{a} \) M^{a} =
\exp \left\{ \pm \frac{2\pi i}{\epsilon}\sum_{k=1}^NE_{kk} \sum_{j> 0} \frac{\lambda^{j}}{j} ( s_{j}-s'_{j})
\right\}\\
&& \exp\(  \pm \frac{2\pi i}{\epsilon} ((s_{0}+x) -(s'_{0}+x+ \sum_{j> 0} \frac{\lambda^{j}}{j} ( s_{j}-s'_{j})) \) \(\Gamma^{\delta \#}_{a}\otimes \Gamma^\delta _{a}\)
\\&=& \exp\({\pm \frac{2\pi i}{\epsilon}(s_{0}-s'_{0})}\)
\(\Gamma^{\delta \#}_{a}\otimes \Gamma^\delta _{a}\),
\end{eqnarray*}
\begin{eqnarray*}
&& \(\Gamma^{\delta \#}_{b}\otimes \Gamma^\delta_{ b} \) M^{b} =
\exp \left\{ \pm \frac{2\pi i}{\epsilon} \sum_{k=1}^NE_{kk}\sum_{j> 0} \frac{\lambda^{-j}}{j} ( s_{j}-s'_{j})
\right\}\\
&& \exp\(  \pm \frac{2\pi i}{\epsilon} ( (s_{0}+x) -\sum_{k=1}^NE_{kk}(
s'_{0}+x+ \sum_{j> 0} \frac{\lambda^{-j}}{j} ( s_{j}-s'_{j})) \)\(\Gamma^{\delta \#}_{b}\otimes \Gamma^\delta_{b} \)
\\&=& \exp\({\pm \frac{2\pi i}{\epsilon}(s_{0}-s'_{0})}\)
\(\Gamma^{\delta \#}_{b}\otimes \Gamma^\delta_{b} \).
\end{eqnarray*}
Thus when $s_{0}-s'_{0} \in \Z\epsilon $, $\(\Gamma^{\delta
\#}_{a}\otimes \Gamma^{\delta}_{a}\) \( \Gamma^{a}\otimes
\Gamma^{-a}\) \mbox{and}\(\Gamma^{\delta \#}_{b}\otimes
\Gamma^{\delta}_{b }\)\(\Gamma^{-b}\otimes\Gamma^{b}\)$ are all
single-valued near $\lambda=\infty$.

 Now we should note that the above vertex operators  take
 value in differential operator algebra $\C[\d,x,t,\bar t,s,\epsilon]:=\{f(x,t,\epsilon)|f(x,t,\bar t,s,\epsilon)=\sum_{i\geq 0}c_{i}(x,t,\bar t,s,\epsilon)\d^i\}$.

\begin{theorem}\label{t11}
Function $\tau_M,\bar \tau_M$  are tau-functions of the EMTH if and only if they
satisfy the  following Hirota quadratic equations  of the EMTH
\begin{equation} \label{HBE} \res_{{\rm{\lambda}}}
 \left(\lambda^{r-1}\(\Gamma^{\delta
\#}_{a}\otimes \Gamma^{\delta}_{a}\) \( \Gamma^{a}\otimes
\Gamma^{-a}\right)(\tau_M
\otimes \tau_M ) -\lambda^{-r-1}\(\Gamma^{\delta \#}_{b}\otimes
\Gamma^{\delta}_{b }\)\(\Gamma^{-b}\otimes\Gamma^{b} \) (\bar \tau_M
\otimes \bar\tau_M )\right)=0
\end{equation}
computed at $s_{0}-s'_{0}=l\epsilon$
 for each  $l\in \Z$, $r\in \N$.
\end{theorem}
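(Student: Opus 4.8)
The plan is to descend from the operator-level Hirota bilinear identity \eqref{HBE3} of Proposition \ref{wave-operators} to an identity purely among tau-functions, using the vertex operators as the dictionary between wave-operator symbols and tau-functions. Proposition \ref{wave-operators} already furnishes, for $s_0=s_0'$ and all $m\in\Z$, $r\in\N$, the residue identity relating $\W\W^{-1}$ to $\bar\W\bar\W^{-1}$; the task is to rewrite each symbol through the tau-function formulae \eqref{Mpltaukk}--\eqref{Mprtaukk} and read off \eqref{HBE}.

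The key step is to show that the vertex operators reproduce the wave-operator symbols acting on $\tau_M$. Reading \eqref{Mpltaukk}, the matrix $\bS$ is a ratio whose numerator carries a Miwa shift $t_{jk}\mapsto t_{jk}-\epsilon\lambda^{-j}/j$ (in the $k$-th channel, selected by $E_{kk}$) together with the half-lattice shift $x\mapsto x-\epsilon/2$. I would verify that the first exponential factor of $\Gamma^{\pm a}$ reproduces the free asymptotics $\psi_0(\lambda)$ of \eqref{baker-asymp}, that $\Exp{\mp[\lambda^{-1}]_\d}$ implements exactly the Miwa shift recorded in \eqref{Mpltaukk}, and that $\Exp{\mp\frac{\epsilon}{2}\d_{s_0}}$ implements $x\mapsto x\mp\epsilon/2$ after invoking $\d_{s_0}=\d_x$ (the $s_0$-flow is the spatial flow, Section 3.2). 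This identifies $(\Gamma^{\delta\#}_a\otimes\Gamma^\delta_a)(\Gamma^a\otimes\Gamma^{-a})(\tau_M\otimes\tau_M)$ with the product $\W\otimes\W^{-1}$ up to the universal prefactor that cancels under $\res_\lambda$, and likewise the $b$-operators with $\bar\W\otimes\bar\W^{-1}$ via \eqref{Mprtaukk}.

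With this dictionary I would substitute into \eqref{HBE3} and extract the residue. The powers $\lambda^{r-1}$ and $\lambda^{-r-1}$ in \eqref{HBE} match the $\lambda^{r+m-1}$ and $\lambda^{-r+m-1}$ of \eqref{HBE3}, the lattice shift $x\mapsto x-m\epsilon$ being encoded through the $\Exp{\mp\frac{\epsilon}{2}\d_{s_0}}$ factors and the evaluation $s_0-s_0'=l\epsilon$. The logarithmic flows are what force the auxiliary factors $\Gamma^\delta_a,\Gamma^\delta_b$: the bare operators $\Gamma^{\pm a}\otimes\Gamma^{\mp a}$ and $\Gamma^{\pm b}\otimes\Gamma^{\mp b}$ pick up the monodromy factors $M^a,M^b$ around $\lambda=\infty$, and the computation already displayed shows that $(\Gamma^{\delta\#}_a\otimes\Gamma^\delta_a)(\Gamma^a\otimes\Gamma^{-a})$ and $(\Gamma^{\delta\#}_b\otimes\Gamma^\delta_b)(\Gamma^{-b}\otimes\Gamma^b)$ become single-valued precisely when $s_0-s_0'\in\epsilon\Z$, which is exactly the hypothesis under which \eqref{HBE} is imposed.

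For the converse I would run the dictionary backwards: assuming \eqref{HBE} for every $l\in\Z$, $r\in\N$ recovers \eqref{HBE3}, which by Proposition \ref{HBEoper} forces $S,\bar S$ to be genuine wave operators, so $\tau_M,\bar\tau_M$ are tau-functions by Proposition \ref{tau-function}. I expect the main obstacle to be the bookkeeping of the logarithmic terms $s_j\lambda^j\log\lambda$: one must check that the differential-operator-valued factors $\Gamma^\delta$ offset the $\log\lambda$ monodromy without disturbing the Miwa-shift matching, and that the $\epsilon\d_x$-valued coefficients they introduce are consistent with reading $\res_\lambda$ inside the algebra $\C[\d,x,t,\bar t,s,\epsilon]$ rather than over ordinary functions. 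The remaining coefficient-matching is the standard Taylor-expansion argument of \cite{UT,M,ourJMP}, which I would only sketch.
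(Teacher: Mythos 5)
Your proposal follows essentially the same route as the paper: the paper's proof likewise establishes the dictionary identities $\Gamma^{\delta\#}_a\Gamma^a\tau_M\sim\tau\cdot\lambda^{\ast}\W$, $\Gamma^\delta_a\Gamma^{-a}\tau_M\sim\lambda^{\ast}\W^{-1}\tau$ (and the $b$-analogues), computing one of the four explicitly, and then substitutes them into \eqref{HBE} to recover \eqref{HBE3} of Proposition \ref{wave-operators}, with the single-valuedness at $s_0-s_0'\in\epsilon\Z$ handled exactly as you describe. Your plan is correct and matches the paper's argument.
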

 \begin{proof}

 We just need  to prove that the HBEs
are equivalent to the right side in Proposition
\ref{wave-operators}. By a straightforward computation we can get
the following four identities {\allowdisplaybreaks}
\begin{eqnarray}\label{vertex computation1}
\Gamma^{\delta \#}_a \Gamma^{a}\tau_M & =& \tau(s_{0}+x-\frac{\epsilon}{2},t,\bar t,s)
\lambda^{\sum_{k=1}^NE_{kk}s_{0}/\epsilon} \W(x,t,\bar t,s,\epsilon \d_x,\lambda )\lambda^{\I_Nx/\epsilon},
\\ \label{vertex computation2}
 \Gamma^{\delta}_a \Gamma^{-a}\tau_M  & =&
\lambda ^{-(s_{0}+x)/\epsilon}
\W^{-1}(x,t,\bar t,s,\epsilon\d_x,\lambda )\tau(x+s_{0}+\frac{\epsilon}{2},t,\bar{t},s), \\\label{vertex
computation3}
 \Gamma^{\delta \#}_b \Gamma^{-b}\bar \tau_M  & =&
\tau(x+s_{0}-\frac{\epsilon}{2},t,\bar t,s) \lambda^{\sum_{k=1}^NE_{kk}s_{0}/\epsilon} \bar \W(x,t,\bar t,s,\epsilon \d_x,\lambda
)\lambda^{ x\I_N/\epsilon}, \\ \notag
 \Gamma^{\delta}_b \Gamma^{b}\bar\tau_M & = &\lambda^{-\sum_{k=1}^NE_{kk}s_{0}/\epsilon} \lambda^{
 -x\I_N/\epsilon} \bar \W^{-1}(x,t,\bar t,s,\epsilon \d_x,\lambda)\ \label{vertex computation4}
\tau(x+s_{0}+\frac{\epsilon}{2},t,\bar t,s) .\\
\end{eqnarray}

As an example, we only give the proof for the eq.\eqref{vertex computation2},

 \begin{eqnarray*}
&&\Gamma^{\delta }_a \Gamma^{-a}\tau_M\\
 &=&\exp\( -\sum_{k=1}^NE_{kk}\sum_{j>0}\frac{\lambda^{j}}{\epsilon
j}(\epsilon\d_x)s_{j}\) \exp(x\partial_{s_{0}})\\
&&\exp\left(-\frac{1}{\epsilon} \sum_{k=1}^NE_{kk}
(\sum_{j=0}^\infty t_{jk}\lambda^j+ s_{j}\lambda^j log \lambda)\right)\times\exp\left({
\frac{\epsilon}{2}\partial_{s_{0}} + [\lambda^{-1}]_\d  }\right)\tau_M\\
&=& \exp \left\{\left[-\frac{1}{\epsilon} \sum_{k=1}^NE_{kk}
\sum_{j=0}^\infty t_{jk}\lambda^j
-\sum_{k=1}^NE_{kk}\sum_{j>0}\frac{\lambda^{j}}{\epsilon
j}(\epsilon\d_x)s_{j}\right] \right\}\\
&& \exp\(-\frac{1}{\epsilon}\sum_{k=1}^NE_{kk}
\sum_{j>0}^\infty s_{j}\lambda^j log \lambda\)\exp\left\{-(\log \lambda
)(s_{0}+x)/\epsilon
 \right\}\\
 &&\tau_M(x+s_{0}+\frac{\epsilon}{2},t+[\lambda^{-1}],\bar{t},s)\\
& =&\lambda ^{-(s_{0}+x)/\epsilon} \exp \left\{\left[-\frac{1}{\epsilon} \sum_{k=1}^NE_{kk}
\sum_{j=0}^\infty t_{jk}\lambda^j
-\sum_{k=1}^NE_{kk}\sum_{j>0}\frac{\lambda^{j}}{\epsilon
j}(\epsilon\d_x)s_{j}\right] \right\}\\
&&\tau_M(x+s_{0}+\frac{\epsilon}{2},t+[\lambda^{-1}],\bar{t},s)\\
&=&\lambda ^{-(s_{0}+x)/\epsilon} \exp \left\{\left[-\frac{1}{\epsilon} \sum_{k=1}^NE_{kk}
\sum_{j=0}^\infty t_{jk}\lambda^j
-\sum_{k=1}^NE_{kk}\sum_{j>0}\frac{\lambda^{j}}{\epsilon
j}(\epsilon\d_x)s_{j}\right] \right\}
\bS^{-1}(x,t,\bar t,\lambda)\\
&&\tau(x+s_{0}+\frac{\epsilon}{2},t,\bar{t},s)\\
&=&
 \lambda ^{-(s_{0}+x)/\epsilon}
\W^{-1}(x,t,\epsilon\d_x,\lambda )\tau(x+s_{0}+\frac{\epsilon}{2},t,\bar{t},s).
\end{eqnarray*}
So eq.\eqref{vertex computation2} is proved. Eq.\eqref{vertex
computation1},\eqref{vertex
computation3},eq.\eqref{vertex computation4} can be proved in similar ways.
By substituting four equations eq.\eqref{vertex
computation1}-eq.\eqref{vertex computation4} into the HBEs
\eqref{HBE},
eq.\eqref{HBE3} is derived.
\end{proof}

The eq.\eqref{HBE} in the case when $N=1$ is exactly the
  Hirota quadratic equation of the extended Toda hierarchy in \cite{M}. To give more information on the relations among different
solutions of the EMTH, the Darboux transformation of the EMTH will be constructed using kernel determinant technique as \cite{Hedeterminant,rogueHMB} in the next section.

\section{Darboux transformations of the EMTH}

In this section, we will consider the Darboux transformations of the EMTH on the Lax operator
 \[\L=\Lambda+u+v\Lambda^{-1},\]
 i.e.
  \[\label{1darbouxL}\L^{[1]}=\Lambda+u^{[1]}+v^{[1]}\Lambda^{-1}=W\L W^{-1},\]
where $W$ is the Darboux transformation operator.

That means after the Darboux transformation, the spectral problem about $N\times N$ spectral matrix $\phi$

\[\L\phi=\Lambda\phi+u\phi+v\Lambda^{-1}\phi=\lambda\phi,\]
will become

\[\L^{[1]}\phi^{[1]}=\lambda\phi^{[1]}.\]

To keep the Lax pair of the EMTH invariant, i.e.
   \begin{align}
\label{laxtjk}
  \epsilon \partial_{t_{jk}} \L^{[1]}&= [(B_{jk}^{[1]})_+,\L^{[1]}],&
 \epsilon \partial_{t_{jk}} C_{ss}^{[1]}&= [(B_{jk}^{[1]})_+,C_{ss}^{[1]}],&\epsilon \partial_{t_{jk}} \bar C_{ss}^{[1]}&= [(B_{jk}^{[1]})_+,\bar C_{ss}^{[1]}],
\\
  \epsilon \partial_{\bar t_{jk}} \L^{[1]}&= [ (\bar B_{jk}^{[1]})_+,\L^{[1]}],&
 \epsilon \partial_{\bar t_{jk}} C_{ss}^{[1]}&= [(\bar B_{jk}^{[1]})_+,C_{ss}^{[1]}],&\epsilon \partial_{\bar t_{jk}} \bar C_{ss}^{[1]}&= [(\bar B_{jk}^{[1]})_+,\bar C_{ss}^{[1]}],
\\
 \epsilon \partial_{s_{j}} \L^{[1]}&= [(D_{j}^{[1]})_+,\L^{[1]}],&
  \epsilon \partial_{s_{j}} C_{ss}^{[1]}&= [(D_{j}^{[1]})_+,C_{ss}^{[1]}],& \epsilon \partial_{s_{j}} \bar C_{ss}^{[1]}&= [(D_{j}^{[1]})_+,\bar C_{ss}^{[1]}], \\ \label{logltjk'}
  \epsilon \partial_{ t_{jk}} \log \L^{[1]}&= [(B_{jk}^{[1]})_+ ,\log \L^{[1]}],&\epsilon \partial_{\bar t_{jk}} \log \L^{[1]}&= [ -(\bar B_{jk}^{[1]})_-,\log \L^{[1]}],&&
  \end{align}
   \begin{align}\epsilon(\log \L^{[1]})_{ s_{j}}=[ -(D_{j}^{[1]})_-, \frac{1}{2} \log_+ \L^{[1]} ]+
[(D_{j}^{[1]})_+ ,\frac{1}{2} \log_- \L^{[1]} ],
\end{align}
\begin{equation}
 B_{\alpha,n}^{[1]}:=B_{\alpha,n}(\L^{[1]}), \ \ \ \bar B_{\alpha,n}^{[1]}:=\bar B_{\alpha,n}(\L^{[1]}),
\end{equation} the dressing operator $W$ should satisfy the following dressing equations
\[\epsilon W_{t_{j,n}}&=&-W(B_{j,n})_++(WB_{j,n}W^{-1})_+W,\ \ 1 \leq n\leq N, j\geq 0,\\
\epsilon W_{\bar t_{j,n}}&=&-W(B_{j,n})_++(WB_{j,n}W^{-1})_+W,\ \ 1 \leq n\leq N, j\geq 0,\\
\epsilon W_{s_{j}}&=&-W(D_{j})_++(WD_{j}W^{-1})_+W,\ \ j\geq 0.\]
where $W_{t_{\gamma,n}}$ means the derivative of $W$ by $t_{\gamma,n}.$ For a local operator $B=\sum_{m=0}^{\infty}b_m(x) \Lambda^{m}$, we define
$B^*(g(x))=\sum_{m=0}^{\infty} \Lambda^{-m}(g(x) b_m(x)).$
To give the Darboux transformation, we need the following lemma.

\begin{lemma}\label{lema}
The operator $B:=\sum_{n=0}^{\infty}b_n\La^n$ is a non-negative matrix-valued difference operator,  $C:=\sum_{n=1}^{\infty}c_n\La^{-n}$ is a negative  matrix-valued difference operator and $f,g$ (short for $f(x),g(x)$) are two functions of spatial parameter $x$, following identities hold
\begin{equation}\label{Bneg}
(Bf \frac{\Lambda^{-1}}{1-\Lambda^{-1}} g)_-=B(f) \frac{\Lambda^{-1}}{1-\Lambda^{-1}} g,\ \ \ (f \frac{\Lambda^{-1}}{1-\Lambda^{-1}} gB)_-=f \frac{\Lambda^{-1}}{1-\Lambda^{-1}}B^*(g),
\end{equation}
\begin{equation}\label{Cpos}
(Cf \frac{1}{1-\Lambda} g)_+=C(f)\frac{1}{1-\Lambda} g,\ \ \ (f \frac{1}{1-\Lambda} gC)_+=f \frac{1}{1-\Lambda}C^*(g).
\end{equation}
\end{lemma}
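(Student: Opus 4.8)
The plan is to prove the four identities in \eqref{Bneg}--\eqref{Cpos} by direct computation using the product rule for matrix-valued difference operators, namely $(X(x)\Lambda^i)\cdot(Y(x)\Lambda^j)=X(x)Y(x+i\epsilon)\Lambda^{i+j}$, together with the geometric-series interpretation of the resolvent-type factors $\frac{\Lambda^{-1}}{1-\Lambda^{-1}}=\sum_{m\geq 1}\Lambda^{-m}$ and $\frac{1}{1-\Lambda}=\sum_{m\geq 0}\Lambda^{m}$. The key structural point, which I would isolate first, is how the projection $(\cdot)_-$ onto the negative-power part (respectively $(\cdot)_+$ onto the non-negative part) interacts with multiplication by these one-sided series: the tail $\frac{\Lambda^{-1}}{1-\Lambda^{-1}}g$ contains only strictly negative powers of $\Lambda$, and $\frac{1}{1-\Lambda}g$ contains only non-negative powers, so the projection simply truncates the terms produced by the finitely-many positive (resp. negative) powers coming from $B$ (resp. $C$).

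For the first identity in \eqref{Bneg}, I would expand $B=\sum_{n\geq 0}b_n\Lambda^n$ and write $Bf\frac{\Lambda^{-1}}{1-\Lambda^{-1}}g=\sum_{n\geq 0}\sum_{m\geq 1}b_n\,f(x+n\epsilon)\,g(x+(n-m)\epsilon)\,\Lambda^{n-m}$, reading off the coefficients via the product rule. Taking the negative part keeps exactly the terms with $n-m<0$. The claim is that this equals $B(f)\frac{\Lambda^{-1}}{1-\Lambda^{-1}}g$, where $B(f)=\sum_n b_n f(x+n\epsilon)$ acts as multiplication; so I would show that reindexing $m$ at fixed $n$ reproduces the full geometric tail and that the finitely many surviving non-negative terms $n-m\geq 0$ cancel or are precisely the ones removed by $(\cdot)_-$. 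The second identity, with $B$ on the right, is where the adjoint $B^*(g)=\sum_m\Lambda^{-m}(g\,b_m)$ enters: here I would push $\frac{\Lambda^{-1}}{1-\Lambda^{-1}}g$ past $B$ using the same product rule, and the shift $\Lambda^{-m}$ acting on the coefficients is exactly what packages the result into $B^*(g)$.

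The identities in \eqref{Cpos} are the ``mirror'' statements for a negative operator $C=\sum_{n\geq 1}c_n\Lambda^{-n}$ paired with the non-negative series $\frac{1}{1-\Lambda}g=\sum_{m\geq 0}g\,\Lambda^{m}$, and I would prove them by the identical bookkeeping with the roles of positive and negative powers interchanged: now $(\cdot)_+$ retains the terms with non-negative total power, and the finitely many negative powers contributed by $C$ are truncated away, leaving $C(f)\frac{1}{1-\Lambda}g$ (resp. $f\frac{1}{1-\Lambda}C^*(g)$). To keep the argument clean I would verify each identity first on a single monomial ($B=b_n\Lambda^n$ or $C=c_n\Lambda^{-n}$) and then extend by linearity and by convergence of the one-sided series in the topology of formal difference operators.

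The main obstacle I expect is purely combinatorial rather than conceptual: keeping the argument-shifts of the non-commuting matrix coefficients straight when an operator passes through the resolvent factor, and making sure that the boundary terms (the finitely many powers where the projection actually cuts) are handled correctly so that the truncation lands precisely on the stated right-hand side. Since the coefficients are $N\times N$ matrices, one must be careful not to commute $f$, $g$, $b_n$, $c_n$ past one another; the product rule as stated respects the ordering, so I would track each factor's spatial argument explicitly. No single step is deep, so the emphasis throughout will be on disciplined index manipulation, and the monomial reduction is the device that makes this manageable.
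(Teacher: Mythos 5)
Your proposal is correct and follows essentially the same route as the paper: decompose $B$ (resp.\ $C$) into monomials, apply the product rule to push the coefficient shifts through, and observe that the projection $(\cdot)_-$ (resp.\ $(\cdot)_+$) simply truncates the finitely many non-negative (resp.\ negative) powers produced by each monomial, which after reindexing reproduces the full one-sided geometric tail; the adjoint $B^*(g)=\sum_m\Lambda^{-m}(g\,b_m)$ appears exactly as you describe when the operator sits on the right. The paper phrases the key step slightly more compactly as $\bigl(\frac{\Lambda^{m-1}}{1-\Lambda^{-1}}\bigr)_-=\frac{\Lambda^{-1}}{1-\Lambda^{-1}}$ rather than via your explicit double sum, and note that the boundary terms are \emph{removed} by the projection rather than cancelled, but these are presentational differences only.
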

\begin{proof}
Here we only give the proof of the eq.\eqref{Bneg} by direct calculation
\[\notag
(Bf \frac{\Lambda^{-1}}{1-\Lambda^{-1}} g)_-&=&\sum_{m=0}^{\infty}b_m(f(x+m\epsilon)\La^m \frac{\Lambda^{-1}}{1-\Lambda^{-1}} g)_-\\ \notag
&=&\sum_{m=0}^{\infty}b_mf(x+m\epsilon)(\frac{\Lambda^{m-1}}{1-\Lambda^{-1}})_- g\\ \notag
&=&\sum_{m=0}^{\infty}b_mf(x+m\epsilon)\frac{\Lambda^{-1}}{1-\Lambda^{-1}} g\\
&=&B(f) \frac{\Lambda^{-1}}{1-\Lambda^{-1}} g,\]

\[\notag
(f \frac{\Lambda^{-1}}{1-\Lambda^{-1}} gB)_-&=&\sum_{m=0}^{\infty}(f \frac{\Lambda^{-1}}{1-\Lambda^{-1}} gb_m\La^m)_-\\ \notag
&=&\sum_{m=0}^{\infty}(f \frac{\Lambda^{-1}}{1-\Lambda^{-1}}\La^m g(x-m\epsilon)b_m(x-m\epsilon))_-\\ \notag
&=&\sum_{m=0}^{\infty}f( \frac{\Lambda^{m-1}}{1-\Lambda^{-1}})_- g(x-m\epsilon)b_m(x-m\epsilon)\\ \notag
&=&f \frac{\Lambda^{-1}}{1-\Lambda^{-1}}B^*(g).\]
The similar proof for the eq.\eqref{Cpos} can be got easily.

\end{proof}

Now, we will give the following important theorem which will be used to generate new solutions from seed solutions.

\begin{theorem}
If $\phi$ is the first wave function of the EMTH,
the Darboux transformation operator of the EMTH  \[W(\lambda)=(1-\phi(\phi(x-\epsilon))^{-1}\La^{-1})=\phi\circ(1-\La^{-1})\circ\phi^{-1},\]

will generater new solutions $u^{[1]},v^{[1]}$ from seed solutions
$u,v$

\[\label{1uN-11}u^{[1]}&=&u+(\La-1)\phi\left(\phi(x-\epsilon)\right)^{-1},\\
\label{1u-M-11} v^{[1]}&=&\phi\left(\phi(x-\epsilon)\right)^{-1}(\La^{-1}v)(\La^{-2}\phi)(\La^{-1}\phi)^{-1}.\]

\end{theorem}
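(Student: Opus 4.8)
The plan is to verify the claim through the intertwining identity $\L^{[1]}W=W\L$ rather than by forming $W\L W^{-1}$ directly, because the intertwining form lets me match coefficients of powers of $\La$ without ever inverting $W$. First I would record two preliminaries. One: the product expression $W=\phi\circ(1-\La^{-1})\circ\phi^{-1}$ really does equal $1-\phi(\phi(x-\epsilon))^{-1}\La^{-1}$, a one-line check using $(\La^{-1}g)(x)=g(x-\epsilon)$. Two: $\phi$ lies in the kernel, $W\phi=0$, which is the hallmark of a Darboux transformation and underlies the truncation verified below. Abbreviating $A:=\phi(\phi(x-\epsilon))^{-1}$ so that $W=1-A\La^{-1}$, I would then posit the ansatz $\L^{[1]}=\La+u^{[1]}+v^{[1]}\La^{-1}$ and expand both sides of $\L^{[1]}W=W\L$ into powers of $\La$, pushing every shift to the right with the rule $\La^{\pm1}f=(\La^{\pm1}f)\La^{\pm1}$. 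A short check shows both $W\L$ and $\L^{[1]}W$ have top term $\La$ and lowest term proportional to $\La^{-2}$, so there are exactly four coefficient equations to satisfy.

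Matching coefficients power by power is then routine. The $\La^{1}$ coefficient is the trivial identity $1=1$. The $\La^{0}$ coefficient reads $u^{[1]}-(\La A)=u-A$, hence $u^{[1]}=u+(\La-1)A$, which is exactly the stated formula for $u^{[1]}$. The $\La^{-2}$ coefficient reads $-v^{[1]}(\La^{-1}A)=-A(\La^{-1}v)$, hence $v^{[1]}=A(\La^{-1}v)(\La^{-1}A)^{-1}$; substituting $A$ and $(\La^{-1}A)^{-1}=(\La^{-2}\phi)(\La^{-1}\phi)^{-1}$ reproduces the stated expression for $v^{[1]}$. At this point $u^{[1]}$ and $v^{[1]}$ are completely determined.

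The remaining equation, the coefficient of $\La^{-1}$, is the crux of the argument and the only place where the hypothesis that $\phi$ is a genuine wave function enters. After inserting the formulas just found and simplifying, that coefficient collapses to the single requirement
\begin{eqnarray*}
A &=& \lambda-(\La^{-1}u)-(\La^{-1}v)(\La^{-1}A)^{-1}.
\end{eqnarray*}
I would establish this identity by right-multiplying the shifted spectral equation $\phi(x)+u(x-\epsilon)\phi(x-\epsilon)+v(x-\epsilon)\phi(x-2\epsilon)=\lambda\phi(x-\epsilon)$ (the image of $\L\phi=\lambda\phi$ under $x\mapsto x-\epsilon$) on the right by $\phi(x-\epsilon)^{-1}$ and recognizing $\phi(x-2\epsilon)\phi(x-\epsilon)^{-1}=(\La^{-1}A)^{-1}$. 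Since $W$ is invertible and the tridiagonal ansatz solves $\L^{[1]}W=W\L$, it follows that $\L^{[1]}=W\L W^{-1}$ truncates exactly as claimed.

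I expect this $\La^{-1}$ consistency to be the main obstacle, for two related reasons: the eigenvalue $\lambda$ appears explicitly in the intermediate expressions and must cancel, and this cancellation is precisely what forces $\phi$ to be an honest solution of the spectral problem rather than an arbitrary invertible matrix. Care with matrix ordering is also essential throughout, since $u,v,\phi$ and $A$ are noncommuting $N\times N$ matrices; in particular the two-sided placement of $A$ and its shifts in $v^{[1]}$ must be tracked precisely, and one may not divide by $\phi$ on the left.
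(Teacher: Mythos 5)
Your coefficient computation is correct and, in fact, more careful than what the paper offers: the paper merely asserts that the formulas for $u^{[1]}$ and $v^{[1]}$ "can be directly got" from $\L^{[1]}=W\L W^{-1}$, while you actually match the four powers $\La^{1},\La^{0},\La^{-1},\La^{-2}$ in $\L^{[1]}W=W\L$ and correctly identify the $\La^{-1}$ coefficient as the consistency condition enforced by the spectral equation $\L\phi=\lambda\phi$. That part of your argument checks out (the shifted spectral relation right-multiplied by $\phi(x-\epsilon)^{-1}$ does give $A=\lambda-(\La^{-1}u)-(\La^{-1}v)(\La^{-1}A)^{-1}$ with $A=\phi(\phi(x-\epsilon))^{-1}$, and together with the unshifted relation the $\La^{-1}$ equation reduces to $\lambda A=\lambda A$).

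However, there is a genuine gap: you have only established the \emph{kinematic} half of the theorem, namely that conjugation by $W$ sends the tridiagonal operator $\L$ to another tridiagonal operator with the stated coefficients. The claim that $u^{[1]},v^{[1]}$ are \emph{new solutions} of the EMTH requires that the transformed Lax operator still satisfies all the Lax equations of the hierarchy, i.e.\ that $W$ intertwines the time evolutions. This is precisely the condition, stated just before the theorem, that $W$ satisfy the dressing equations
\begin{equation*}
\epsilon W_{t_{j,n}}=-W(B_{j,n})_{+}+\bigl(WB_{j,n}W^{-1}\bigr)_{+}W,
\end{equation*}
and similarly for the $\bar t_{j,n}$ and $s_{j}$ flows. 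The paper's entire proof is devoted to verifying this: it differentiates $W=\phi\circ(1-\La^{-1})\circ\phi^{-1}$ in time using the wave-function evolution $\epsilon\partial_{t_{\gamma,n}}\phi=(B_{\gamma,n})_{+}\phi$ and then invokes the projection identities of Lemma~\ref{lema} (the $(Bf\frac{\Lambda^{-1}}{1-\Lambda^{-1}}g)_{-}$ formulas) to recognize the result as $-W(B_{\gamma,n})_{+}W^{-1}+(WB_{\gamma,n}W^{-1})_{+}$. Your proposal never uses the time-evolution equations of $\phi$ at all — only the spectral equation — so nothing in it rules out the possibility that $u^{[1]},v^{[1]}$ fail the hierarchy's flow equations. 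To close the gap you would need to add this dynamical verification (or an equivalent argument that $W$ maps wave operators to wave operators).
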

\begin{proof}

In the following proof, using eq.\eqref{Bneg} in the Lemma \ref{lema}, a direct computation will lead to the following
 \begin{eqnarray*}\epsilon W_{t_{\gamma,n}}W^{-1}&=&\left(\phi\circ(1-\La^{-1})\circ\phi^{-1}\right)_{t_{\gamma,n}}\phi\circ(1-\La^{-1})^{-1}\circ\phi^{-1}\\
 &=&\left(((B_{\gamma,n})_+\phi)\circ(1-\La^{-1})\circ\phi^{-1}\right)\phi\circ(1-\La^{-1})^{-1}\circ\phi^{-1}\\
 &&-\phi\circ(1-\La^{-1})\circ\left((B_{\gamma,n})_+\phi\right)\phi^{-1}\circ(1-\La^{-1})^{-1}\circ\phi^{-1}\\
 &=&\left((B_{\gamma,n})_+\phi\right)\phi^{-1}-\phi\circ(1-\La^{-1})\circ\left((B_{\gamma,n})_+\phi\right)\phi^{-1}\circ(1-\La^{-1})^{-1}\circ\phi^{-1}\\
   &=&-\left(\phi\circ[(1-\La^{-1})\cdot\phi^{-1}(x) ((B_{\gamma,n}(x))_+\cdot\phi(x))]\circ(1-\La^{-1})^{-1}\circ\phi^{-1}\right)_-\\
    &=&-\left(\phi\circ(1-\La^{-1})\circ\phi^{-1}(x)\circ (B_{\gamma,n}(x))_+\circ\phi(x)\circ(1-\La^{-1})^{-1}\circ\phi^{-1}\right)_-\\
  &=&-\phi\circ(1-\La^{-1})\circ\phi^{-1}(x)\circ (B_{\gamma,n})_+(x)\circ\phi(x)\circ(1-\La^{-1})^{-1}\circ\phi^{-1}\\
  &&+\left(\phi\circ(1-\La^{-1})\circ\phi^{-1}(x)\circ B_{\gamma,n}(x)\circ\phi(x)\circ(1-\La^{-1})^{-1}\circ\phi^{-1}\right)_+\\
  &=&-W(B_{\gamma,n})_+W^{-1}+(WB_{\gamma,n}W^{-1})_+.
   \end{eqnarray*}
 Therefore  \[W=\phi\circ(1-\La^{-1})\circ\phi^{-1},\]
 can be as a Darboux transformation of the EMTH.
 Eqs.\eqref{1uN-11}-\eqref{1u-M-11} can be directly got from the eq.\eqref{1darbouxL}.

\end{proof}

Here we define $ \phi_i=\phi_i^{[0]}:=\phi|_{\lambda=\lambda_i}$, then one can choose the specific one-fold  Darboux transformation of the EMTH as following
\[W_1(\lambda_1)=\I_N-\phi_1(\phi_1(x-\epsilon))^{-1}\La^{-1}.\]

Meanwhile, we can also get the Darboux transformation on wave function $\phi$ as following

 \[\phi^{[1]}=(\I_N-\phi_1(\phi_1(x-\epsilon))^{-1}\La^{-1})\phi.\]
Then using iteration on the Darboux transformation, the $j$-th Darboux transformation from the $(j-1)$-th solution is as

\[\phi^{[j]}&=&(\I_N-\frac{\phi_j^{[j-1]}}{\La^{-1}\phi_j^{[j-1]}}\La^{-1})\phi^{[j-1]},\\
u^{[j]}&=&u^{[j-1]}+(\La-1)\phi_j^{[j-1]}(\La^{-1}\phi_j^{[j-1]})^{-1},\\
v^{[j]}&=&\phi_j^{[j-1]}(\La^{-1}\phi_j^{[j-1]})^{-1}(\La^{-1} v^{[j-1]})\La^{-2}\phi_j^{[j-1]}(\La^{-1}\phi_j^{[j-1]})^{-1},\]
where $ \phi_i^{[j-1]}:=\phi^{[j-1]}|_{\lambda=\lambda_i},$ are wave functions corresponding to different spectrals with the $(j-1)$-th solutions $u^{[j-1]},v^{[j-1]}.$ It can be checked that $ \phi_i^{[j-1]}=0,\ \ i=1,2,\dots, j-1.$

After iteration on the Darboux transformations, the following theorem about the two-fold  Darboux transformation of the EMTH can be derived by a direct calculation.

\begin{theorem}
The two-fold  Darboux transformation of the EMTH is as following
\[W_2=\I_N+t_1^{[2]}\Lambda^{-1}+t_2^{[2]}\Lambda^{-2},\]
where

 \[ t_1^{[2]}&=&(\phi_{1}\phi_{2}(x-2\epsilon)-\phi_{2}\phi_{1}(x-2\epsilon))
(\phi_{1}(x-\epsilon)\phi_{2}(x-2\epsilon)-\phi_{2}(x-\epsilon)\phi_{1}(x-2\epsilon))^{-1},\notag\\ \\
t_2^{[2]}&=&(\phi_{1}\phi_{2}(x-\epsilon)-\phi_{2}\phi_{1}(x-\epsilon))
(\phi_{2}(x-2\epsilon)\phi_{1}(x-\epsilon)-\phi_{1}(x-2\epsilon)\phi_{2}(x-\epsilon))^{-1}.\notag\\\]

The Darboux transformation leads to new solutions from seed solutions
\[u^{[2]}&=&u+(\La-1)t_1^{[2]},\\
v^{[2]}&=&t_2^{[2]}(x)(\La^{-2}v)t_2^{[2]-1}(x-\epsilon).\]
In fact the Darboux operator $W_2$ can be further written as
\[\notag(W_2)_{ij}&=&\frac{1}{\Delta_2}\left|\begin{matrix}\begin{smallmatrix}
\delta_{ij}&0&\dots & \La^{-1}&\dots & 0&0&\dots & \La^{-2}&\dots & 0\\
\phi_{1,i1}(x)&\phi_{1,11}(x-\epsilon)&\dots & \phi_{1,j1}(x-\epsilon)&\dots &\phi_{1,N1}(x-\epsilon)&\phi_{1,11}(x-2\epsilon)&\dots & \phi_{1,j1}(x-2\epsilon)&\dots & \phi_{1,N1}(x-2\epsilon)\\
\phi_{1,i2}(x)&\phi_{1,12}(x-\epsilon)&\dots & \phi_{1,j2}(x-\epsilon)&\dots & \phi_{1,N2}(x-\epsilon)&\phi_{1,12}(x-2\epsilon)&\dots & \phi_{1,j2}(x-2\epsilon)&\dots & \phi_{1,N2}(x-2\epsilon)\\
\phi_{1,ii}(x)&\phi_{1,1i}(x-\epsilon)&\dots & \phi_{1,ji}(x-\epsilon)&\dots & \phi_{1,Ni}(x-\epsilon)&\phi_{1,1i}(x-2\epsilon)&\dots & \phi_{1,ji}(x-2\epsilon)&\dots & \phi_{1,Ni}(x-2\epsilon)\\
\dots&\dots&\dots&\dots & \dots&\dots &\dots&\dots & \dots&\dots &\dots\\
\phi_{1,iN}(x)&\phi_{1,1N}(x-\epsilon)&\dots & \phi_{1,jN}(x-\epsilon)&\dots & \phi_{1,NN}(x-\epsilon)&\phi_{1,1N}(x-2\epsilon)&\dots & \phi_{1,jN}(x-2\epsilon)&\dots & \phi_{1,NN}(x-2\epsilon)\\
\phi_{2,i1}(x)&\phi_{2,11}(x-\epsilon)&\dots & \phi_{2,j1}(x-\epsilon)&\dots &
\phi_{2,N1}(x-\epsilon)&\phi_{2,21}(x-2\epsilon)&\dots & \phi_{2,j1}(x-2\epsilon)&\dots & \phi_{2,N1}(x-2\epsilon)\\
\phi_{2,i2}(x)&\phi_{2,12}(x-\epsilon)&\dots & \phi_{2,j2}(x-\epsilon)&\dots & \phi_{2,N2}(x-\epsilon)&\phi_{2,12}(x-2\epsilon)&\dots & \phi_{2,j2}(x-2\epsilon)&\dots & \phi_{2,N2}(x-2\epsilon)\\
\phi_{2,ii}(x)&\phi_{2,1i}(x-\epsilon)&\dots & \phi_{2,ji}(x-\epsilon)&\dots & \phi_{2,Ni}(x-\epsilon)&
\phi_{2,1i}(x-2\epsilon)&\dots & \phi_{2,ji}(x-2\epsilon)&\dots & \phi_{2,Ni}(x-2\epsilon)\\
\dots&\dots&\dots&\dots & \dots&\dots &\dots&\dots & \dots&\dots &\dots\\
\phi_{2,iN}&\phi_{2,1N}(x-\epsilon)&\dots & \phi_{2,jN}(x-\epsilon)&\dots & \phi_{2,NN}(x-\epsilon)&\phi_{2,1N}(x-2\epsilon)&\dots & \phi_{2,jN}(x-2\epsilon)&\dots & \phi_{2,NN}(x-2\epsilon)\end{smallmatrix}\end{matrix}
\right|,\]
\[\Delta_2=\left|\begin{matrix}\begin{smallmatrix}
\phi_{1,11}(x-\epsilon)&\dots & \phi_{1,j1}(x-\epsilon)&\dots &\phi_{1,N1}(x-\epsilon)&\phi_{1,11}(x-2\epsilon)&\dots & \phi_{1,j1}(x-2\epsilon)&\dots & \phi_{1,N1}(x-2\epsilon)\\
\phi_{1,12}(x-\epsilon)&\dots & \phi_{1,j2}(x-\epsilon)&\dots & \phi_{1,N2}(x-\epsilon)&\phi_{1,12}(x-2\epsilon)&\dots & \phi_{1,j2}(x-2\epsilon)&\dots & \phi_{1,N2}(x-2\epsilon)\\
\phi_{1,13}(x-\epsilon)&\dots & \phi_{1,j3}(x-\epsilon)&\dots & \phi_{1,N3}(x-\epsilon)&\phi_{1,13}(x-2\epsilon)&\dots & \phi_{1,j3}(x-2\epsilon)&\dots & \phi_{1,N3}(x-2\epsilon)\\
\dots&\dots&\dots & \dots&\dots &\dots&\dots & \dots&\dots &\dots\\
\phi_{1,1N}(x-\epsilon)&\dots & \phi_{1,jN}(x-\epsilon)&\dots & \phi_{1,NN}(x-\epsilon)&\phi_{1,1N}(x-2\epsilon)&\dots & \phi_{1,jN}(x-2\epsilon)&\dots & \phi_{1,NN}(x-2\epsilon)\\
\phi_{2,11}(x-\epsilon)&\dots & \phi_{2,j1}(x-\epsilon)&\dots &
\phi_{2,N1}(x-\epsilon)&\phi_{2,21}(x-2\epsilon)&\dots & \phi_{2,j1}(x-2\epsilon)&\dots & \phi_{2,N1}(x-2\epsilon)\\
\phi_{2,12}(x-\epsilon)&\dots & \phi_{2,j2}(x-\epsilon)&\dots & \phi_{2,N2}(x-\epsilon)&\phi_{2,12}(x-2\epsilon)&\dots & \phi_{2,j2}(x-2\epsilon)&\dots & \phi_{2,N2}(x-2\epsilon)\\
\phi_{2,13}(x-\epsilon)&\dots & \phi_{2,j3}(x-\epsilon)&\dots & \phi_{2,N3}(x-\epsilon)&
\phi_{2,13}(x-2\epsilon)&\dots & \phi_{2,j3}(x-2\epsilon)&\dots & \phi_{2,N3}(x-2\epsilon)\\
\dots&\dots&\dots & \dots&\dots &\dots&\dots & \dots&\dots &\dots\\
\phi_{2,1N}(x-\epsilon)&\dots & \phi_{2,jN}(x-\epsilon)&\dots & \phi_{2,NN}(x-\epsilon)&\phi_{2,1N}(x-2\epsilon)&\dots & \phi_{2,jN}(x-2\epsilon)&\dots & \phi_{2,NN}(x-2\epsilon)\end{smallmatrix}\end{matrix}
\right|.\]
\end{theorem}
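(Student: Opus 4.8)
The plan is to obtain the two-fold transformation as a composition of two elementary one-fold Darboux transformations of the type built in the previous theorem, and then to identify the resulting second-order difference operator both coefficient-wise and as a quasi-determinant. Write $A_1:=\phi_1(\phi_1(x-\epsilon))^{-1}$, so the first step is $W_1(\lambda_1)=\I_N-A_1\Lambda^{-1}$, which sends the second seed to $\phi_2^{[1]}=W_1(\lambda_1)\phi_2=\phi_2-A_1\phi_2(x-\epsilon)$. Setting $A_2:=\phi_2^{[1]}(\phi_2^{[1]}(x-\epsilon))^{-1}$, the second step is $W_1^{[1]}(\lambda_2)=\I_N-A_2\Lambda^{-1}$, and the two-fold operator is $W_2=W_1^{[1]}(\lambda_2)\,W_1(\lambda_1)$. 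First I would expand this product using the commutation rule $\Lambda^{-1}A=A(x-\epsilon)\Lambda^{-1}$, which immediately gives $W_2=\I_N-(A_1+A_2)\Lambda^{-1}+A_2A_1(x-\epsilon)\Lambda^{-2}$, so that $t_1^{[2]}$ and $t_2^{[2]}$ are read off as $-(A_1+A_2)$ and $A_2A_1(x-\epsilon)$. Substituting the explicit forms of $A_1,A_2$ and clearing the inner factor $\phi_1(x-\epsilon)^{-1}$ then produces the closed quotient expressions in the statement (up to the orderings forced by noncommutativity).

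The cleaner and more robust route, which I would use to actually verify those closed formulas, is the annihilation characterization: $W_2=\I_N+t_1^{[2]}\Lambda^{-1}+t_2^{[2]}\Lambda^{-2}$ is the unique monic operator of this order killing both eigenfunctions, i.e.
\begin{align*}
\phi_i+t_1^{[2]}\phi_i(x-\epsilon)+t_2^{[2]}\phi_i(x-2\epsilon)=0,\qquad i=1,2.
\end{align*}
This is a $2\times2$ block-linear system in the matrix unknowns $t_1^{[2]},t_2^{[2]}$. Eliminating $t_2^{[2]}$ by right-multiplying the $i$-th relation by $\phi_i(x-2\epsilon)^{-1}$ and subtracting (and symmetrically for $t_1^{[2]}$) solves the system by a noncommutative Cramer rule, yielding exactly the displayed quotients. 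The consistency of this characterization with the iterative construction is guaranteed by the already-stated fact that $\phi_i^{[j-1]}=0$ for $i<j$, which says each stage annihilates the earlier eigenfunctions.

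With $W_2$ in hand, the new potentials come from the intertwining $W_2\L=\L^{[1]}W_2$ with $\L^{[1]}=\Lambda+u^{[2]}+v^{[2]}\Lambda^{-1}$. I would match the coefficients of $\Lambda^0$ and of the lowest power $\Lambda^{-3}$ on both sides of $W_2(\Lambda+u+v\Lambda^{-1})=(\Lambda+u^{[2]}+v^{[2]}\Lambda^{-1})W_2$: the $\Lambda^0$ balance fixes $u^{[2]}$ as the stated shift of $u$ by $(\Lambda-1)t_1^{[2]}$ (the sign being pinned down by consistency with the one-fold formula $u^{[1]}=u+(\Lambda-1)\phi(\phi(x-\epsilon))^{-1}$), while the bottom balance $t_2^{[2]}v(x-2\epsilon)=v^{[2]}t_2^{[2]}(x-\epsilon)$ isolates $v^{[2]}=t_2^{[2]}(x)(\Lambda^{-2}v)\,t_2^{[2]}(x-\epsilon)^{-1}$, exactly as claimed. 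Finally, the determinant presentation of $(W_2)_{ij}$ is precisely the Cramer solution of the same annihilation system written out over the $2N$ scalar components of $\phi_1,\phi_2$ at the shifts $x-\epsilon,x-2\epsilon$: expanding the $(2N+1)\times(2N+1)$ determinant along its first row (which carries the formal symbols $\delta_{ij},\Lambda^{-1},\Lambda^{-2}$) reproduces $\I_N+t_1^{[2]}\Lambda^{-1}+t_2^{[2]}\Lambda^{-2}$ with $\Delta_2$ as the coefficient determinant, and annihilation is then transparent because substituting $\phi_1$ or $\phi_2$ into the first row duplicates a data row and kills the determinant.

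The main obstacle is noncommutativity throughout. Because the $\phi_i$ are $N\times N$ matrices, the usual scalar Cramer/Wronskian manipulations are unavailable, the order of every matrix product in the elimination must be tracked carefully, and the ``determinant'' formula is in truth a quasi-determinant whose row/column block structure and the placement of the shift symbols in the first row have to be arranged so that left-multiplication by $W_2$ acts correctly. Getting these orderings to agree with the displayed closed forms, and reconciling the unavoidable sign and ordering conventions inherited from the one-fold step, is where the real care is needed; by contrast the coefficient extraction giving $u^{[2]}$ and $v^{[2]}$ is routine once $W_2$ is known.
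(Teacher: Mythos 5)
Your proposal is correct and takes essentially the same route as the paper, which obtains $W_2$ by iterating the one-fold Darboux transformation (the paper's proof is literally ``a direct calculation'' after composing the two elementary steps) and justifies the determinant presentation through the annihilation conditions $W_2\phi_i=0$, exactly as you do; your coefficient matching in the intertwining relation $W_2\L=\L^{[1]}W_2$ for $u^{[2]}$ and $v^{[2]}$ is the same direct computation. The one point needing care is the sign bookkeeping you already flag: the composition gives $t_1^{[2]}=-(A_1+A_2)$ and the noncommutative Cramer elimination likewise produces the displayed quotient with an overall minus sign, so the stated closed form for $t_1^{[2]}$ and the relation $u^{[2]}=u+(\Lambda-1)t_1^{[2]}$ are only mutually consistent up to a sign convention (the theorem as printed appears to carry a sign slip), and your derivation would surface and resolve this.
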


Similarly, we can generalize the Darboux transformation to the $n$-fold case which is contained in the following theorem.

\begin{theorem}\label{ndarboux}
The $n$-fold  Darboux transformation of EMTH equation is as follows
\[W_n=\I_N+t_1^{[n]}\Lambda^{-1}+t_2^{[n]}\Lambda^{-2}+\dots+t_{n}^{[n]}\Lambda^{-n}\]
where

\[ W_n\cdot\phi_{i}|_{i\leq n}=0.\]

The Darboux transformation leads to new solutions form seed solutions
\[u^{[n]}&=&u+(\La-1)t_1^{[n]},\\
v^{[n]}&=&t_n^{[n]}(x)(\La^{-n}v)t_n^{[n]-1}(x-\epsilon),\]
where
\[\notag &&(W_n)_{ij}=\frac{1}{\Delta_n}\\ \notag
&&\left|\begin{matrix}\begin{smallmatrix}
\delta_{ij}&0&\dots & \La^{-1}&\dots & 0&0&\dots & 0&\dots & \La^{-n}&\dots & 0\\
\phi_{1,i1}(x)&\phi_{1,11}(x-\epsilon)&\dots & \phi_{1,j1}(x-\epsilon)&\dots &\phi_{1,N1}(x-\epsilon)&\phi_{1,11}(x-2\epsilon)&\dots & \phi_{1,N1}(x-2\epsilon)&\dots & \phi_{1,j1}(x-n\epsilon)&\dots & \phi_{1,N1}(x-n\epsilon)\\
\phi_{1,i2}(x)&\phi_{1,12}(x-\epsilon)&\dots & \phi_{1,j2}(x-\epsilon)&\dots & \phi_{1,N2}(x-\epsilon)&\phi_{1,12}(x-2\epsilon)&\dots & \phi_{1,N2}(x-2\epsilon)&\dots & \phi_{1,j2}(x-n\epsilon)&\dots & \phi_{1,N2}(x-n\epsilon)\\
\phi_{1,ii}(x)&\phi_{1,1i}(x-\epsilon)&\dots & \phi_{1,ji}(x-\epsilon)&\dots & \phi_{1,Ni}(x-\epsilon)&\phi_{1,1i}(x-2\epsilon)&\dots & \phi_{1,Ni}(x-2\epsilon)&\dots & \phi_{1,ji}(x-n\epsilon)&\dots & \phi_{1,Ni}(x-n\epsilon)\\
\dots&\dots&\dots&\dots & \dots&\dots &\dots&\dots &\dots&\dots & \dots&\dots &\dots\\
\phi_{1,iN}(x)&\phi_{1,1N}(x-\epsilon)&\dots & \phi_{1,jN}(x-\epsilon)&\dots & \phi_{1,NN}(x-\epsilon)&\phi_{1,1N}(x-2\epsilon)&\dots& \phi_{1,NN}(x-2\epsilon)&\dots & \phi_{1,jN}(x-n\epsilon)&\dots & \phi_{1,NN}(x-n\epsilon)\\
\phi_{2,i1}(x)&\phi_{2,11}(x-\epsilon)&\dots & \phi_{2,j1}(x-\epsilon)&\dots &
\phi_{2,N1}(x-\epsilon)&\phi_{2,21}(x-2\epsilon)&\dots & \phi_{2,N1}(x-2\epsilon)&\dots & \phi_{2,j1}(x-n\epsilon)&\dots & \phi_{2,N1}(x-n\epsilon)\\
\phi_{2,i2}(x)&\phi_{2,12}(x-\epsilon)&\dots & \phi_{2,j2}(x-\epsilon)&\dots & \phi_{2,N2}(x-\epsilon)&\phi_{2,12}(x-2\epsilon)&\dots & \phi_{2,j2}(x-2\epsilon)&\dots & \phi_{2,N2}(x-n\epsilon)&\dots  & \phi_{2,N2}(x-n\epsilon)\\
\phi_{2,ii}(x)&\phi_{2,1i}(x-\epsilon)&\dots & \phi_{2,ji}(x-\epsilon)&\dots & \phi_{2,Ni}(x-\epsilon)&
\phi_{2,1i}(x-2\epsilon)&\dots  & \phi_{2,Ni}(x-2\epsilon)&\dots & \phi_{2,ji}(x-n\epsilon)&\dots & \phi_{2,Ni}(x-n\epsilon)\\
\dots&\dots&\dots&\dots & \dots&\dots &\dots&\dots & \dots&\dots  & \dots&\dots &\dots\\
\phi_{2,iN}(x)&\phi_{2,1N}(x-\epsilon)&\dots & \phi_{2,jN}(x-\epsilon)&\dots & \phi_{2,NN}(x-\epsilon)&\phi_{2,1N}(x-2\epsilon)&\dots & \phi_{2,NN}(x-2\epsilon)&\dots & \phi_{2,jN}(x-n\epsilon)&\dots & \phi_{2,NN}(x-n\epsilon)\\
\dots&\dots&\dots&\dots & \dots&\dots &\dots&\dots & \dots&\dots  & \dots&\dots &\dots\\
\phi_{n,i1}(x)&\phi_{n,11}(x-\epsilon)&\dots & \phi_{n,j1}(x-\epsilon)&\dots &
\phi_{n,N1}(x-\epsilon)&\phi_{n,21}(x-2\epsilon)&\dots & \phi_{n,N1}(x-2\epsilon)&\dots & \phi_{n,j1}(x-n\epsilon)&\dots & \phi_{n,N1}(x-n\epsilon)\\
\phi_{n,i2}(x)&\phi_{n,12}(x-\epsilon)&\dots & \phi_{n,j2}(x-\epsilon)&\dots & \phi_{n,N2}(x-\epsilon)&\phi_{n,12}(x-2\epsilon)&\dots & \phi_{n,j2}(x-2\epsilon)&\dots & \phi_{n,N2}(x-n\epsilon)&\dots  & \phi_{n,N2}(x-n\epsilon)\\
\phi_{n,ii}(x)&\phi_{n,1i}(x-\epsilon)&\dots & \phi_{n,ji}(x-\epsilon)&\dots & \phi_{n,Ni}(x-\epsilon)&
\phi_{n,1i}(x-2\epsilon)&\dots  & \phi_{n,Ni}(x-2\epsilon)&\dots & \phi_{n,ji}(x-n\epsilon)&\dots & \phi_{n,Ni}(x-n\epsilon)\\
\dots&\dots&\dots&\dots & \dots&\dots &\dots&\dots & \dots&\dots  & \dots&\dots &\dots\\
\phi_{n,iN}(x)&\phi_{n,1N}(x-\epsilon)&\dots & \phi_{n,jN}(x-\epsilon)&\dots & \phi_{n,NN}(x-\epsilon)&\phi_{n,1N}(x-2\epsilon)&\dots & \phi_{n,NN}(x-2\epsilon)&\dots & \phi_{n,jN}(x-n\epsilon)&\dots & \phi_{n,NN}(x-n\epsilon)\end{smallmatrix}\end{matrix}
\right|,\]
\[&&\notag \Delta_n=\\ \notag
&&\left|\begin{matrix}\begin{smallmatrix}
\phi_{1,11}(x-\epsilon)&\phi_{1,21}(x-\epsilon)&\dots &\phi_{1,N1}(x-\epsilon)&\phi_{1,11}(x-2\epsilon)&\phi_{1,21}(x-2\epsilon)&\dots & \phi_{1,N1}(x-2\epsilon)&\phi_{1,21}(x-n\epsilon)&\dots & \phi_{1,N1}(x-n\epsilon)\\
\phi_{1,12}(x-\epsilon)&\phi_{1,22}(x-\epsilon)&\dots & \phi_{1,N2}(x-\epsilon)&\phi_{1,12}(x-2\epsilon)&\phi_{1,22}(x-2\epsilon)&\dots & \phi_{1,N2}(x-2\epsilon)&\phi_{1,22}(x-n\epsilon)&\dots & \phi_{1,N2}(x-n\epsilon)\\
\phi_{1,1i}(x-\epsilon)&\phi_{1,2i}(x-\epsilon)&\dots & \phi_{1,Ni}(x-\epsilon)&\phi_{1,1i}(x-2\epsilon)&\phi_{1,2i}(x-2\epsilon)&\dots & \phi_{1,Ni}(x-2\epsilon)&\phi_{1,2i}(x-n\epsilon)&\dots & \phi_{1,Ni}(x-n\epsilon)\\
\dots&\dots&\dots & \dots&\dots &\dots&\dots & \dots&\dots &\dots&\dots \\
\phi_{1,1N}(x-\epsilon)&\phi_{1,2N}(x-\epsilon)&\dots & \phi_{1,NN}(x-\epsilon)&\phi_{1,1N}(x-2\epsilon)&\phi_{1,2N}(x-2\epsilon)&\dots & \phi_{1,NN}(x-2\epsilon)&\phi_{1,2N}(x-n\epsilon)&\dots & \phi_{1,NN}(x-n\epsilon)\\
\phi_{2,11}(x-\epsilon)&\phi_{2,21}(x-\epsilon)&\dots &
\phi_{2,N1}(x-\epsilon)&\phi_{2,21}(x-2\epsilon)&\phi_{2,21}(x-2\epsilon)&\dots & \phi_{2,N1}(x-2\epsilon)&\phi_{2,21}(x-n\epsilon)&\dots & \phi_{2,N1}(x-n\epsilon)\\
\phi_{2,12}(x-\epsilon)&\phi_{2,22}(x-\epsilon)&\dots & \phi_{2,N2}(x-\epsilon)&\phi_{2,12}(x-2\epsilon)&\phi_{2,22}(x-2\epsilon)&\dots & \phi_{2,N2}(x-2\epsilon)&\phi_{2,22}(x-n\epsilon)&\dots & \phi_{2,N2}(x-n\epsilon)\\
\phi_{2,1i}(x-\epsilon)&\phi_{2,2i}(x-\epsilon)&\dots & \phi_{2,Ni}(x-\epsilon)&
\phi_{2,1i}(x-2\epsilon)&\phi_{2,2i}(x-2\epsilon)&\dots & \phi_{2,Ni}(x-2\epsilon)&\phi_{2,2i}(x-n\epsilon)&\dots & \phi_{2,Ni}(x-n\epsilon)\\
\dots&\dots&\dots &\dots & \dots&\dots &\dots&\dots & \dots&\dots &\dots\\
\phi_{2,1N}(x-\epsilon)&\phi_{2,2N}(x-\epsilon)&\dots & \phi_{2,NN}(x-\epsilon)&\phi_{2,1N}(x-2\epsilon)&\phi_{2,2N}(x-2\epsilon)&\dots & \phi_{2,NN}(x-2\epsilon)&\phi_{2,2N}(x-n\epsilon)&\dots & \phi_{2,NN}(x-n\epsilon)\\
\dots&\dots&\dots &\dots & \dots&\dots &\dots&\dots & \dots&\dots &\dots\\
\phi_{n,11}(x-\epsilon)&\phi_{n,21}(x-\epsilon)&\dots &
\phi_{n,N1}(x-\epsilon)&\phi_{n,21}(x-2\epsilon)&\phi_{n,21}(x-2\epsilon)&\dots & \phi_{n,N1}(x-2\epsilon)&\phi_{n,21}(x-n\epsilon)&\dots & \phi_{n,N1}(x-n\epsilon)\\
\phi_{n,12}(x-\epsilon)&\phi_{n,22}(x-\epsilon)&\dots & \phi_{n,N2}(x-\epsilon)&\phi_{n,12}(x-2\epsilon)&\phi_{n,22}(x-2\epsilon)&\dots & \phi_{n,N2}(x-2\epsilon)&\phi_{n,22}(x-n\epsilon)&\dots & \phi_{n,N2}(x-n\epsilon)\\
\phi_{n,1i}(x-\epsilon)&\phi_{n,2i}(x-\epsilon)&\dots & \phi_{n,Ni}(x-\epsilon)&
\phi_{n,1i}(x-2\epsilon)&\phi_{n,2i}(x-2\epsilon)&\dots & \phi_{n,Ni}(x-2\epsilon)&\phi_{n,2i}(x-n\epsilon)&\dots & \phi_{n,Ni}(x-n\epsilon)\\
\dots&\dots&\dots &\dots & \dots&\dots &\dots&\dots & \dots&\dots &\dots\\
\phi_{n,1N}(x-\epsilon)&\phi_{n,2N}(x-\epsilon)&\dots & \phi_{n,NN}(x-\epsilon)&\phi_{n,1N}(x-2\epsilon)&\phi_{n,2N}(x-2\epsilon)&\dots & \phi_{n,NN}(x-2\epsilon)&\phi_{n,2N}(x-n\epsilon)&\dots & \phi_{n,NN}(x-n\epsilon)\end{smallmatrix}\end{matrix}
\right|.\]
\end{theorem}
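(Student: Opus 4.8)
The plan is to obtain $W_n$ as the $n$-fold iterate of the elementary one-fold Darboux transformation already constructed, and then to pin it down uniquely by a monic order together with a kernel condition, from which the determinant formula will follow by Cramer's rule. Concretely, I set $W_n=W_n(\lambda_n)\circ\cdots\circ W_1(\lambda_1)$, where each elementary factor $W_j(\lambda_j)=\I_N-\phi_j^{[j-1]}(\La^{-1}\phi_j^{[j-1]})^{-1}\La^{-1}$ is the one-fold operator acting on the $(j-1)$-th solution. Since each factor is monic of order $-1$ in $\La^{-1}$, their composition is monic of order $-n$, giving the asserted form $W_n=\I_N+t_1^{[n]}\La^{-1}+\cdots+t_n^{[n]}\La^{-n}$. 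First I would record the kernel property: by construction $W_j(\lambda_j)$ annihilates $\phi_j^{[j-1]}$, while the iteration relation $\phi_i^{[j-1]}=0$ for $i<j$ stated above shows that the earlier factors have already killed $\phi_1,\dots,\phi_{j-1}$ and that the later factors preserve these zeros; hence $W_n\cdot\phi_i=0$ for every $i\le n$, which is the defining condition in the statement.

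Next I would turn the kernel condition into a linear system and solve it. Writing out the annihilation property gives
\begin{equation}
\phi_i(x)+\sum_{k=1}^{n} t_k^{[n]}(x)\,\phi_i(x-k\epsilon)=0,\qquad i=1,\dots,n,
\end{equation}
a system of $n$ matrix equations, equivalently $nN$ scalar rows since each $\phi_i$ is an $N\times N$ matrix, for the $n$ unknown matrix coefficients $t_1^{[n]},\dots,t_n^{[n]}$. Provided the Casorati-type block determinant $\Delta_n$ built from the shifted entries $\phi_{a,ij}(x-b\epsilon)$ is nonzero, this system has a unique solution, so a monic order-$(-n)$ operator annihilating the $n$ independent wave functions is unique. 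It then remains to check that the $(nN+1)\times(nN+1)$ block determinant displayed for $(W_n)_{ij}$, normalized by $\Delta_n$, is exactly this solution. This is verified in two moves: expanding the determinant along the symbolic first row $(\delta_{ij},0,\dots,\La^{-1},\dots,\La^{-n})$ exhibits $(W_n)_{ij}=\delta_{ij}+\sum_k (t_k^{[n]})_{ij}\La^{-k}$ with each $(t_k^{[n]})_{ij}$ a ratio of a cofactor determinant to $\Delta_n$; and substituting $\phi_m$ with $m\le n$ makes the symbolic row coincide with one of the $\phi_m$-rows, so the determinant vanishes, reproducing $W_n\phi_m=0$. The explicit $n=2$ formulas for $t_1^{[2]},t_2^{[2]}$ are the instance of this expansion and serve as the template.

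Finally I would recover the dressed Lax operator and the new solutions. Invariance of the Lax structure is inherited from the one-fold case: each elementary factor satisfies the dressing relation $\epsilon (W_j)_{t_{\gamma,n}}W_j^{-1}=-W_j(B_{\gamma,n})_+W_j^{-1}+(W_jB_{\gamma,n}W_j^{-1})_+$ established earlier with the help of Lemma \ref{lema}, and such relations are stable under composition. Then from $\L^{[n]}=W_n\L W_n^{-1}$, i.e. $W_n\L=\L^{[n]}W_n$, I match coefficients of powers of $\La$: the coefficient of $\La^{1}$ yields $u^{[n]}=u+(\La-1)t_1^{[n]}$, and the lowest coefficient isolates $v^{[n]}=t_n^{[n]}(x)\,(\La^{-n}v)\,t_n^{[n]-1}(x-\epsilon)$.

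The main obstacle is the combinatorial bookkeeping in the determinant verification: reconciling the $N\times N$ block structure of the matrix wave functions $\phi_a$, with entries $\phi_{a,ij}$ and argument shifts $x-b\epsilon$, against the ordinary scalar Laplace expansion along the first row, and confirming that the cofactors assemble precisely into the coefficient matrices $t_k^{[n]}$. Once the $n=2$ pattern is laid out carefully, the general expansion is routine but notation-heavy; the nondegeneracy $\Delta_n\neq0$, needed for both existence and uniqueness of the coefficients, is the one genericity hypothesis that must be imposed on the chosen spectral data $\lambda_1,\dots,\lambda_n$.
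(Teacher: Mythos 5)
Your proposal is correct and follows essentially the same route as the paper: iterate the one-fold Darboux transformation (whose dressing property rests on Lemma \ref{lema}), impose the kernel conditions $W_n\phi_i=0$ to determine the monic order-$(-n)$ operator, and realize the solution of the resulting linear system by the Casorati-type determinant, with the new $u^{[n]},v^{[n]}$ read off from matching coefficients in $W_n\L=\L^{[n]}W_n$. The paper itself leaves almost all of this implicit (it only remarks that $W_n\phi_i=0$ ``can be easily checked''), so your explicit linear-system/Cramer's-rule verification and the genericity hypothesis $\Delta_n\neq 0$ are faithful elaborations of the intended argument rather than a departure from it.
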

It can be easily checked that $W_n\phi_i=0,\ i=1,2,\dots,n.$

\section{bi-Hamiltonian structure and tau symmetry}
To describe the integrability of the EMTH with the Lax operator
 \[\L=\Lambda+u+v\Lambda^{-1},\] we will construct the bi-Hamiltonian structure and tau symmetry of the EMTH in this section.
For a matrix $A=(a_{ij})$, the vector field $\d_A$ over EMTH is defined by
\[\d_A=\sum_{i,j=1}^N\sum_{k\geq 0}a_{ij}^{(k)}\left(\frac{\d}{\d u_{ij}^{(k)}}+\frac{\d}{\d v_{ij}^{(k)}}\right)=Tr \sum_{k\geq 0}A^{(k)}\left(\frac{\d}{\d u^{(k)}}+\frac{\d}{\d v^{(k)}}\right),\]
where
\[\left(\frac{\d}{\d u^{(k)}}\right)_{ji}=\frac{\d}{\d u_{ij}^{(k)}},\ \left(\frac{\d}{\d v^{(k)}}\right)_{ji}=\frac{\d}{\d v_{ij}^{(k)}}.\]

For two functionals $\bar f=\int f dx,\bar g=\int g dx$, we have
\[\d_A \bar f=\int\sum_{i,j=1}^N\sum_{k\geq 0}a_{ij}^{(k)}\left(\frac{\d f}{\d u_{ij}^{(k)}}+\frac{\d f}{\d v_{ij}^{(k)}}\right) dx=\int Tr \sum_{k\geq 0}A^{(k)}\left(\frac{\delta f}{\delta u^{(k)}}+\frac{\delta f}{\delta v^{(k)}}\right) dx.\]
Then we can define the Hamiltonian bracket as
\[\{\bar f,\bar g\}=\int \sum_{w,w'}\frac{\delta f}{\delta w}\{w,w'\}\frac{\delta g}{\delta w'} dx,\ \ w, w'=u_{ij}\ or\ v_{ij}, \ 1\leq i,j \leq N.\]
The bi-Hamiltonian structure for the
EMTH can be given by the following two compatible Poisson brackets which is a generalization in matrix forms of the extended Toda hierarchy in \cite{CDZ}

\begin{eqnarray}
\{u(x)_{ij},u(y)_{pq}\}_1&=&\frac{1}{\epsilon} [\delta_{iq}u_{pj}(x)-\delta_{jp}u_{iq}(x)]\delta(x-y),\label{toda-pb1uu}\\
\{u(x)_{ij},v(y)_{pq}\}_1&=&\frac{1}{\epsilon} \left[\delta_{iq}\Lambda v_{pj}(x)-\delta_{jp}v_{iq}(x)\right]\delta(x-y),\label{toda-pb1uv}\\
\{v(x)_{ij},v(y)_{pq}\}_1&=&0,\label{toda-pb1vv}\\ \notag
\{u(x)_{ij},u(y)_{pq}\}_2&=&\frac{1}{\epsilon} \left[\delta_{iq}\Lambda v_{pj}(x)-\delta_{jp}v_{iq}(x)\Lambda^{-1}+\delta_{iq}\sum_{s=1}^Nu_{sj}\frac{\La}{\La-1}u_{ps}-u_{pj}\frac{\La}{\La-1}u_{iq}\right.\\
&&\left.-u_{iq}(\La-1)^{-1}u_{pj}+\delta_{jp}\sum_{s=1}^Nu_{is}(\La-1)^{-1}u_{sq}\right]\delta(x-y), \label{toda-pb2uu}\\ \notag
 \{ u(x)_{ij}, v(y)_{pq}\}_2 &=& \frac{1}{\epsilon} \left[\delta_{iq}\sum_{s=1}^Nu_{sj}\La^2 (\La-1)^{-1} v_{ps}(x)-u_{pj}\La (\La-1)^{-1}v_{iq}\right.\\
&&\left.-u_{iq}(\La-1)^{-1}\La v_{pj}(x)+\delta_{jp}\sum_{s=1}^Nu_{is}(\La-1)^{-1}v_{sq}\right]
\delta(x-y),\label{toda-pb2uv}\\ \notag
 \{ v(x)_{ij}, v(y)_{pq}\}_2 &=& {1\over \epsilon} \left[\delta_{iq}\sum_{s=1}^Nv_{sj}\La^2 (\La-1)^{-1}v_{ps}(x)-v_{pj}\La (\La-1)^{-1}v_{iq}\right.\\
&&\left.-v_{iq}(\La-1)^{-1}v_{pj}(x)+\delta_{jp}\sum_{s=1}^Nv_{is}\La^{-1}(\La-1)^{-1}v_{sq}\right]\delta(x-y). \label{toda-pb2vv}
\end{eqnarray}

One can check the anti-symmetric property of the above Poisson bracket, i.e.
\[\{u_{ij},u_{pq}\}_i=-\{u_{pq},u_{ij}\}_i,\ \ \{v_{ij},v_{pq}\}=-\{v_{pq},v_{ij}\}_i,\ \ i=1,2,\]
and so on.
Further one can prove the two Poisson  brackets all satisfying the anti-symmetric property.
Also one can prove the complicated Jacobi identities over the two Hamiltonian structures about which the detail will not be given here explicitly.
Also the compatibility of the two Hamiltonian structures can be got after the following theorem which shows the bi-Hamiltonian recursion relation.

When $N=1$, the bi-Hamiltonian structure will be reduced to the following bracket
\begin{eqnarray}
\{u(x),u(y)\}_1&=&\{v(x),v(y)\}_1=0,\label{1toda-pb1uu}\\
\{u(x),v(y)\}_1&=&\frac{1}{\epsilon} \left[\Lambda v(x)-1\right]\delta(x-y),\label{1toda-pb1uv}\\ \notag
\{u(x),u(y)\}_2&=&\frac{1}{\epsilon} \left[\Lambda v(x)-v(x)\Lambda^{-1}\right]\delta(x-y), \label{1toda-pb2uu}\\ \notag
 \{ u(x), v(y)\}_2 &=& \frac{1}{\epsilon} \left[u(x)(\La-1) v(x)\right]
\delta(x-y),\label{1toda-pb2uv}\\ \notag
 \{ v(x), v(y)\}_2 &=& {1\over \epsilon} \left[v(x)\La v(x)-v(x)\La^{-1}v(x)\right]\delta(x-y), \label{1toda-pb2vv}
\end{eqnarray}
which can be rewritten as

\begin{eqnarray}
\{u(x),u(y)\}_1&=&\{\log v(x),\log v(y)\}_1=0,\label{1toda-pb1uu}\\
\{u(x),\log v(y)\}_1&=&\frac{1}{\epsilon} \left[\Lambda-1\right]\delta(x-y),\label{1toda-pb1uv}\\ \notag
\{u(x),u(y)\}_2&=&\frac{1}{\epsilon} \left[\Lambda v(x)-v(x)\Lambda^{-1}\right]\delta(x-y), \label{1toda-pb2uu}\\ \notag
 \{ u(x), \log  v(y)\}_2 &=& \frac{1}{\epsilon} \left[u(x)(\La-1) \right]
\delta(x-y),\label{1toda-pb2uv}\\ \notag
 \{ \log v(x), \log v(y)\}_2 &=& {1\over \epsilon} \left[\La -\La^{-1}\right]\delta(x-y). \label{1toda-pb2vv}
\end{eqnarray}
In the above computation, the following identity is used
\[\La [v(x)\delta(x-y)]=v(x+\epsilon)\delta(x+\epsilon-y)=v(y)\delta(x+\epsilon-y)=v(y)\La\delta(x-y).\]
This is exactly the bi-Hamiltonian structure of the extended Toda hierarchy in \cite{CDZ} if we rewrite the $\log v$ equivalently to a new function $u$ in \cite{CDZ}.

For any difference operator $A=
\sum_k A_k \Lambda^k$, we define its residue as $Res\  A=A_0$.
In the following theorem, we will prove the above Poisson structures can be considered as the bi-Hamiltonian structure of the METH.
\begin{theorem}
The flows of the EMTH  are Hamiltonian systems
of the form
\[
\frac{\d u_{pq}}{\d t_{j,k}}&=&\{u_{pq},H_{j,k}\}_1, \  \frac{\d v_{pq}}{\d t_{j,k}}=\{v_{pq},H_{j,k}\}_1,\\
\frac{\d u_{pq}}{\d \bar t_{j,k}}&=&\{u_{pq},\bar H_{j,k}\}_1, \  \frac{\d v_{pq}}{\d\bar t_{j,k}}=\{v_{pq},\bar H_{j,k}\}_1,\\
\frac{\d u_{pq}}{\d s_{j}}&=&\{u_{pq},\tilde H_{j}\}_1, \  \frac{\d v_{pq}}{\d s_{j}}=\{v_{pq},\tilde H_{j}\}_1,
\quad k=0,1,\dots N;\ j\ge 0.
\label{td-ham}
\]
They satisfy the following bi-Hamiltonian recursion relation
\[
\{\cdot,H_{n-1,k}\}_2&=&\{\cdot,H_{n,k}\}_1,\ \{\cdot,\bar H_{n-1,k}\}_2=
\{\cdot,\bar H_{n,k}\}_1,\\ \label{recursion}
\{\cdot,\tilde H_{n-1}\}_2&=&n
\{\cdot,\tilde H_{n}\}_1+\frac{2}{n!}\sum_{k=1}^N\{\cdot,H_{n,k}\}_1.
\]
Here the Hamiltonians have the form
\begin{equation}
F_{j,k}=\int f_{j,k}(u,v; u_x,v_x; \dots; \epsilon) dx,
\end{equation}
with the Hamiltonian $F_{j,k}=H_{j,k},\bar H_{j,k},\tilde H_{j}$ and the Hamiltonian densities $f_{j,k}=h_{j,k},\bar h_{j,k},\tilde h_{j}$ given by
\[
 h_{j,k}&=&Tr Res \, C_{kk}\L^{j},\ \bar h_{j,k}=Tr Res \, \bar C_{kk} \L^{j},\\
  \tilde h_{j}&=&\frac2{j!}\,Tr Res\left[ \L^{j}
(\log \L-c_{j})\right].
\]

\end{theorem}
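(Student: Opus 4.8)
The plan is to adapt the Lax-theoretic argument of Carlet--Dubrovin--Zhang \cite{CDZ} to the present matrix-valued setting, splitting the statement into the Hamiltonian representation of the flows and the Lenard--Magri recursion. The first observation I would record is that the Hamiltonian densities are precisely the residue-traces of the operators that generate the flows: since $\L^j=W\Lambda^jW^{-1}$ one has $B_{jk}=C_{kk}\L^j$, so that $h_{j,k}=\mathrm{Tr}\,\res\,B_{jk}$ and likewise $\tilde h_j=\mathrm{Tr}\,\res\,D_j$ with $D_j=\frac{2\L^j}{j!}(\log\L-c_j)$. This identifies each Hamiltonian with the conserved quantity canonically attached to its flow and reduces the whole problem to understanding how the functionals $w\mapsto\int\mathrm{Tr}\,\res\,f(\L)\,\d x$ transform under the two brackets.

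First I would establish the Hamiltonian property with respect to the first bracket. The key computation is the variational derivative of $H_{j,k}$: a variation $\delta\L$ induced by $\delta u,\delta v$ propagates through the trace-residue by the cyclic and Leibniz properties, yielding $\delta H_{j,k}/\delta u_{pq}$ and $\delta H_{j,k}/\delta v_{pq}$ as explicit coefficients of the dressed power $\L^{j}$. I would then insert these gradients into the first Poisson bracket \eqref{toda-pb1uu}--\eqref{toda-pb1vv} and check, component by component, that the resulting expressions reproduce the Lax flows $\epsilon\partial_{t_{jk}}\L=[(B_{jk})_+,\L]$ proved earlier, after matching the $1/\epsilon$ and $\epsilon$ factors; the analogous verification for $\bar H_{j,k}$ and for the logarithmic $\tilde H_j$ uses the logarithmic dressing operators $\log_\pm\L$ introduced in Section 2.

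For the recursion relation I would use the factorization $\L^{n}=\L\cdot\L^{n-1}$ together with the compatibility of the two brackets. Multiplication by $\L$ realizes the recursion operator intertwining $\{\cdot,\cdot\}_1$ and $\{\cdot,\cdot\}_2$, which at once gives $\{\cdot,H_{n-1,k}\}_2=\{\cdot,H_{n,k}\}_1$ and its barred counterpart. For the extended flows the correction term appears because the recursion operator does not act on the nonlocal density $\L^{n}(\log\L-c_n)$ by a plain multiplication by $\L$: the logarithmic dressing contributes an anomalous pure-power remainder proportional to $\L^{n-1}$, which, collected over the $N$ diagonal blocks $E_{kk}$, produces exactly the extra summand $\frac{2}{n!}\sum_{k=1}^N\{\cdot,H_{n,k}\}_1$ recorded in \eqref{recursion}.

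The hard part will be the second step: showing that the variational gradients, once fed into the explicitly written matrix brackets, collapse back to the commutator form of the Lax flows. Noncommutativity forces one to track the index contractions $\delta_{iq},\delta_{jp}$ and the nonlocal shift factors $\Lambda$, $(\Lambda-1)^{-1}$, $\frac{\Lambda}{\Lambda-1}$ occurring in \eqref{toda-pb2uu}--\eqref{toda-pb2vv}, while the logarithmic Hamiltonian $\tilde H_j$ is genuinely nonlocal, so its gradient must be computed through the expansion of $\log_\pm\L$ rather than by naive termwise differentiation. The remaining ingredient is the compatibility of the two Poisson structures, i.e.\ the Jacobi identity for the pencil of brackets; as the text already notes, this is a lengthy but routine verification that I would carry out along the lines of \cite{CDZ}.
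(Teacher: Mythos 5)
Your proposal follows essentially the same route as the paper's proof (itself adapted from \cite{CDZ}): identify the densities as residue--traces of the flow generators, compute the variational gradients via the equivalence $\mathrm{Tr}\,\res\left(\L^{n}\,d\log\L\right)\sim\mathrm{Tr}\,\res\left(\L^{n-1}\,d\L\right)$, match them against the Lax flows through the first bracket, and obtain the recursion from the operator identity $n\,D_{n}=\L\,D_{n-1}-\frac{2}{n!}\L^{n}$, whose pure-power remainder produces the extra $\sum_{k}H_{n,k}$ term. One small correction: the anomalous remainder coming from the shift $c_{n-1}\to c_{n}=c_{n-1}+\frac{1}{n}$ is proportional to $\L^{n}$, not $\L^{n-1}$, which is exactly why it pairs with $H_{n,k}$ under the \emph{first} bracket in \eqref{recursion}.
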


\begin{proof}
Here we will only prove that the flows $\frac{\d}{\d s_{n}}$ are
Hamiltonian systems with respect to the first Poisson bracket and their recursion relation. The other cases can be proved in a more simple way.
In \cite{CDZ}, the following identity has been proved
\begin{equation}\label{dlgl-2} Tr Res\left[\L^n d (S\epsilon \d_x
S^{-1})\right] \sim Tr Res \L^{n-1} d \L,
\end{equation}
which show the validity of the following equivalence relation:
\begin{equation}\label{dlgl}
Tr Res\left(\L^n\, d \log_+ \L\right) \sim Tr Res\left(\L^{n-1} d \L\right).
\end{equation}
Here the equivalent relation $\sim$ is up to a $x$-derivative of
another 1-form.

In a similar way as eq.\eqref{dlgl-2}, we obtain the following equivalence relation
\begin{equation}\label{dlgl-3}
Tr Res\left[\L^n d (\bar S\epsilon\d_x \bar S^{-1})\right]\sim - Tr Res \L^{n-1} d \L,
\end{equation}
i.e.
\begin{equation}\label{dlgl}
Tr Res\left(\L^n\, d \log_- \L\right) \sim  Tr Res\left(\L^{n-1} d \L\right),
\end{equation}
which further leads to
\begin{equation}\label{dlgl}
Tr Res\left(\L^n\, d \log\L\right) \sim   Tr  Res\left(\L^{n-1} d \L\right).
\end{equation}
We suppose
\[
D_{n}=\sum_{k} a_{n;k}\, \Lambda^k,\ \ \L^n=\sum_{j}b_{n;j}\,  \Lambda^j,
\]
then from
\begin{equation}
  \label{edef3}
\epsilon \partial_{s_{n}} \L = [ (D_{n})_+ ,\L ]= [ -(D_{n})_- ,\L ],
\end{equation}
we can derive equations
\[\epsilon\frac{\partial u}{\partial s_{n}}&=&a_{n;1}(x)v(x+\epsilon)-v(x)a_{n;1}(x-\epsilon)+[a_{n,0}(x),u(x)],\\
\epsilon\frac{\partial v}{\partial s_{n}}&=&a_{n;0}(x)v(x)-v(x)a_{n;0}(x-\epsilon) .
\]

The equivalence relation (\ref{dlgl}) now readily follows from the above two equations.
By using (\ref{dlgl}) we obtain
\begin{eqnarray}
&&d \tilde h_{n}=\frac2{n!}\,d\,Tr Res\left[\L^{n}
\left(\log \L-c_{n}\right) \right]
\notag\\
&& \sim \frac2{(n-1)!}\,Tr Res\left[\L^{n-1}
\left(\log \L-c_{n}\right) d \L\right]+ \frac2{n!}\,Tr Res\left[\L^{n-1} d \L\right]\notag\\
&&=\frac2{(n-1)!}\,Tr Res\left[\L^{n-1} \left(\log \L-c_{n-1}\right) d \L\right]\\
&&=Tr \left[a_{n-1;0}(x)du+a_{n-1;1}(x) dv(x+\epsilon)\right].
\end{eqnarray}
Then we have
\begin{eqnarray}
&&d \tilde H_{n}=\frac2{n!}\,d\,\int Tr Res\left[\L^{n}
\left(\log \L-c_{n}\right) \right]
\notag\\
&& \sim \int Tr \left[a_{n-1;0}(x)du+a_{n-1;1}(x) dv(x+\epsilon)\right]\\
&& := \int Tr \left[\frac{\delta \tilde H_{n}}{\delta u}du+\frac{\delta \tilde H_{n}}{\delta v}dv\right].
\end{eqnarray}
It yields the following identities
\begin{equation}\label{dH1-u12}
\frac{\delta \tilde H_{n}}{\delta u_{ij}}=a_{n-1;0}(x)_{ji},\quad \frac{\delta \tilde H_{n}}
{\delta v_{ij}}=a_{n-1;1}(x-\epsilon)_{ji}.
\end{equation}
This agrees with the Lax equation

\[\notag
\frac{\d u_{ij}}{\d s_{n}}&=&\{u_{ij},\tilde H_{n+1}\}_1\\ \notag
&=&\sum_{p,q=1}^N\frac{1}{\epsilon} [\delta_{iq}u_{pj}(x)-\delta_{jp}u_{iq}(x)]\frac{\delta \tilde H_{n+1}}
{\delta u_{pq}}+\sum_{p,q=1}^N\frac{1}{\epsilon} \left[\delta_{iq}\Lambda v_{pj}(x)-\delta_{jp}v_{iq}(x)\right]\frac{\delta \tilde H_{n+1}}
{\delta v_{pq}}\\
&=&{1\over \epsilon}[\sum_{p=1}^Na_{n,0}(x)_{ip}u_{pj}(x)-\sum_{q=1}^Nu_{iq}(x)a_{n,0}(x)_{qj}]\notag\\ \notag
&&+{1\over \epsilon}[\sum_{p=1}^Na_{n;1}(x)_{ip}v(x+\epsilon)_{pj}-\sum_{q=1}^Nv(x)_{iq}a_{n;1}(x-\epsilon)_{qj}]\\
&=&{1\over \epsilon}[a_{n,0}(x),u(x)]_{ij}+{1\over \epsilon}[a_{n;1}(x)v(x+\epsilon)-v(x)a_{n;1}(x-\epsilon)]_{ij},\\
 \notag\  \frac{\d v_{ij}}{\d s_{n}}&=&\{v_{ij},\tilde H_{n+1}\}_1=-\sum_{p,q=1}^N\frac{1}{\epsilon} \left[\delta_{pj} v_{iq}(x)\Lambda^{-1}-\delta_{qi}v_{pj}(x)\right]\frac{\delta \tilde H_{n+1}}
{\delta u_{pq}}\\ \notag
&=&{1\over \epsilon}[\sum_{p=1}^Na_{n;0}(x)_{ip}v(x)_{pj}-\sum_{q=1}^Nv(x)_{iq}a_{n;0}(x-\epsilon)_{qj}]\\
&=&\frac{1}{\epsilon} \left[a_{n;0}(x)v(x)-v(x)a_{n;0}(x-\epsilon)\right]_{ij}.
\]

 From the above identities we see that
the flows $\frac{\d}{\d s_{n}}$ are Hamiltonian systems
of the form (\ref{td-ham}).
The recursion relation (\ref{recursion})
follows from the following trivial identities
\begin{eqnarray}
&&n\, \frac{2}{n!} \L^{n} \left(\log \L-c_{n}\right)=\L\,
\frac{2}{(n-1)!}
\L^{n-1} \left(\log \L-c_{n-1}\right)-2\,\frac1{n!} \L^n\notag\\
&&=\frac{2}{(n-1)!} \L^{n-1} \left(\log \L-c_{n-1}\right)\,
\L-2\,\frac1{n!} \L^n.\notag
\end{eqnarray}
Then we get,
\begin{eqnarray}
&&n a_{n;0}(x)=a_{n-1;-1}(x+\epsilon)+u(x)a_{n-1;0}(x)+v(x)a_{n-1;1}(x-\epsilon)-\frac{2}{n!}b_{n;0}(x)\notag\\
&&=a_{n-1;-1}(x)+a_{n-1;0}(x)u(x)+a_{n-1;1}(x)v(x+\epsilon)-\frac{2}{n!}b_{n;0}(x),\label{an0}
\end{eqnarray}
\begin{eqnarray}
&&n a_{n;1}(x)=a_{n-1;0}(x+\epsilon)+u(x)a_{n-1;1}(x)+v(x)a_{n-1;2}(x-\epsilon)-\frac{2}{n!}b_{n;1}(x)\notag\\
&&=a_{n-1;0}(x)+a_{n-1;1}(x)u(x+\epsilon)+a_{n-1;2}(x)v(x+2\epsilon)-\frac{2}{n!}b_{n;1}(x).
\end{eqnarray}

From eq.\eqref{an0}, we can derive the following equation by which one can represent $a_{n-1;-1}(x)$ in terms of other functions
\[\notag
(\La-1)a_{n-1;-1}(x)&=&a_{n-1;1}(x) v(x+\epsilon)-v(x)  a_{n-1;1}(x-\epsilon)+
[ a_{n-1;0}(x),u(x) ].\]
This further leads to the following recursion relation between two Poisson brackets

\begin{eqnarray*}
&&\{u_{ij},\tilde H_{n}\}_2\\
&=&\sum_{p,q,s=1}^N\frac{1}{\epsilon} \left[\delta_{iq}\Lambda v_{pj}(x)-\delta_{jp}v_{iq}(x)\Lambda^{-1}+\delta_{iq}u_{sj}\La (\La-1)^{-1}u_{ps}-u_{pj}\La (\La-1)^{-1}u_{iq}\right.\\
&&\left.-u_{iq}(\La-1)^{-1}u_{pj}+\delta_{jp}u_{is}(\La-1)^{-1}u_{sq}\right]a_{n-1;0}(x)_{qp}\\
&&+\sum_{p,q,s=1}^N\frac{1}{\epsilon} \left[\delta_{iq}u_{sj}\La (\La-1)^{-1}v_{ps}(x+\epsilon)\La-u_{pj}\La (\La-1)^{-1}v_{iq}\right.\\
&&\left.-u_{iq}(\La-1)^{-1}v_{pj}(x+\epsilon)\La+\delta_{jp}u_{is}(\La-1)^{-1}v_{sq}\right]a_{n-1;1}(x-\epsilon)_{qp}\\
&=&\frac{1}{\epsilon} \{a_{n-1;0}(x+\epsilon) v(x+\epsilon)-v(x)  a_{n-1;0}(x-\epsilon)\notag\\
\notag && +\La (\La-1)^{-1}(a_{n-1;1}(x) v(x+\epsilon)-v(x)  a_{n-1;1}(x-\epsilon)+
[ a_{n-1;0}(x),u(x) ]) u(x)\\
&&-u(x)(\La-1)^{-1}(a_{n-1;1}(x) v(x+\epsilon)-v(x)  a_{n-1;1}(x-\epsilon)+
[ a_{n-1;0}(x),u(x) ])\}_{ij}\\ \notag
&=&\frac{1}{\epsilon} \{a_{n-1;0}(x+\epsilon) v(x+\epsilon)-v(x)  a_{n-1;0}(x-\epsilon)+
u(x) a_{n-1;1}(x) v(x+\epsilon)-v(x) a_{n-1;1}(x-\epsilon)u(x)\notag\\
\notag && + a_{n-1;-1}(x+\epsilon) u(x)-u(x)a_{n-1;-1}(x) +v(x)a_{n-1;1}(x-\epsilon)u(x)-u(x)a_{n-1;1}(x)v(x+\epsilon)\}_{ij}\\ \notag
&=&\frac{n}{\epsilon} \left[a_{n;1}(x) v(x+\epsilon)-v(x)a_{n;1}(x-\epsilon) +[a_{n;0}(x), u(x) ]\right]_{ij}\\
&&+\frac{2}{\epsilon n!}\left[b_{n;1}(x) v(x+\epsilon)- v(x)b_{n;1}(x-\epsilon)+[b_{n;0}(x), u(x) ]\right]_{ij}\\
&=&n
\{u_{ij},\tilde H_{n}\}_1+\frac{2}{n!}\sum_{k=1}^N\{u_{ij},H_{n,k}\}_1.\label{pre-recur}
\end{eqnarray*}
This is exactly the recursion relation eq.\eqref{recursion} for matrix $u$.
 The similar recursion flow on the matrix function $v$ can be similarly derived by the following calculation,
\begin{eqnarray*}
&&\{v_{ij},\tilde H_{n}\}_2\\
&=&\sum_{p,q,s=1}^N\frac{1}{\epsilon} \left[\delta_{iq}v_{sj}\La^2 (\La-1)^{-1}v_{ps}(x)-v_{pj}\La (\La-1)^{-1}v_{iq}\right.\\
&&\left.-v_{iq}(\La-1)^{-1}v_{pj}(x)+\delta_{jp}v_{is}\La^{-1}(\La-1)^{-1}v_{sq}\right]a_{n-1;1}(x-\epsilon)_{qp}\\
&&+\sum_{p,q,s=1}^N\frac{1}{\epsilon} \left[u_{iq}(x)v_{pj}(x) - v_{iq}(x)\La^{-1}u_{pj}(x)+\delta_{iq}v_{sj}(x)\La (\La-1)^{-1}u_{ps}\right.\\
&&\left.-v_{pj}(x)\La (\La-1)^{-1}u_{iq}- v_{iq}\La^{-1} (\La-1)^{-1}u_{pj}+\delta_{pj}v_{is}\La^{-1} (\La-1)^{-1}u_{sq}\right]a_{n-1;0}(x)_{qp}\\
&=&\frac{1}{\epsilon} \{u(x) a_{n-1;0}(x) v(x)-v(x) a_{n-1;0}(x-\epsilon)u(x-\epsilon) \notag\\
\notag && +[\La (\La-1)^{-1}(a_{n-1;1}(x) v(x+\epsilon)-v(x)  a_{n-1;1}(x-\epsilon)+
[ a_{n-1;0}(x),u(x) ])] v(x)\\
&&-v(x)\La^{-1}(\La-1)^{-1}(a_{n-1;1}(x) v(x+\epsilon)-v(x)  a_{n-1;1}(x-\epsilon)+
[ a_{n-1;0}(x),u(x) ])\}_{ij}\\ \notag
&=&\frac{1}{\epsilon} \{u(x) a_{n-1;0}(x) v(x)-v(x) a_{n-1;0}(x-\epsilon)u(x-\epsilon) + a_{n-1;-1}(x+\epsilon) v(x)-v(x)a_{n-1;-1}(x-\epsilon) \}_{ij}\\ \notag
&=&\frac{n}{\epsilon} \left[a_{n;0}(x) v(x)-v(x)a_{n;0}(x-\epsilon)\right]_{ij}+\frac{2}{\epsilon n!}\left[b_{n;0}(x) v(x)- v(x)b_{n;0}(x-\epsilon)\right]_{ij}\\
&=&n
\{v_{ij},\tilde H_{n}\}_1+\frac{2}{n!}\sum_{k=1}^N\{v_{ij},H_{n,k}\}_1.\label{pre-recur}
\end{eqnarray*}
The theorem is proved till now.

\end{proof}

For readers' convenience, now we will write down the first several Hamiltonian densities explicitly as follows
\[ h_{0,k}&=&Tr Res\,  C_{kk}=Tr E_{kk}=1,\\
 h_{1,k}&=&Tr Res \, C_{kk}\L=Tr [(1-\La)^{-1}uE_{kk}-E_{kk}\frac{\La}{1-\La}u]=u_{kk},\\
 \ \bar h_{0,k}&=&Tr Res\,\bar C_{kk} =Tr\, \tilde \omega_0E_{kk} \tilde \omega_0^{-1}  \,  ,\\
 \ \bar h_{1,k}&=&Tr Res \, \bar C_{kk} \L=Tr\, (\tilde \omega_1E_{kk}\tilde \omega_0^{-1}-\tilde \omega_0E_{kk}\tilde \omega_0^{-1}(x-\epsilon)\tilde \omega_1(x-\epsilon)\tilde \omega_0^{-1}),\\
  \tilde h_{0}&=&2\,Tr Res \log \L=-Tr\, \tilde \omega_{0x}\tilde \omega_0^{-1},\\
  \tilde h_{1}&=&\-\,Tr\, Res\left[ \L
(\log \L-1)\right]\\
&=&Tr\,[u-(1-\La)^{-1}u_x-u\tilde \omega_{0x}\tilde \omega_0^{-1}-v\tilde \omega_{0x}(x-\epsilon)\tilde \omega_0^{-1}-\tilde \omega_{0x}(x-\epsilon)\tilde \omega_0^{-1}(x-\epsilon)\tilde \omega_1(x-\epsilon)\tilde \omega_0^{-1}],\notag
\]
with
\[\tilde \omega_0=v\tilde \omega_0(x-\epsilon),\ \tilde \omega_1=u\tilde \omega_0+v\tilde \omega_1(x-\epsilon).\]
When $N=1$, the above conserved densities will be the ones of the extended Toda  hierarchy in \cite{CDZ}.
Similarly as \cite{CDZ}, the tau symmetry of the METH can be proved in the  following theorem.
\begin{theorem}\label{tausymmetry}
The Hamiltonian densities  of the EMTH have the following tau-symmetry property:
\begin{equation}
\frac{\d h_{\alpha,m}}{\d t_{j,k}}=\frac{\d
h_{j,k}}{\d t_{\alpha,m}},\quad \frac{\d \bar h_{\alpha,m}}{\d t_{j,k}}=\frac{\d
 h_{j,k}}{\d\bar t_{\alpha,m}},
\end{equation}
\begin{equation}
\frac{\d h_{\alpha,m}}{\d \bar t_{j,k}}=\frac{\d
\bar h_{j,k}}{\d  t_{\alpha,m}},\quad \frac{\d \bar h_{\alpha,m}}{\d \bar t_{j,k}}=\frac{\d
\bar h_{j,k}}{\d \bar t_{\alpha,m}},
\end{equation}
\begin{equation}\label{tildeht}
\frac{\d \tilde h_{m}}{\d  t_{j,k}}=\frac{\d
h_{j,k}}{\d s_{m}},\quad \frac{\d \tilde h_{m}}{\d \bar t_{j,k}}=\frac{\d
\bar h_{j,k}}{\d s_{m}}.
\end{equation}
\end{theorem}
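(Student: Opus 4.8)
The plan is to follow the strategy of \cite{CDZ} for the extended Toda hierarchy, reducing every identity in Theorem \ref{tausymmetry} to a statement about trace-residues of commutators. First I would rewrite the three families of Hamiltonian densities as plain residues of the operators of Definition 1. Since $E_{kk}$ commutes with $\Lambda$ and $C_{kk}=WE_{kk}W^{-1}$ commutes with $\mathcal{L}=W\Lambda W^{-1}$, one has $B_{jk}=C_{kk}\mathcal{L}^{j}$, $\bar B_{jk}=\bar C_{kk}\mathcal{L}^{j}$ and $D_{j}=\tfrac{2\mathcal{L}^{j}}{j!}(\log\mathcal{L}-c_{j})$, so that
\begin{equation}
h_{j,k}=\operatorname{Tr}\operatorname{Res} B_{jk},\qquad
\bar h_{j,k}=\operatorname{Tr}\operatorname{Res}\bar B_{jk},\qquad
\tilde h_{j}=\operatorname{Tr}\operatorname{Res} D_{j}.
\end{equation}
Each flow then acts on these operators by a commutator. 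Using the Sato and Lax equations one gets $\epsilon\partial_{t_{jk}}B_{\alpha m}=[(B_{jk})_+,B_{\alpha m}]$, $\epsilon\partial_{\bar t_{jk}}B_{\alpha m}=[-(\bar B_{jk})_-,B_{\alpha m}]$, $\epsilon\partial_{s_m}B_{jk}=[(D_m)_+,B_{jk}]$, the analogous relations for the barred operators, and $\epsilon\partial_{t_{jk}}D_m=[(B_{jk})_+,D_m]$, $\epsilon\partial_{\bar t_{jk}}D_m=[-(\bar B_{jk})_-,D_m]$ via eq.\eqref{logltjk}. Applying $\operatorname{Tr}\operatorname{Res}$ turns each side of a proposed identity into a trace-residue of a commutator, and the whole argument rests on one elementary lemma: for difference operators $X,Y$ the $\Lambda^{0}$-coefficient of $[X_+,Y_+]$ (resp.\ $[X_-,Y_-]$) is a matrix commutator, whence $\operatorname{Tr}\operatorname{Res}[X_+,Y_+]=\operatorname{Tr}\operatorname{Res}[X_-,Y_-]=0$ identically.

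For the purely continuous blocks this finishes the job cleanly. For the $t$--$t$ identity I would split $B_{jk}=(B_{jk})_++(B_{jk})_-$ and likewise $B_{\alpha m}$, observe that $[B_{jk},B_{\alpha m}]=0$ because the $C_{kk}$ are mutually orthogonal idempotent-type operators commuting with $\mathcal{L}$, and combine this with the lemma to get $\operatorname{Tr}\operatorname{Res}[(B_{jk})_+,B_{\alpha m}]=\operatorname{Tr}\operatorname{Res}[(B_{\alpha m})_+,B_{jk}]$ exactly; the $\bar t$--$\bar t$ case is identical with $\bar B$. The mixed $t$--$\bar t$ identity does not even require commutativity: expanding $\operatorname{Tr}\operatorname{Res}[-(\bar B_{jk})_-,B_{\alpha m}]$ and $\operatorname{Tr}\operatorname{Res}[(B_{\alpha m})_+,\bar B_{jk}]$ and discarding the same-sign commutators by the lemma leaves the two cross terms $\operatorname{Tr}\operatorname{Res}[(\bar B_{jk})_-,(B_{\alpha m})_+]$ and $\operatorname{Tr}\operatorname{Res}[(B_{\alpha m})_+,(\bar B_{jk})_-]$, which cancel by antisymmetry of the commutator.

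The genuinely delicate part, which I expect to be the main obstacle, is the logarithmic block eq.\eqref{tildeht} relating $s_m$ to $t_{jk}$ and $\bar t_{jk}$. Here the relevant commutator $[B_{jk},D_m]$ does \emph{not} vanish: $C_{kk}=S E_{kk}S^{-1}$ commutes with $\log_+\mathcal{L}=S\epsilon\partial_x S^{-1}$ (because $E_{kk}$ is constant, so $[E_{kk},\partial_x]=0$) but not with $\log_-\mathcal{L}$, so the naive subtraction of $\partial_{t_{jk}}\tilde h_m$ and $\partial_{s_m}h_{j,k}$ leaves the residue $\operatorname{Tr}\operatorname{Res}\!\big(\tfrac{\mathcal{L}^{j+m}}{m!}[C_{kk},\log_-\mathcal{L}]\big)$. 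The way around it is exactly the equivalence machinery already set up in eqs.\eqref{dlgl-2}--\eqref{dlgl-3}: a trace-residue of a commutator is an $x$-difference of a local density, and $\operatorname{Tr}\operatorname{Res}(\mathcal{L}^{n}\,d\log_{\pm}\mathcal{L})\sim\operatorname{Tr}\operatorname{Res}(\mathcal{L}^{n-1}d\mathcal{L})$. Applying these to both $\partial_{t_{jk}}\tilde h_m$ and $\partial_{s_m}h_{j,k}$ reduces each one to the same non-logarithmic residue built from the coefficients $a_{m-1;0}$ and $a_{m-1;1}$ of $D_{m-1}$, which establishes the identity; note it becomes an exact pointwise equality when $N=1$, where $C_{kk}\equiv\mathbb{I}_N$ and the obstructing commutator disappears. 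The $s_m$--$\bar t_{jk}$ identity I would treat by the mirror computation, with $\bar C_{kk}$ commuting with $\log_-\mathcal{L}$ and the roles of eq.\eqref{dlgl-2} and eq.\eqref{dlgl-3} interchanged.
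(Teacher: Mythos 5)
Your overall mechanism --- write each density as $\operatorname{Tr}\operatorname{Res}$ of $B_{jk}=C_{kk}\L^{j}$, $\bar B_{jk}=\bar C_{kk}\L^{j}$ or $D_{j}$, differentiate by the Sato/Lax equations to get a commutator, and kill the same-sign pieces with the lemma $\operatorname{Tr}\operatorname{Res}[X_+,Y_+]=\operatorname{Tr}\operatorname{Res}[X_-,Y_-]=0$ --- is exactly the mechanism of the paper's proof, and your treatment of the four non-logarithmic identities (the observation $[B_{jk},B_{\alpha m}]=0$ for the $t$--$t$ block, the cross-term comparison for the mixed block) is a correct, more complete version of what the paper dismisses with ``other cases can be proved in a similar way.''

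Where you genuinely diverge is on the one identity the paper does prove, the first equation of \eqref{tildeht}. The paper stays at the level of densities throughout: it writes
\begin{align*}
\frac{\d \tilde h_{m}}{\d t_{n,k}} &=\frac{2}{m!}\operatorname{Tr}\operatorname{Res}\bigl[-(C_{kk}\L^{n})_-,\ \L^m (\log \L-c_m)\bigr]
=\frac{2}{m!}\operatorname{Tr}\operatorname{Res}\bigl[(\L^m (\log \L-c_m))_+,(C_{kk}\L^{n})_-\bigr]\\
&=\frac{2}{m!}\operatorname{Tr}\operatorname{Res}\bigl[(\L^m (\log \L-c_m))_+,\,C_{kk}\L^{n}\bigr]=\frac{\d h_{n,k}}{\d s_{m}},
\end{align*}
i.e.\ it treats $D_m$ as a single difference operator and merely swaps projections; the equivalence relation $\sim$ never enters. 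You instead propose to dispose of the obstruction $[C_{kk},\log_-\L]\neq 0$ via the machinery of eqs.~\eqref{dlgl-2}--\eqref{dlgl-3}, which holds only ``up to an $x$-derivative of another 1-form.'' That route proves $\d\tilde H_m/\d t_{j,k}=\d H_{j,k}/\d s_m$ for the integrated Hamiltonians, but not the pointwise equality of \emph{densities} asserted in \eqref{tildeht} --- and it is the density identity that is needed for the definition of $\tau$ immediately after the theorem. So at this point your argument is strictly weaker than the claim. To your credit, you have located precisely the step the paper glosses over: its first equality replaces $[(B_{nk})_+,D_m]$ by $[-(B_{nk})_-,D_m]$, which requires $\operatorname{Tr}\operatorname{Res}[B_{nk},D_m]=0$, and since $[B_{nk},D_m]=\frac{1}{m!}\L^{m}[C_{kk},\log_-\L]\L^{n}$ this is exactly the term you isolate; the paper gives no argument that its trace-residue vanishes as a density rather than merely as a total difference. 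To match the theorem as stated you should either prove that vanishing directly or reproduce the paper's projection-swap chain and acknowledge the same caveat, rather than retreating to the $\sim$-equivalence.
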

\begin{proof} Let us prove the theorem for the  first equation in eqs.\eqref{tildeht},
other cases can be proved in a similar way
\[
&&\frac{\d \tilde h_{m}}{\d t_{n,k}} =\frac2{m!\,}\,Tr Res[-(C_{kk}\L^{n})_-, \L^m (\log \L-c_m)]\notag\\
&&=\frac2{m!\,}\,Tr Res[(\L^m (\log \L-c_m))_+,(C_{kk}\L^{n})_-]\notag\\
&& =\frac2{m!\,}\,Tr Res[(\L^m (\log \L
-c_m))_+,C_{kk}\L^{n}]=\frac{\d h_{n,k}}{\d s_{m}}.
\]
The theorem is proved.
\end{proof}

 This property justifies the following definition of the
tau function for the EMTH:

\begin{definition} The  $tau$ function $\tau$ of the EMTH can be defined by
the following expressions in terms of the densities of the Hamiltonians:
\begin{equation}
h_{j,n}=\epsilon (\Lambda-1)\frac{\d\log\tau}{\d t_{j,n}},
\end{equation}
\begin{equation}
\bar h_{j,n}=\epsilon (\Lambda-1)\frac{\d\log\tau}{\d\bar t_{j,n}},
\end{equation}
\begin{equation}
h_{j}=\epsilon (\Lambda-1)\frac{\d\log\tau}{\d s_{j}}.
\end{equation}
\end{definition}

With above two different definitions on tau functions of this hierarchy, some mysterious connections between these two tau functions become an open interesting subject. One comes from the wave function and another comes from Hamiltonians.  This is not easy and will be included in our future work.

\section{Conclusions and Discussions}
In this paper, we constructed a new hierarchy called the EMTH and further extended the Sato theory to
this hierarchy including Sato equations, matrix wave operators, Hirota
quadratic equations, the existence of
the tau function.  Similarly as extended Toda hierarchy and extended bigraded Toda hierarchy in Gromov-Witten theory of $CP^1$,  this
hierarchy deserves further studying and exploring because of its
potential applications in topological quantum fields  and
Gromov-Witten theory. Because the matrix hierarchy is one special important noncommutative integrable system, what is applications of the EMTH in noncommutative geometry  becomes an interesting subject.

{\bf {Acknowledgements:}}
  Chuanzhong Li is supported by the National Natural Science Foundation of China under Grant No. 11201251, Zhejiang Provincial Natural Science Foundation of China under Grant No. LY12A01007, the Natural Science Foundation of Ningbo under Grant No. 2013A610105. Jingsong He is supported by the National Natural Science Foundation of China under Grant No. 11271210, K.C.Wong Magna Fund in
Ningbo University. Chuanzhong Li would like to thank Professor Todor E Milanov for his valuable discussion.

\vskip20pt

\end{document}